\numberwithin{equation}{section}
\newtheorem{theorem}{Theorem}[section]
\newtheorem{lemma}[theorem]{Lemma}
\newtheorem{assumptions}[theorem]{List}
\newtheorem{remark}[theorem]{Remarks}
\newtheorem{proposition}[theorem]{Proposition}
\newtheorem{definition}[theorem]{Definition}
\newtheorem{example}[theorem]{Example}
\newcommand{\R}{{\mathbb R}}
\newcommand{\mcJ}{{\mathcal{J}}}
\newcommand{\tuP}{{\Psi}}
\newcommand{\Z}{{\mathbb Z}}
\title{Bounds for the state-modulated resolvent of a linear Boltzmann generator}
\author{\textbf{Jeremy Clark}\footnote{jclark@mappi.helsinki.fi} \quad   and \quad \textbf{Lo\"ic Dubois}\footnote{ldubois@mappi.helsinki.fi} \\ University of Helsinki, Department of Mathematics\\  Helsinki 00014, Finland }
\begin{document}
\maketitle

\begin{abstract}

We study a generalized resolvent for the generator of a Markovian semigroup.  The Markovian generator appears in a linear Boltzmann equation modeling a one-dimensional test particle in a periodic potential and colliding elastically with particles from an ideal background gas.  We obtain bounds for the state-modulated resolvent which are relevant in the regime where the mass ratio between the test particle and a particle from the gas is large.  These bounds relate to the typical amount of time that the particle spends in different regions of phase space before arriving to a region around the origin.  
    
\end{abstract}

\section{Introduction}

\subsection{Model and result}
Denote  $\Sigma=\mathbb{T}\times \R$, where  $\mathbb{T}$ is the one-dimensional torus identified with the unit interval $[0,1)$.  Let  $B(\Sigma)$ be the Banach space of all bounded measurable functions on $\Sigma$ with  the supremum norm $\|\cdot\|_{\infty}$.  Let $\mathcal{L}_{\lambda}$  be the backward Kolmogorov generator which acts on  
a dense domain $D\subset B(\Sigma)$  such that for $ \tuP\in D$,
\begin{multline}\label{TheModel}
(\mathcal{L}_{\lambda}\tuP)(x,\,p)=p\frac{\partial}{\partial x}\tuP(x,p)-\frac{dV}{dx}\big(x\big)\frac{\partial}{\partial p}\tuP(x,p)     +\int_{\R}dp^{\prime}\mcJ_{\lambda}(p,p^{\prime})\big(\tuP(x,p^{\prime})-\tuP(x,p)     \big),   
\end{multline}
 where the function $V:\mathbb{T}\rightarrow \R^{+}$ is continuously differentiable, and the kernel $\mcJ_{\lambda}(p,p^{\prime})$ takes the form 
\begin{align}\label{JumpRates}
  \mcJ_{\lambda}(p,p^{\prime})=  (1+\lambda)\big|p-p^{\prime}\big|e^{-\frac{1}{2}\big(\frac{1-\lambda}{2}p -\frac{1+\lambda}{2}p^{\prime}    \big)^{2} }     .   
  \end{align}  
The operator  $\mathcal{L}_{\lambda}$ generates a transition semigroup
 $\Phi_{t,\lambda}:B(\Sigma)\rightarrow B(\Sigma)$.   The Markovian dynamics associated with the semigroup $\Phi_{t,\lambda}$ models a test particle of mass $\lambda^{-1}$ in dimension one which feels an external, spatially periodic force $\frac{dV}{dx}(x)$ and receives elastic collisions from a gas reservoir of particles having mass one. The spatial degree of freedom for the test particle has been contracted to the torus.   The kernel~(\ref{JumpRates}) matches equation (8.118) from~\cite{Spohn} when the test particle has mass $\lambda^{-1}$, a single particle from the gas has mass one, the temperature of the gas is one, and the spatial density for the gas is $2(2\pi)^{\frac{1}{2}}  $.

Consider the generalized resolvent $U^{(\lambda)}_{h}$ which operates on elements in $B(\Sigma)$ and is given formally by
\begin{align}\label{ResolveIt} 
 U^{(\lambda)}_{h}=(M_{h}-\mathcal{L}_{\lambda})^{-1},     
 \end{align}
where  $M_{h}:B(\Sigma)\rightarrow B(\Sigma)$ acts as multiplication by a bounded measurable function $h:\Sigma\rightarrow\R^{+}$.   When  $h$ is a constant function, then  $U^{(\lambda)}_{h}$ is a standard resolvent.   We will refer to  $U^{(\lambda)}_{h}$ as the $h$-\textit{modulated resolvent} or,  as in~\cite{Meyn},  the \textit{state-modulated resolvent} for non-specific $h$.  The operators  $U^{(\lambda)}_{h}$ were introduced in~\cite[Sec. 7]{Neveu}.  For $s\in \Sigma$ and $f\in B(\Sigma)$, we will use the kernel notation $U^{(\lambda)}_{h}(s,f)$ to denote the value  $\big(U^{(\lambda)}_{h}\,f\big)(s)$.  We take the following expression for our definition of $U^{(\lambda)}_{h}$:
\begin{align*}
\hspace{2cm} U^{(\lambda)}_{h}\big(s,f\big)=\mathbb{E}_{s}^{(\lambda)}\Big[\int_{0}^{\infty}dt\, f(S_{t})\,e^{-\int_{0}^{t}dr\,h(S_{r} )}    \Big],\hspace{1cm} s\in \Sigma,
\end{align*}
where $S_{t}\in \Sigma $ is the Markov process associated with the backwards generator $\mathcal{L}_{\lambda}$.   The operator $U^{(\lambda)}_{h}$ satisfies~(\ref{ResolveIt}) on an appropriate class of functions $f\in B(\Sigma)$.  For a measurable set $A\subset \Sigma $, the value  $U^{(\lambda)}_{h}(s,1_{A})$   corresponds to the expected amount of time, when starting from $s$, that the test particle will spend in $A \subset \Sigma $ before the expiration of an exponential random time whose rate depends on the trajectory of the particle through the function $h$.  This interpretation becomes clearer by seeing other representations of  $U^{(\lambda)}_{h}$.  The following theorem is used in~\cite{Previous} and is the main result of this article.    

\begin{theorem}\label{ThmMain}\text{  }
  Let  $h:\Sigma\rightarrow \R^{+}$ be a bounded measurable function with $h\neq 0$.  There is a $c>0$ such that for all bounded measurable functions $f:\Sigma\rightarrow \R^{+}$,  $\lambda< 1$, and $s\in \Sigma$
$$U^{(\lambda)}_{h}\big(s,f\big)\leq c\Big(\sup_{s'\in \Sigma} A^{(\lambda)}(s,s')f(s')+\int_{\Sigma}ds'\, B^{(\lambda)}(s,s')f(s')\Big),  $$
 where $ A^{(\lambda)}(s,s')$ and $  B^{(\lambda)}(s,s')$ are defined as
\begin{align*}
A^{(\lambda)}(x,p,\,x',p')&= 1+\textup{min}\big(|p|, \lambda^{-1} \big)\chi\big( |p'|\geq \lambda^{-1}\big),\\    
B^{(\lambda)}(x,p,\,x',p')&= \big(1+\textup{min}\big(|p|, |p'| \big) \big) \chi( |p|\leq \lambda^{-1}).
 \end{align*} 
  
\end{theorem}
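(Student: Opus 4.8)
The plan is to run a probabilistic argument on the Feynman--Kac representation $U^{(\lambda)}_{h}(s,f)=\E_s[\int_0^\infty f(S_t)e^{-\int_0^t h(S_r)dr}\,dt]$, decomposing the trajectory along its collision times $0=T_0<T_1<T_2<\cdots$. First I would reduce the discount: since $h$ is bounded, measurable and $h\neq0$, there are $\epsilon>0$ and a set $G$ of positive Lebesgue measure inside some box $\mathbb{T}\times[-N,N]$ with $h\geq\epsilon\chi(\,\cdot\in G)$, and by monotonicity of $e^{-\int_0^t h}$ in $h$ it is enough to treat $h=\epsilon\chi(\,\cdot\in G)$ (after possibly shrinking $G$ so that from any point of a fixed neighbourhood of $G$ the process spends an amount of time $\geq\delta_0$ in $G$ with probability $\geq\delta_0$, uniformly in $\lambda<1$). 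All constants below may depend on $\epsilon$, $G$ and $V$ but must be uniform in $\lambda<1$. Using $e^{-\int_0^t h}\leq e^{-\int_0^{T_n}h}=:D_n$ for $t\in[T_n,T_{n+1})$ one gets $U^{(\lambda)}_{h}(s,f)\leq\E_s[\sum_{n\geq0}D_n\int_{T_n}^{T_{n+1}}f(S_t)\,dt]$, and since the inter-collision motion is the deterministic Hamiltonian flow of $\tfrac12 p^2+V(x)$, the inner conditional expectation $\E[\int_{T_n}^{T_{n+1}}f(S_t)\,dt\mid\mathcal F_{T_n}]$ is an explicit integral of $f$ along that flow weighted by the survival factor of the jump clock.

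The quantitative inputs are three lemmas, each uniform in $\lambda<1$. \emph{(i) Jump rate:} $R_\lambda(p):=\int_\R\mcJ_\lambda(p,p')\,dp'\asymp 1+\lambda|p|$, which follows from the substitution $w=\tfrac{1-\lambda}{2}p-\tfrac{1+\lambda}{2}p'$ rewriting the integral as $\tfrac{4}{1+\lambda}\int_\R|\lambda p+w|e^{-w^2/2}\,dw$. \emph{(ii) Energy control:} on $[T_n,T_{n+1})$ the quantity $\tfrac12 p_t^2+V(x_t)$ is constant, so $\big|\,|p_t|-|p_{T_n}|\,\big|$ is bounded by a constant depending only on $\|V\|_\infty$; combined with (i) this gives $\E[T_{n+1}-T_n\mid\mathcal F_{T_n}]\asymp(1+\lambda|p_{T_n}|)^{-1}$ and identifies $|p|\asymp\lambda^{-1}$ as the threshold separating the ``$O(1)$ collision rate'' regime from the ``rate $\asymp\lambda|p|$'' regime. \emph{(iii) Momentum chain:} writing a collision as $p\mapsto p'=\tfrac{1-\lambda}{1+\lambda}p-\tfrac{2}{1+\lambda}w$ with $w$ of density $\propto|\lambda p+w|e^{-w^2/2}$, I would derive $\E[|p'|\mid p]\leq\tfrac{1-\lambda}{1+\lambda}|p|+C$ and $\E[|p'|^2\mid p]\leq(1-c\lambda)|p|^2+C$, so the post-collision momentum contracts geometrically with rate of order $\lambda$, and truncations of $n\mapsto|p_{T_n}|$ are approximate supermartingales with controlled error.

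For the $A^{(\lambda)}$ term one works in the high-momentum regime. Let $\sigma$ be the first collision index at which the energy falls below $\tfrac12\lambda^{-2}+\|V\|_\infty$, so that $|p_t|\geq\lambda^{-1}$ on all of $[0,T_\sigma)$. Starting from $|p|\geq\lambda^{-1}$, lemma (iii) gives $|p_{T_n}|\lesssim|p|(1-c\lambda)^n$ up to $O(1)$ fluctuations, whence by (i)--(ii) $\E_s[T_\sigma]=\E_s[\sum_{n<\sigma}(T_{n+1}-T_n)]\asymp\E_s[\sum_{n<\sigma}(\lambda|p_{T_n}|)^{-1}]$, a geometric-type sum which is $O(\lambda^{-1})$ \emph{independently of} $|p|$. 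Hence $\E_s[\sum_{n<\sigma}D_n\int_{T_n}^{T_{n+1}}f]\leq\big(\sup_{x',|p'|\geq\lambda^{-1}}f(x',p')\big)\,\E_s[T_\sigma]\leq C\lambda^{-1}\sup_{|p'|\geq\lambda^{-1}}f$. If instead $|p|<\lambda^{-1}$, the same computation applied to an excursion above the threshold, weighted by its probability — which the supermartingale of (iii) and optional stopping bound by $C|p|\big/\lambda^{-1}$ — replaces $\lambda^{-1}$ by $|p|$; together with the trivial bound $\sup_{|p'|<\lambda^{-1}}f$ for the part of $f$ carried below the threshold (where $D_n\leq1$ suffices), this produces the weight $1+\min(|p|,\lambda^{-1})\chi(|p'|\geq\lambda^{-1})$.

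The substantial part is the $B^{(\lambda)}$ term, i.e.\ bounding the contribution of $f$ restricted to $\{|p'|<\lambda^{-1}\}$. When $|p|>\lambda^{-1}$ that part of $f$ is not seen before $T_\sigma$, so by the strong Markov property one reduces to a bulk start $|p|\leq\lambda^{-1}$, for which one must show $\E_s[\sum_{n\geq0}D_n\int_{T_n}^{T_{n+1}}\chi(|p_t|<\lambda^{-1})f(S_t)\,dt]\leq C\int_\Sigma(1+\min(|p|,|p'|))f(x',p')\,dx'dp'$. Here the discount is indispensable: on the sublevel set $\{\tfrac12p^2+V(x)\leq\tfrac12\lambda^{-2}+\|V\|_\infty\}$ the process is positive recurrent with return times to the neighbourhood of $G$ controlled uniformly in $\lambda<1$, so $D_n$ is contracted by a fixed amount in expectation between successive visits to $G$ and only $O(1)$ such cycles contribute; it then suffices to bound, per cycle and from any bulk start, $\E[\int\chi(|p_t|<\lambda^{-1})f(S_t)\,dt]$ by $C\int(1+\min(|p|,|p'|))f\,ds'$. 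For this one estimates the expected occupation of a neighbourhood of $(x',p')$ during one cycle by (traversal speed)$^{-1}\asymp(1+|p'|)^{-1}$ times the expected number of collisions spent near momentum level $|p'|$; reaching a level $|p'|>|p|$ from level $|p|$ costs a factor $\lesssim|p|/|p'|$ by the supermartingale of (iii), while levels $|p'|\leq|p|$ are visited $O(1)$ times per cycle, so the product integrates against $dx'dp'$ to exactly $1+\min(|p|,|p'|)$, the residual $\sup_{|p'|<\lambda^{-1}}f$ being absorbed as before. I expect the genuine difficulty to be precisely this bulk estimate: obtaining the sharp $\min(|p|,|p'|)$ dependence uniformly in $\lambda$ requires simultaneously controlling the probability, the level-occupation, and the $V$-trapping of in-bulk high-momentum excursions, and then summing cleanly over the $O(1)$ discounting cycles; the descent estimate $\E_s[T_\sigma]\leq C\lambda^{-1}$ and the rate asymptotics $R_\lambda(p)\asymp1+\lambda|p|$ are the other load-bearing facts but are comparatively routine. (An alternative is to construct, for each $f$, a nonnegative supersolution $W$ of $(M_h-\mathcal L_\lambda)W\geq f$ majorised by $c(\sup_{s'}A^{(\lambda)}(s,s')f(s')+\int B^{(\lambda)}(s,s')f(s')\,ds')$ and invoke the maximum principle; the verification reduces to essentially the same estimates.)
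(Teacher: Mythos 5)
Your proposal takes a genuinely different route from the paper's. You work directly on the Feynman--Kac representation and decompose the trajectory at collision times, extracting the $A^{(\lambda)}$ and $B^{(\lambda)}$ terms from a supermartingale descent estimate and a Green-function/occupation estimate for the momentum chain in the bulk. The paper instead constructs a homogenized \emph{Freidlin--Wentzell process} on the space $\Gamma_V$ of connected level curves of $H=\tfrac12 p^2+V(x)$ (Section~\ref{SecFreidlin}), proves the analogue of the theorem for that process (Lemma~\ref{Freidlin}, via the integral equation and the supermartingale in Proposition~\ref{NewStuff}), then shows that $\widehat{U}^{(\lambda)}f$ solves the same integral equation as $\overline U^{(\lambda)}\widehat f$ up to a controllable error $\mathbf{E}_\lambda$ (Lemma~\ref{ErrorEquation}), and finally closes the argument by a Gronwall-type recursion (Lemma~\ref{CryBabyZero}, Section~\ref{SecTheProof}). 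The tradeoff is that your approach avoids introducing the homogenized process and the error bookkeeping, while the paper's homogenization isolates the effect of the potential $V$ in a single, quantified error term and turns the hard estimate into one for a genuinely one-dimensional jump process. Your high-momentum $A^{(\lambda)}$ step is close in spirit to the paper's Lemma~\ref{LemHorseShoe} and Part~(3) of Lemma~\ref{PreFreidlin}.

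The gap is precisely where you say you expect it: the bulk $B^{(\lambda)}$ estimate. Two concrete issues. First, the paper repeatedly flags that the main technical difficulty is the perturbation due to $V(x)$ — the inter-collision Hamiltonian flow wiggles $|p_t|$, it creates trapping at moderate energies, and the jump kernel $\mathcal{J}_\lambda(p,p')$ must be averaged over the level curve. Your sketch treats the momentum chain as an $x$-independent random walk and waves at ``$V$-trapping'' without any mechanism for controlling it. This is exactly what the paper's $\mathbf{E}_\lambda$ term and the estimates (I)--(II) inside Lemma~\ref{ErrorEquation} handle, and it is not an optional detail: the $A_\gamma$-restricted kernel bound $\mathcal{M}_\lambda$ with its $\rho^{-2}(1+(\rho'-\rho)^2+\lambda^2\rho^2)$ weight is what makes the Gronwall recursion converge. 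Second, the occupation bookkeeping is off: you write the per-cycle occupation near $(x',p')$ as $(1+|p'|)^{-1}$ times the expected number of collisions near level $|p'|$, and also claim levels $|p'|\le|p|$ are visited $O(1)$ times per cycle. For a (near-)unbiased momentum walk killed near $0$, the expected number of collisions near level $|p'|$ starting from $|p|$ is $\asymp\min(|p|,|p'|)$, not $O(1)$, and the time spent near a given $(x',p')$ per collision at that level is $\asymp(1+\lambda|p'|)^{-1}\asymp 1$ on the bulk scale (the $|p'|^{-1}$ traversal speed is compensated by $\asymp|p'|$ revolutions per inter-collision interval), not $(1+|p'|)^{-1}$. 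These two errors cancel to produce the advertised $\min(|p|,|p'|)$, which is why the conclusion looks right, but the reasoning as written is not a proof; it is the assertion of the answer. To make the route rigorous you would need either the paper's homogenization-and-error-equation device, or a genuinely direct Green-function estimate for the $(x,p)$-process that absorbs the $V$-dependent wiggle and is then summed over the $O(1)$ discounting cycles — and that is essentially the content of Lemmas~\ref{Freidlin}, \ref{ErrorEquation} and~\ref{CryBabyZero} in disguise.
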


\subsection{Discussion}

  The  operator $U^{(\lambda)}_{h}$ arises in the literature on Harris recurrence for Markov processes~\cite{Neveu,Meyn,Huis}, on limit theorems for null-recurrent Markov processes~\cite{TouatiUnpub,Hopfner,Louk}, and on Nummelin splitting for Markov processes~\cite{Loch}.   We discuss some alternative representations for $U^{(\lambda)}_{h}f $ in Section~\ref{SecBasic}.

The dynamics described by~(\ref{TheModel}) includes a deterministic part driven by the Hamiltonian $H=\frac{1}{2}P^{2}+V(x)$ and a noisy part determined by the jump kernel $\mathcal{J}_{\lambda}$.  The potential $V(x)$  does not play a role in the statement of Theorem~\ref{ThmMain}.   This is because the inequality in Theorem~\ref{ThmMain} is mainly concerned with bounding $U_{h}^{(\lambda)}(x,p)$ when $|p|\gg 1$, and the influence of the force $\frac{dV}{dx}(X_{t})$ is averaged-out when  $|P_{t}|\gg 1+ \sqrt{\sup_{x} V(x)} $ as the particle revolves with high frequency around the torus.  In whichever direction the test particle is traveling, collisions with the gas will, in an average sense, diminish the particle's movement in that direction.   For $\lambda\ll 1$, this frictional effect takes on different characteristics depending on the scale of the momentum.   The following list  characterizes the influence of collisions  at different momentum scales relative to $\lambda^{-1}$.

\begin{assumptions}\label{RegimeList} \text{  }

\begin{enumerate}

\item \textbf{Contractive regime}: When $|p|\gg \lambda^{-1}$, the momentum undergoes a super-exponential contraction in which the collisions occur with exponential rate $\approx \lambda|p|$, and the result of a  collision  contracts a momentum $p$ to a value in the vicinity of $\frac{1-\lambda}{1+\lambda}p$.

\item  \textbf{Drift regime}: When $|p|$ is on the order $\lambda^{-1}$, then the collisions occur on the order of one per unit time and the resulting momentum due to a collision has a bias in the direction of momentum zero.  The bias has the same order as the standard deviation of the momentum jump.

\item  \textbf{Random walk regime}: When $|p|\ll \lambda^{-1}$, then the collisions generate a nearly unbiased random walk in momentum with L\'evy density
\begin{align}\label{JumpRatesUnbiased}
  j(q)=  |q|e^{-\frac{1}{8}q^{2} }     .   
  \end{align}  
Over time periods of length $\gg 1$, then the drift towards momentum zero is visible. 

\end{enumerate}

\end{assumptions}

Besides the multi-scale behavior for the jump rates~(\ref{RegimeList}) emerging for $\lambda\ll 1$, 
the main source of technical difficulty for proving Theorem~\ref{ThmMain} is the perturbation of the dynamics due to the presence of the potential $V(x)$.  Since the potential is bounded, the particle revolves quickly around the torus at high energies.  This gives rise to an effective homogenized behavior at high-energy which is quasi $1$-dimensional.  This homogenized dynamics 
can be formulated as a Markov process which has states that can be identified the with connected components of the level curves for the Hamiltonian $H=\frac{1}{2}p^{2}+V(x)$.  We refer to the homogenized process as the \textit{Freidlin-Wentzell process}, since it is similar to processes which arise in Freidlin-Wentzell limits~\cite{Wentzell}, although it is not of a diffusive form.  The Freidlin-Wentzell process is more tractable, since there is no drift between collisions and the torus degree of freedom is replaced by a finite labeling. Our strategy for handling the potential is to prove an analog of Theorem~\ref{ThmMain} for the corresponding Freidlin-Wentzell process (see Lemma~\ref{Freidlin}), and then to show in Lemma~\ref{ErrorEquation} that the state-modulated resolvent for the original process satisfies the same integral equation as the state-modulated resolvent for the Freidlin-Wentzell process except for an error that can be controlled.      
 Although the behavior of the original dynamics and the homogenized dynamics diverges at low energy, the cumulative effect of the divergence for the state-modulated resolvent still conforms to the bounds that we consider in Theorem~\ref{ThmMain}.  
 
 The bounds in Theorem~\ref{ThmMain} are not optimal.  Our analysis does not take advantage of the drift described in (2) of List~\ref{RegimeList}, which should allow for tighter bounds.  There are also smaller kernels than $A^{(\lambda)}$ and $B^{(\lambda)}$ possible over the domain of $(x,p,\,x',p')$ where $p$ and $p'$ have opposite signs\footnote{This can be understood from the Example~\ref{ExpBrownRes}.}, although we are not interested in this here.  
 Finally,  by a slightly different analysis of the contractive regime (3) of List~\ref{RegimeList}, the kernel $A^{(\lambda)}$ can be replaced by the kernel $A^{(\lambda),\prime}$ defined as 
$$ A^{(\lambda),\prime}(x,p,\,x',p')= \Big(1+\textup{min}\big(|p|, \lambda^{-1}\log(1+\lambda p) \big)\chi\big( |p'|\geq \lambda^{-1}\big)\Big)
  \frac{1}{1+\lambda|p'|}.  $$
 This alternative is not strictly stronger than the choice of $A^{(\lambda)}$.
 
The remainder of the article is arranged as follows:  In the next Section, we give a few examples of inequalities for the state-modulated resolvent for simpler processes.  Section~\ref{SecBasic} contains some general remarks on state-modulated resolvents and also contains some technical preliminaries specific to our dynamics.   Section~\ref{SecFreidlin} discusses the Freidlin-Wentzell process, and Section~\ref{SecTheLink}  bridges the analysis of the Freidlin-Wentzell process with the original process.  The proof of Theorem~\ref{ThmMain} is in Section~\ref{SecTheProof}.

\subsection{Examples of inequalities for state-modulated resolvents}

The bound for $U^{(\lambda)}_{h} $ given in Theorem~\ref{ThmMain} is especially complicated due to the different scales described in List~\ref{RegimeList}.  In particular, the inequality in Theorem~\ref{ThmMain} involves two kernels $A^{(\lambda)}(s,s')$ and $B^{(\lambda)}(s,s')$ which are used in supremum and integral norms, respectively.  The first two examples below only involve random walk behavior (i.e. (3) of List~\ref{RegimeList}), and the kernel  $B^{(\lambda)}(s,s')$ is sufficient. Examples~\ref{LevyProcess} and~\ref{WithoutTorus} follow by simpler analysis than contained in Section~\ref{SecFreidlin}.

%The following example is related to the example at the end of~\cite{Neveu}.  
\begin{example}\label{ExpBrownRes}
Let  $\mathbf{B}$ be a one-dimensional standard Brownian motion and $f:\R\rightarrow \R^{+}$ be integrable. Define
the kernel $U_{\frak{h}}$ such that for $p\in \R$, 
$$ U_{\frak{h}}\big(p, f\big)= \mathbb{E}_{p}\Big[\int_{0}^{\infty}dt\, f(\mathbf{B}_{t})e^{-\frak{h}\frak{l}_{t}}\Big],       $$
where $\frak{h}>0$ and $\frak{l}_{t}$ is the local time at zero of $\mathbf{B}$.  In words, $U_{\frak{h}}$ is the $h$-modulated resolvent of $f$, where $h$ is the $\delta$-function $h(p)=\frak{h}\delta_{0}(p)$.   The function $U_{\frak{h}} f$ satisfies the differential equation    
$$ f(p)=\frak{h}\,\delta_{0}(p)\,U_{\frak{h}}\big(p, f\big) -\frac{1}{2}\Delta_{p}\,U\big(p, f\big).  $$
A closed form of the solution is given by
$$ U_{\frak{h}}\big(p,f\big)=\frac{1}{\frak{h}}\int_{\R}f(p)dp+  2 \left\{  \begin{array}{cc} \int_{0}^{\infty}dq\,\,\textup{min}\big(q, p\big)\,f(q) &  p\geq0,   \\  \quad & \quad \\  \int_{0}^{\infty}dq\,\textup{min}\big(q, -p\big)\,f(-q)   & p<0 .   \end{array} \right.     $$
  Trivially, there exists a $c>0$ such that for $B(p,p'):=1+\textup{min}\big(|p'|, |p| \big) $ and all $p\in \R$ and integrable $f:\
\R\rightarrow \R^{+}$,
$$\hspace{2cm} U_{\frak{h}}(p,f)\leq  c\int_{-\infty}^{\infty}dq\,B(p,p')\,f(p').    $$

\end{example}

\begin{example}\label{LevyProcess}
Let the backward Kolmogorov generator $\mathcal{L}$ be defined such that for $\tuP\in B(\R)$,
\begin{align*}
(\mathcal{L}\tuP)(p)=\int_{\R}dp^{\prime}j(p^{\prime}-p)\big(\tuP(p^{\prime})-\tuP(p)     \big),   
\end{align*}
where  $j:\R\rightarrow \R^{+}$ is integrable, continuous, and its first two moments satisfy $\int_{\R}dp\,p\,j(p)=0$ and $\int_{\R} dp\,p^{2}\,j(p)<\infty$.  
For measurable $h:\R\rightarrow \R^{+}$ with $h\neq 0$, there is a $c>0$ such that
for $B(p',p)$ defined as in example~(\ref{ExpBrownRes}) and all $p\in \R$ and integrable $f:\
\R\rightarrow \R^{+}$,
\begin{eqnarray*}
U_{h}\big(p,f\big) \leq c\int_{\R }dp'\,B(p',p)f(p').    
\end{eqnarray*}

\end{example}

\begin{example}\label{WithoutTorus}
Consider the backwards Markov generator $\mathcal{L}_{\lambda}$ which acts on a dense domain of $B(\R)$ as
$$
(\mathcal{L}_{\lambda}\tuP)(p)= \int_{\R}dp^{\prime} \mathcal{J}_{\lambda}(p,p^{\prime})\big(\tuP(p^{\prime})-\tuP(p)     \big), 
$$
where $\mathcal{J}_{\lambda}$ is defined as in~(\ref{JumpRates}).   Let  $h:\R\rightarrow \R^{+}$ be measurable and $h\neq 0$.  There is a $c>0$ such that for all $f\in B(\R)$ with  $f\geq 0$,  $\lambda< 1$, and $p\in \R$,
$$U^{(\lambda)}_{h}\big(p,f\big)\leq c\Big(\sup_{p'\in \Sigma} A^{(\lambda)}(p,p')f(p')+\int_{\Sigma}ds'\, B^{(\lambda)}(p,p')f(p')\Big),  $$
 where $ A^{(\lambda)}(p,p')$ and $  B^{(\lambda)}(p,p')$ are defined as
\begin{align*}
A^{(\lambda)}(p,p')&= 1+\textup{min}\big(|p|, \lambda^{-1}   \big)\chi\big( |p'|\geq \lambda^{-1}\big),\\    
B^{(\lambda)}(p,p')&= \big(1+\textup{min}\big(|p|, |p'| \big) \big)\chi( |p'|\leq \lambda^{-1}).
 \end{align*}

\end{example}

\section{Some basic facts for the state-modulated resolvent}\label{SecBasic}

   Proposition~\ref{LifeOperator} gives alternative representations for $U^{(\lambda)}_{h}\big(s,f\big)$ where (2) is from~\cite[Sec. 7]{Neveu}, (1) is from~\cite[Sec. 2]{Meyn}, and (3) is from the proof of~\cite[Prop. 3.4]{Hopfner} (for $\mathbf{h}=1$).   
   
\begin{proposition}\label{LifeOperator}
 Let  $f,h\in B(\Sigma)$, where $h$ is non-negative and $h\neq 0$.
 Pick $\mathbf{h}\geq \sup_{s\in \Sigma}h(s)$.  
 
\begin{enumerate}

\item Let $R$ be the stopping time with infinitesimal exponential rate at a time $t<R$ given by $h(S_{t})$, i.e. for all $t\in \R^{+}$ and $\delta\ll 1$
$$  \mathbb{P}_{s}^{(\lambda)}\big[ R\in [t,t+\delta]\,  \big|\,R\geq t,\, S_{r} \text{ for }  r\in [0,t]       \big]=h(S_{t})\delta+\mathit{o}(\delta).    $$
The function $U^{(\lambda)}_{h}f$ can be written as 
$$ U^{(\lambda)}_{h}\big(s,f\big)= \mathbb{E}_{s}^{(\lambda)}\Big[\int_{0}^{R}dr\,f(S_{r})  \Big].   $$

\item   Let $M_{h'}:B(\Sigma)\rightarrow B(\Sigma)$ be multiplication by $h'(s)= \mathbf{h}-h(s)$ and $U^{(\lambda)}_{\mathbf{h}}$ be the standard resolvent evaluated at $\mathbf{h}\in \R^{+}$.    The operator $U^{(\lambda)}_{h}$  can be written as
$$ U^{(\lambda)}_{h}=\sum_{n=0}^{\infty}  U^{(\lambda)}_{\mathbf{h}} \big( M_{h'}\,   U^{(\lambda)}_{\mathbf{h}} \big)^{n}.     $$

\item Let $e_{n}$ be independent, mean-$\mathbf{h}^{-1}$ exponentials which are independent of $S_{t}$.   For $\tau_{n}=\sum_{m=1}^{n}e_{m}$, the function $U^{(\lambda)}_{h}f$ can be written as
$$ U^{(\lambda)}_{h}\big(s,f\big)=\mathbb{E}_{s}^{(\lambda)}\Big[\sum_{n=1}^{\infty}\,  \Big(1-\frac{h(S_{\tau_{1}})}{\mathbf{h}}\Big)\cdots \Big(1-\frac{h(S_{\tau_{n-1}})}{\mathbf{h}}\Big)f(S_{\tau_{n}}) \Big]. $$

\item  If coins with head weight $\frac{h(S_{\tau_{n}})}{\mathbf{h}}$ are flipped at every time $\tau_{n}$ and $\tilde{n}\in \mathbb{N}$ is the count of the first head, then $U^{(\lambda)}_{h}f$ can be written as
$$ U^{(\lambda)}_{h}\big(s,f\big)=\mathbb{E}_{s}^{(\lambda)}\Big[\sum_{n=1}^{\tilde{n}}\, f(S_{\tau_{n}}) \Big].  $$

\end{enumerate}

\end{proposition}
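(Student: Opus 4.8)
The plan is to verify the four identities in sequence, starting from the definition $U^{(\lambda)}_{h}(s,f)=\mathbb{E}_{s}^{(\lambda)}\big[\int_{0}^{\infty}dt\,f(S_{t})e^{-\int_{0}^{t}dr\,h(S_{r})}\big]$, and exploiting that each successive formula is a natural reinterpretation of the exponential functional $e^{-\int_{0}^{t}h(S_{r})dr}$. For (1), I would introduce the additional randomness $R$ with the stated infinitesimal rate, conditionally independent of everything else given the path $(S_{r})_{r\geq 0}$. Then for a fixed path, $\mathbb{P}[R\geq t\mid (S_r)_{r}]=e^{-\int_{0}^{t}h(S_r)dr}$ by the standard ODE for survival functions of inhomogeneous exponential clocks. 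Applying Fubini to $\mathbb{E}_{s}^{(\lambda)}\big[\int_{0}^{R}f(S_r)dr\big]=\mathbb{E}_{s}^{(\lambda)}\big[\int_{0}^{\infty}f(S_r)\,\mathbf{1}_{R\geq r}\,dr\big]$ and conditioning on the path recovers exactly the defining integral; since $f,h\geq 0$, Tonelli justifies the interchange with no integrability worry.

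For (2), the idea is a resolvent-perturbation (Dyson) expansion: write $M_h=M_{\mathbf h}-M_{h'}$ with $\mathbf h\geq\sup h$, so that formally $(M_h-\mathcal{L}_\lambda)^{-1}=(M_{\mathbf h}-\mathcal{L}_\lambda-M_{h'})^{-1}=\sum_{n\geq 0}U^{(\lambda)}_{\mathbf h}(M_{h'}U^{(\lambda)}_{\mathbf h})^{n}$, the series converging in operator norm on $B(\Sigma)$ because $\|U^{(\lambda)}_{\mathbf h}\|\leq \mathbf h^{-1}$ and $\|M_{h'}\|\leq \mathbf h$ would only give $\leq 1$, so one must instead argue on the positive cone — each term maps nonnegative functions to nonnegative functions, the partial sums are monotone increasing, and they are dominated by $U^{(\lambda)}_{h}f$ itself (which is finite since $h\neq 0$); hence the series converges pointwise and its sum solves the resolvent equation $(M_h-\mathcal{L}_\lambda)V=f$. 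Alternatively, and more cleanly, I would derive (2) directly from (3) by recognizing the probabilistic series as the Neumann expansion: $U^{(\lambda)}_{\mathbf h}(s,g)=\mathbf h^{-1}\mathbb{E}_{s}^{(\lambda)}[g(S_{e})]$ for an independent mean-$\mathbf h^{-1}$ exponential $e$, and iterating $n$ times with the multiplicative weights $h'/\mathbf h=1-h/\mathbf h$ inserted between successive exponential waiting times produces precisely the $n$-th term of both expressions.

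For (3), the cleanest route is to start from (2): interpret $U^{(\lambda)}_{\mathbf h}(M_{h'}U^{(\lambda)}_{\mathbf h})^{n}f$ at $s$ probabilistically. Each factor $U^{(\lambda)}_{\mathbf h}$ contributes an independent mean-$\mathbf h^{-1}$ exponential holding time (this is the standard identity $(\mathbf h-\mathcal{L}_\lambda)^{-1}g(s)=\mathbf h^{-1}\mathbb{E}_s[g(S_{e_1})]$ and its iterates, using the Markov property of $S$), and each intervening $M_{h'}=M_{\mathbf h-h}$ multiplies by $\mathbf h-h(S_{\tau_k})$; collecting the $\mathbf h^{-1}$ from the $(n{+}1)$ resolvents against the $n$ factors of $\mathbf h$ from the $M_{h'}$'s and one leftover $\mathbf h^{-1}$, then re-indexing, gives $\mathbb{E}_s^{(\lambda)}\big[\prod_{k=1}^{n-1}(1-h(S_{\tau_k})/\mathbf h)\,f(S_{\tau_n})\big]$; summing over $n\geq 1$ yields the stated formula. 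Finally, (4) follows from (3) by a thinning/coupling argument: flip independent coins with head-probability $h(S_{\tau_n})/\mathbf h$ at each $\tau_n$, let $\tilde n$ be the index of the first head, so that $\mathbb{E}[\mathbf{1}_{n\leq\tilde n}\mid (S_{\tau_k})_k]=\prod_{k=1}^{n-1}(1-h(S_{\tau_k})/\mathbf h)$; plugging this into (3) and applying Tonelli converts the product weights into the truncated sum $\sum_{n=1}^{\tilde n}f(S_{\tau_n})$.

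The main obstacle is not any single identity but the bookkeeping in (2)–(3): one must be careful that the operator series in (2) is interpreted on the positive cone (norm convergence is borderline) and that, when unwinding $U^{(\lambda)}_{\mathbf h}(M_{h'}U^{(\lambda)}_{\mathbf h})^n$ into an expectation, the Markov property is applied at each of the $n$ independent exponential times in the correct nested order so that the weights $1-h(S_{\tau_k})/\mathbf h$ land at the right epochs $\tau_k=\sum_{m\leq k}e_m$ with $\mathbf h$-factors cancelling exactly. Once this combinatorial normalization is pinned down, (1) and (4) are short conditioning arguments, and all interchanges of sums, integrals, and expectations are justified by nonnegativity via Tonelli.
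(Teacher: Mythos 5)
The paper itself gives no proof of Proposition~\ref{LifeOperator}; it merely attributes (1), (2), (3) to Meyn--Tweedie, Neveu, and H\"opfner--L\"ocherbach respectively, so there is nothing to compare against directly. Your architecture --- (1) by conditioning on the path, (2) by a Neumann/Dyson expansion of $(M_{\mathbf h}-\mathcal L_\lambda-M_{h'})^{-1}$, (3) by unwinding (2) probabilistically, (4) by thinning (3) --- is the natural one and can be made fully rigorous, but as written two links need repair.

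First, you propose to derive (2) from (3) ``more cleanly'' and then (3) from (2); pick one direction. The Dyson argument is the right anchor, but your justification of convergence (``dominated by $U^{(\lambda)}_{h}f$, which is finite since $h\neq 0$'') is circular: finiteness of $U^{(\lambda)}_{h}f$ for nonnegative $f$ is not a consequence of $h\neq 0$ alone --- the paper proves it via exponential ergodicity in Lemma~\ref{LemPick}(1), whose proof \emph{uses} Part (2) of this very proposition. A self-contained route is either to iterate the resolvent identity $U^{(\lambda)}_{h}=U^{(\lambda)}_{\mathbf h}+U^{(\lambda)}_{\mathbf h}M_{h'}U^{(\lambda)}_{h}$ $N$ times, note the remainder $(U^{(\lambda)}_{\mathbf h}M_{h'})^{N+1}U^{(\lambda)}_{h}$ preserves positivity, and take a monotone limit in $[0,\infty]$; or, even more directly from the paper's \emph{definition} of $U^{(\lambda)}_{h}$ as a pathwise integral (rather than as an inverse operator, which is only said to hold ``on an appropriate class of functions''), expand $e^{-\int_0^t h(S_r)\,dr}=e^{-\mathbf h t}\sum_{n\ge 0}\tfrac{1}{n!}\bigl(\int_0^t h'(S_r)\,dr\bigr)^n$ and identify, via the Markov property and Tonelli, the $n$-th term with $U^{(\lambda)}_{\mathbf h}(M_{h'}U^{(\lambda)}_{\mathbf h})^n f$.

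Second, a normalization slip: when you unwind $U^{(\lambda)}_{\mathbf h}(M_{h'}U^{(\lambda)}_{\mathbf h})^n$ you correctly note there is ``one leftover $\mathbf h^{-1}$'' after pairing the $n+1$ resolvents with the $n$ multiplications, but then it disappears from your displayed formula for (3). Carrying your own bookkeeping through gives $U^{(\lambda)}_{h}(s,f)=\mathbf h^{-1}\,\mathbb E^{(\lambda)}_s\bigl[\sum_{n\ge 1}\prod_{k=1}^{n-1}(1-h(S_{\tau_k})/\mathbf h)\,f(S_{\tau_n})\bigr]$, and the same $\mathbf h^{-1}$ then propagates to (4). (A sanity check with $h\equiv\mathbf h$ constant confirms this; the factor is also visible in the paper's own equation~(\ref{LadyBugs}). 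The statement of (3)--(4) in the proposition seems to have absorbed it silently, so you have matched the printed formula, but your derivation exposes the mismatch and you should not paper over it.) The arguments for (1) and for passing from (3) to (4) are correct as sketched.
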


The following lemma is specific to our dynamics, and Part (3) implies that it is sufficient for us to prove Theorem~\ref{ThmMain} for a function $h:\Sigma\rightarrow \R^{+}$ of our choosing as long as it has compact support.     In later sections of this article, we will always take 
\begin{align}\label{OurAitch}
 h(s)= \chi\big( H(s)\leq l    \big),    
 \end{align}
where $l= 1+2\sup_{x}V(x)$.  We pick $l$ primarily to ensure that the particle is not trapped by the potential when $H(S_{t})>l$ and is revolving around the torus with speed $>1$ over the time period up to the next collision.

For the proof of (1) from  Lemma~\ref{LemPick} below, we use that our dynamics is exponentially ergodic, which was proven in~\cite[Thm. A.1]{Previous}.   Let $\mathcal{T}_{\lambda,\frac{1}{\mathbf{h}}}:B(\Sigma)\rightarrow B(\Sigma)$ be the operator
\begin{align*}
\mathcal{T}_{\lambda,\frac{1}{\mathbf{h}}}:= \mathbf{h}\int_{0}^{\infty}dt\,e^{-t\mathbf{h} +t\mathcal{L}_{\lambda} }=\mathbf{h}U_{\mathbf{h}}^{(\lambda)}.
\end{align*}
 The kernel $ \mathcal{T}_{\lambda, \frac{1}{\mathbf{h}}}(s,ds')$ is the transition kernel for the resolvent chain $S_{\tau_{n}}$ from (3) of Proposition~\ref{LifeOperator}.  To prove (2) and (3) of Lemma~\ref{LemPick}, we use that there is a $c_{L,\mathbf{h}}>0$ such that the forward transition operator $\mathcal{T}_{\lambda,\frac{1}{\mathbf{h}}}$ satisfies
  \begin{align}\label{PreHormander}
  \mathcal{T}_{\lambda,\frac{1}{\mathbf{h}}}(s,ds') \geq c_{L,\mathbf{h}} \,ds'    
  \end{align}
for all $s,s'\in \Sigma$ with $H(s),H(s')\leq L$ and all $\lambda<1$. 
This was shown in the proof of~\cite[Prop. 4.3]{Previous}\footnote{In the proof of~\cite[Thm. A.1]{Previous}, $\mathbf{h}=1$ and $L=1+2\sup_{x}V(x)$, although this does not matter for the argument.}.

\begin{lemma}\label{LemPick} Let $h,h',f\in B(\Sigma)$ be non-negative and $h,h'\neq 0 $. 

\begin{enumerate}

\item   The kernel $U^{(\lambda)}_{h} $ defines a bounded map on $B(\Sigma)$ (i.e. with respect to the supremum norm).

\item  Let $h$ be as in~(\ref{OurAitch}) and pick $L>0$.  There is a $c_{L}>0$ such that for all $f$ and $\lambda<1$,
  \begin{align*}  
  \Big| \int_{H(s)\leq L}ds\, U^{(\lambda)}_{h}\big(s,f\big)\Big| &\leq  c_{L}\int_{\Sigma}ds\, f(s)\,e^{-\lambda H(s)}\\     
& \leq 2c_{L}\sup_{H(s)> \frac{1}{2}\lambda^{-2} } f(s)+c_{L}\int_{ H(s)\leq \frac{1}{2}\lambda^{-2} }ds\,f(s).
\end{align*}

\item  Suppose that $h'$ has compact support. There are $c,L>0$ 
such that for all $f$ and $\lambda<1$,
$$  U^{(\lambda)}_{h}f\leq c\sup_{H(s)\leq L}U^{(\lambda)}_{h'} \big(s,f\big) +\, U^{(\lambda)}_{h'}f  .  $$

\end{enumerate}

\end{lemma}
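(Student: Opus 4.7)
For part (1), I would use representation (1) of Proposition~\ref{LifeOperator} to write $U^{(\lambda)}_{h}(s,f)\leq \|f\|_{\infty}\,\mathbb{E}^{(\lambda)}_{s}[R]$, reducing the claim to a uniform upper bound on $\mathbb{E}^{(\lambda)}_{s}[R]$ in $(s,\lambda)$ for $\lambda<1$. Since $h$ is non-negative, measurable and non-zero, fix $\epsilon>0$ and a bounded set $A\subset \Sigma$ of positive Lebesgue measure with $h\geq \epsilon\, 1_{A}$. Combining the $\lambda$-uniform Doeblin minorization~(\ref{PreHormander}) on a sublevel set $\{H\leq L\}\supseteq A$ with the exponential ergodicity from~\cite[Thm. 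A.1]{Previous}, the resolvent chain $S_{\tau_{n}}$ (taking e.g. $\mathbf{h}=\sup h$) returns to $A$ with geometric tails whose parameters are independent of the starting state and of $\lambda<1$; hence the first-head index $\tilde n$ of representation (4) is stochastically dominated by a fixed geometric variable and $\mathbb{E}^{(\lambda)}_{s}[R]$ is bounded uniformly.

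For the first inequality of part (2), the crucial input is that $\mathcal{L}_{\lambda}$ is reversible with respect to the Maxwell--Boltzmann measure $\mu_{\lambda}(ds)\propto e^{-\lambda H(s)}\,ds$ modulo the momentum-reversal involution $\theta(x,p)=(x,-p)$: the Liouville part of $\mathcal{L}_{\lambda}$ is $\theta$-antisymmetric, and a direct calculation gives the detailed-balance identity $e^{-\lambda p^{2}/2}\mathcal{J}_{\lambda}(p,p')=e^{-\lambda{p'}^{2}/2}\mathcal{J}_{\lambda}(p',p)$. Because $h=1_{\{H\leq l\}}$ is $\theta$-invariant, the Green kernel of the killed semigroup obeys $G^{(\lambda)}_{h}(s,s')\,e^{-\lambda H(s)}=G^{(\lambda)}_{h}(\theta s',\theta s)\,e^{-\lambda H(s')}$. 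Writing $U^{(\lambda)}_{h}(s,f)=\int G^{(\lambda)}_{h}(s,s')f(s')\,ds'$, exchanging the order of integration, using $e^{\lambda H(s)}\leq e^{L}$ for $H(s)\leq L$, and changing variable $\sigma=\theta s$ yields
\[
\int_{H(s)\leq L}G^{(\lambda)}_{h}(s,s')\,ds \;\leq\; e^{L}\,e^{-\lambda H(s')}\,\mathbb{E}^{(\lambda)}_{\theta s'}[R],
\]
and part (1) applied to the constant function bounds the last factor uniformly. The second inequality then follows by splitting $\Sigma$ at $H=\tfrac{1}{2}\lambda^{-2}$: on the low-energy piece $e^{-\lambda H}\leq 1$, while on the high-energy piece a standard Gaussian tail estimate shows $\int_{H>\tfrac{1}{2}\lambda^{-2}}e^{-\lambda H(s)}\,ds$ is bounded uniformly in $\lambda<1$, and this bound is absorbed into the coefficient of the supremum.

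For part (3), I would use the resolvent identity $U^{(\lambda)}_{h}=U^{(\lambda)}_{h'}+U^{(\lambda)}_{h}\,M_{h'-h}\,U^{(\lambda)}_{h'}$, which follows from $A^{-1}-B^{-1}=A^{-1}(B-A)B^{-1}$ with $A=M_{h}-\mathcal{L}_{\lambda}$ and $B=M_{h'}-\mathcal{L}_{\lambda}$. Decomposing $h'-h=(h'-h)^{+}-(h'-h)^{-}$, the term $U^{(\lambda)}_{h}\bigl[(h'-h)^{-}\,U^{(\lambda)}_{h'}f\bigr]$ is non-negative and can be discarded, leaving an upper bound. Since $(h'-h)^{+}\leq h'$ is supported in $\supp(h')$, we have
\[
(h'-h)^{+}\,U^{(\lambda)}_{h'}f\;\leq\; \|h'\|_{\infty}\,1_{\supp(h')}\,\sup_{s'\in \supp(h')}U^{(\lambda)}_{h'}(s',f),
\]
and applying $U^{(\lambda)}_{h}$ together with part (1) (which uniformly bounds $U^{(\lambda)}_{h}(\cdot,1_{\supp(h')})$) gives the desired inequality, with $L$ chosen so that $\supp(h')\subseteq \{H\leq L\}$. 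The main technical obstacle is the $\lambda$-uniform return-time control underlying part (1); the ergodicity from~\cite[Thm. A.1]{Previous} must be combined with~(\ref{PreHormander}) in a way that produces constants independent of $\lambda<1$, after which the reversibility identity and the resolvent identity deliver parts (2) and (3) cleanly.
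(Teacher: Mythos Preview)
Your detailed-balance observation for Part~(2) is correct and elegant, and the resolvent identity for Part~(3) is valid. The gap is in the $\lambda$-uniformity that Parts~(2) and~(3) require. Part~(1) as stated (and as proved in the paper) gives only a bounded operator for each fixed $\lambda$; your argument tries to upgrade this to $\sup_{s,\lambda<1}\mathbb{E}^{(\lambda)}_{s}[R]<\infty$, and this claim is false. The Doeblin minorization~(\ref{PreHormander}) is only on the compact set $\{H\leq L\}$, and the exponential-ergodicity constants from~\cite{Previous} degenerate as $\lambda\to 0$. Indeed, at $\lambda=0$ the jump part is an unbiased random walk in momentum (List~\ref{RegimeList}(3)), so for any starting point outside $\{H\leq l\}$ the hitting time of the killing region has infinite mean, and $\mathbb{E}^{(\lambda)}_{s}[R]\to\infty$ as $\lambda\to 0$. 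Your Part~(2) bound $\int_{H\leq L}G^{(\lambda)}_{h}(s,s')\,ds\leq e^{L}e^{-\lambda H(s')}\mathbb{E}^{(\lambda)}_{\theta s'}[R]$ therefore does not give a $\lambda$-uniform constant. The quantity you actually need bounded is $U^{(\lambda)}_{h}(\theta s',1_{\{H\leq L\}})$, the expected time \emph{in the compact set} before killing; this may well be uniformly bounded, but proving it is essentially the content of the lemma and cannot be read off from Part~(1).

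The paper avoids this circularity. For Part~(2) it does not use reversibility at all: instead it exploits the Doeblin minorization to write $M_{\sqrt{1-h/2}}\,\mathcal{T}_{\lambda,1/2}\,M_{\sqrt{1-h/2}}\leq \mathcal{T}_{\lambda,1/2}-\tfrac{1}{10}h''\otimes\nu$, and then invokes Nummelin's regeneration identity \cite[Thm.~3]{Nummelin}, which turns $\nu\,U^{(\lambda)}_{h}f$ directly into the ratio $\int e^{-\lambda H}f\big/\int e^{-\lambda H}h''$ with explicit, controlled $\lambda$-dependence. For Part~(3) the paper does an excursion decomposition: it defines the first-return kernel $\vartheta_{\lambda}$ to $A=\{H\leq L\}$ and the renewal kernel $\varrho_{\lambda}=\sum_{n}\vartheta_{\lambda}(M_{1-h/\mathbf{h}}\vartheta_{\lambda})^{n}$, and bounds $\sup_{s}\varrho_{\lambda}(s,\Sigma)$ by a $\lambda$-independent constant using only the minorization~(\ref{PreHormander}) \emph{restricted to} $s\in A$---precisely the regime where that estimate is uniform. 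Your resolvent-identity route for Part~(3) would also need such a uniform bound on $U^{(\lambda)}_{h}(\cdot,1_{\supp h'})$, which you have not established.
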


\begin{proof} \text{  }\\
\noindent Part (1):   Recall that  $\mathcal{T}_{\lambda, \frac{1}{\mathbf{h}}}= \mathbf{h}U^{(\lambda)}_{\mathbf{h}} $.  By Part (2) of Proposition~\ref{LifeOperator}, we can write
\begin{align}\label{LadyBugs}
U^{(\lambda)}_{h}=  \frac{1}{\mathbf{h}} \sum_{n=0}^{\infty}    \mathcal{T}_{\lambda,\frac{1}{\mathbf{h}}}\big(M_{\frac{\mathbf{h}-h}{\mathbf{h}} }\mathcal{T}_{\lambda,\frac{1}{\mathbf{h}}}\big)^{n}. 
\end{align}

Let $\Phi_{\lambda,t}:B(\Sigma)\rightarrow B(\Sigma)$ be the transition semigroup associated to the backward Kolmogorov equation~(\ref{TheModel}).   
By~\cite[Thm. A.1]{Previous}, $\Phi_{\lambda,t}$ converges exponentially  in the operator norm  to the equilibrium projection $\mathbf{P}_{\lambda}=1_{\Sigma} \otimes \Psi_{\infty,\lambda}$ as $t\rightarrow \infty$.   It follows that operators $\mathcal{T}_{\lambda,\frac{1}{\mathbf{h}} }^{n}$ also converge exponentially to $\mathbf{P}_{\lambda}$ for large $n$.  In other terms, there are $C,\alpha>0$ such that for all $f\in B(\Sigma)$,
$$\| \mathcal{T}_{\lambda,\frac{1}{\mathbf{h}}}^{n}f-\mathbf{P}_{\lambda}f\|_{\infty}= \| \mathcal{T}_{\lambda,\frac{1}{\mathbf{h}}}^{n}f-\big(\Psi_{\infty,\lambda}(f)\big)1_{\Sigma}\|_{\infty}\leq C\,e^{-n\alpha }\|f\|_{\infty}.   $$
Pick an $N>0$ such that $C\,e^{-N\alpha}\leq \frac{1}{2}\Psi_{\infty,\lambda}(h)$. Using the form~(\ref{LadyBugs}) and that $M_{\frac{\mathbf{h}-h}{\mathbf{h}}}$ is a positive  multiplication operator with norm $\leq 1$,    we get the first inequality below:
\begin{align}
\| U^{(\lambda)}_{h}f\|_{\infty}&\leq \frac{ N}{ \mathbf{h} } \|f\|_{\infty} \Big\|\sum_{n=0}^{\infty} \big(\mathcal{T}_{\lambda,\frac{1}{\mathbf{h}}}^{N}M_{\frac{\mathbf{h}-h}{\mathbf{h}} }\big)^{n} \Big\| \nonumber \\ & \leq \frac{N}{ \mathbf{h} } \|f\|_{\infty} \sum_{n=0}^{\infty} \big\|\mathcal{T}_{\lambda,\frac{1}{\mathbf{h}}}^{N}\frac{\mathbf{h}-h}{\mathbf{h}} \big\|_{\infty}^{n}   \leq  \frac{2N  \|f\|_{\infty}}{ \mathbf{h}\Psi_{\infty,\lambda}(h)  } ,
\end{align}
where $\|\cdot\|$ denotes the operator norm.

The third inequality uses that
$$0\leq  1- \big\|\mathcal{T}_{\lambda,\frac{1}{\mathbf{h}} }^{N}\frac{
\mathbf{h}-h}{\mathbf{h}} \big\|_{\infty}\leq 1-\frac{1}{\mathbf{h}}\| \mathbf{P}_{\lambda}h\|_{\infty}+ \frac{1}{2\mathbf{h}}\Psi_{\infty,\lambda}(h)=1- \frac{1}{2\mathbf{h}}\Psi_{\infty,\lambda}(h).  $$
Thus, $U^{(\lambda)}_{h}$ is a bounded operator.

\vspace{.5cm}

\noindent Part (2):  Let $ \nu(ds)=\frac{\mu(ds)}{\mu(R)   }1_{R}   $ be the normalization of Lebesgue measure $\mu$ over the set $R=\{s\in \Sigma\,\big|\,H(s)\leq l   \}$ and $h'':\Sigma\rightarrow \R^{+}$ be the function $h''= c_{l,\frac{1}{2}}\mu(R)1_{R}$.   By~(\ref{PreHormander}), the transition kernel for  $\mathcal{T}_{\lambda,\frac{1}{\mathbf{h}}}$ satisfies
  \begin{align}\label{Hormander}
  \mathcal{T}_{\lambda,\frac{1}{\mathbf{h}}}(s,ds')\geq h''(s)\nu(ds')   
  \end{align}
  for all $s,s'\in \Sigma$ with $H(s),H(s')\leq l$.  Since $h\leq 1$ and by~(\ref{Hormander}),
$$M_{\sqrt{1-\frac{h}{2} }}\mathcal{T}_{\lambda,\frac{1}{2} }M_{\sqrt{1-\frac{h}{2}}}\leq \mathcal{T}_{\lambda,\frac{1}{2} }- M_{1-\sqrt{1-\frac{h}{2} }}\,\mathcal{T}_{\lambda,\frac{1}{2} } \,M_{1-\sqrt{1-\frac{h}{2}   }}\leq \mathcal{T}_{\lambda,\frac{1}{2} }-\frac{1}{10} h''\otimes \nu  . $$
With (\ref{LadyBugs}) and the fact that $\sqrt{1-\frac{h}{2}}\geq 2^{-\frac{1}{2}}$,  
  $$  
U^{(\lambda)}_{h}\leq 2M_{\sqrt{1-\frac{h}{2} }}U^{(\lambda)}_{h} M_{\sqrt{1-\frac{h}{2} }}      =2\sum_{n=0}^{\infty} \big(M_{\sqrt{1-\frac{h}{2} }}\mathcal{T}_{\lambda,\frac{1}{2} }M_{\sqrt{1-\frac{h}{2} }}\big)^{n}\leq    2\sum_{n=0}^{\infty} \big(\mathcal{T}_{\lambda,\frac{1}{2} } -\frac{1}{10}h''\otimes \nu     \big)^{n} .
$$
Thus, we have the inequality below 
$$\nu\,  U^{(\lambda)}_{h}f\leq     2\,\nu   \sum_{n=0}^{\infty} \big(\mathcal{T}_{\lambda,\frac{1}{2} } -\frac{1}{10}h''\otimes \nu     \big)^{n}f =   20 \frac{ \int_{\Sigma}ds\,\Psi_{\infty,\lambda}(s) \,f(s)       }{  \int_{\Sigma}ds\,\Psi_{\infty,\lambda}(s) \,h''(s) } =  20 \frac{ \int_{\Sigma}ds\,e^{-\lambda H(s)} \,f(s)       }{  \int_{\Sigma}ds\,e^{-\lambda H(s)} \,h''(s) } .  $$
The first equality is an identity from~\cite[Thm. 3]{Nummelin}.  However, 
\begin{align}\label{Forfeit}
\nu\,  U^{(\lambda)}_{h}f\geq \frac{1}{2}\,\nu \,\mathcal{T}_{\lambda,\frac{1}{2} }U^{(\lambda)}_{h}\geq \frac{c_{L,\frac{1}{2}}}{2}\int_{H\leq L}ds\,U^{(\lambda)}_{h}\big(s,\,f\big),  
\end{align}
where the second inequality is by~(\ref{PreHormander}).  The first inequality in~(\ref{Forfeit}) follows by the relations 
$$   U^{(\lambda)}_{h}=\mathcal{T}_{\lambda,\frac{1}{2} }+\mathcal{T}_{\lambda,\frac{1}{2} }M_{1-\frac{h}{2}} U^{(\lambda)}_{h}\geq \frac{1}{2} \mathcal{T}_{\lambda,\frac{1}{2} } U^{(\lambda)}_{h},    $$
where the equality is equivalent to Part (2) of Proposition~\ref{LifeOperator} and the inequality is $1-\frac{h}{2}\geq \frac{1}{2}$.  Combining~(\ref{Forfeit}) with the inequality above it gives that $U^{(\lambda)}_{h}f$ is bounded by a multiple of  $ \int_{\Sigma}ds\,e^{-\lambda H(s)} \,f(s) $     .  

Finally, 
\begin{align*}
\int_{\Sigma}ds\,e^{-\lambda H(s)} \,f(s) & \leq    
\Big(\sup_{H> \frac{1}{2}\lambda^{-2} } f(s) \Big) \int_{H> \frac{1}{2}\lambda^{-2} }ds\,e^{-\lambda H(s)}  +\int_{ H\leq \frac{1}{2}\lambda^{-2} }ds\,f(s), \\
&   \leq  2 \sup_{H> \frac{1}{2}\lambda^{-2} } f(s) +\int_{ H\leq \frac{1}{2}\lambda^{-2} }ds\,f(s)
\end{align*}
where we have split the integration into the domains $H'> \frac{1}{2}\lambda^{-2}$ and $H'\leq  \frac{1}{2}\lambda^{-2}$ and the second inequality is for $\lambda$ small enough.

\vspace{.5cm}

\noindent Part (3):     Since $ U^{(\lambda)}_{h_{2} }f\leq \, U^{(\lambda)}_{h_{1}}f$ when $h_{1}\leq h_{2}$, we can assume without loss of generality that $h$ also has compact support.  For the same reason, we can take $h'$ to be of the  form $h'=\mathbf{h}1_{A}$ for some $\mathbf{h}>0$ and where $A=\{s\,\big|\, H(s)\leq L \}$ for some $L>0$.

Define the kernels  $\vartheta_{\lambda}, \varrho_{\lambda}:B(\Sigma)\rightarrow B(\Sigma)$ such that
\begin{align*}
  \vartheta_{\lambda} &=\sum_{m=0}^{\infty} \big(\mathcal{T}_{\lambda,\frac{1}{\mathbf{h}} }  M_{1_{A^{c}}}\big)^{n}\mathcal{T}_{\lambda,\frac{1}{\mathbf{h}} }  M_{1_{A}} ,\\
\varrho_{\lambda}& =  \sum_{n=1}^{\infty} \vartheta_{\lambda} \big(M_{1-\frac{h}{\mathbf{h}} } \vartheta_{\lambda}\big)^{n}.
\end{align*}
For each $s\in \Sigma $, $\vartheta_{\lambda}(s,ds')$ is a probability measure supported on $A$, and $\varrho_{\lambda}(s,ds') $ is a measure with total weight bounded by 
\begin{align}\label{Carp}
\sup_{s\in \Sigma}\varrho_{\lambda}(s,\Sigma)& \leq \sum_{n=1}^{\infty}  \sup_{s\in \Sigma}\,\vartheta_{\lambda} \big(M_{1-\frac{h}{\mathbf{h}} } \vartheta_{\lambda}\big)^{n} \big(s,\Sigma\big) \leq 2+\sum_{n=1}^{\infty} \Big( \sup_{s\in A} \vartheta_{\lambda} M_{1-\frac{h}{\mathbf{h}} } \big(s,\Sigma\big)      \Big)^{n} \noindent   \\ & \leq 2+\sum_{n=1}^{\infty} \Big(1-c_{L,\mathbf{h}}\int_{A}ds'\,\frac{h(s')}{\mathbf{h}}  \Big)^{n} \leq 2+\frac{\mathbf{h} }{c_{L,\mathbf{h}}\int_{A}ds'\,h(s') }:=C, 
\end{align}
where the third inequality uses the remark~(\ref{PreHormander}).  The second inequality above follows since $\varrho_{\lambda}(s,ds')$ is supported in $A$.

The resolvent $U^{(\lambda)}_{h}$ can be written in terms of $U^{(\lambda)}_{h'}$ and $\varrho_{\lambda}$ as
\begin{align*}
U^{(\lambda)}_{h}\big(s,f\big) &=  \int_{\Sigma} \varrho_{\lambda}(s,ds')U^{(\lambda)}_{h'}\big(s',f\big)+ U^{(\lambda)}_{h'}\big(s,f\big) \\
&\leq    \Big(\sup_{s''\in \Sigma}\int_{\Sigma}\varrho_{\lambda}(s'',ds')\Big) \,\sup_{s'\in A} U^{(\lambda)}_{h'}\big(s',f\big)+U^{(\lambda)}_{h'} \big(s,f\big)\\
& \leq C\sup_{s'\in A} U^{(\lambda)}_{h'}\big(s',f\big)+U^{(\lambda)}_{h'} \big(s,f\big).
\end{align*}
The second inequality is by~(\ref{Carp}).

\end{proof}

\section{The Freidlin-Wentzell dynamics}\label{SecFreidlin}

We will now define a homogenized dynamics in which there is no deterministic evolution between jumps in phase space due to collisions.     The homogenized dynamics behaves similarly to the original dynamics, except that its state modulated resolvent is more analytically tractable.  For the original dynamics, between collisions, the particle follows an orbit over a connected segment of a level curve  of the Hamiltonian $H=\frac{1}{2}p^{2}+V(x)$.  If the particle starts at $(x,p)\in \Sigma$ with $|p|\gg 1$, then the particle will likely pass through over that curve on the order of $\textup{min}\big(|p|, \lambda^{-1}\big)$ times before the next collision, since the escape rates satisfy $\mathcal{E}_{\lambda}(p)\leq C(1+\lambda|p|)$ for  some $C>0$ and all $\lambda<1$ and $p\in \R$.  This is suggestive of  a Freidlin-Wentzell limit~\cite{Wentzell}, in which a Markovian dynamics emerges on the set of connected level curves  of a Hamiltonian for a system in which the dynamics is driven by a Hamiltonian evolution perturbed by a comparatively slow-acting noise (which they take to be a white noise).  Since $V(x)$ is bounded,  the level curves of $H(x,p)$ are almost flat  when $|p|\gg 1$ (and thus essentially like those of $H(x,p)=\frac{1}{2}p^{2}$ for large energies).  We do not discuss Freidlin-Wentzell limits further, and we proceed with defining the formalism relevant for us.

First, we define a state space $\Gamma_{V}$ identified with the set of connected components of level curves of $H(x,p)$  determined by the potential $V:\mathbb{T}\rightarrow \R^{+}$.  We define  $\Gamma_{V} $ as the image of a map $\Sigma\rightarrow \R^{+}\times \mathbb{Z}$ given by $\mathbf{G}_{V}(s)=\big(2^{\frac{1}{2}}H^{\frac{1}{2}}(s),\,n(s)\big)$ in which the component $n(s)\in \mathbb{Z}$ is a labeling of the connected components of the level curves corresponding to the energy $H(s)$.  When the particle has energy $H(s) > \sup_{x}V(x)$, then the Hamiltonian evolution drives the particle to revolve around the torus in one direction or another.  We make the convention that these level curves  are labeled with $\pm 1$ depending on the sign of $p$, and the remainder of the labeling at lower energies is arbitrary. 

\begin{definition}\label{Definitions} \text{ }

\begin{enumerate}
\item    We place a measure on $\Gamma_{V}\subset \R^{+}\times \mathbb{Z}$ through the Lebesgue measure on the preimage in $\Sigma$ of the map $s\rightarrow \mathbf{G}_{V}(s)=\big(2^{\frac{1}{2}}H^{\frac{1}{2}}(s), n(s)\big)\in \Gamma_{V}$.  We refer to this measure by $d\gamma$ where the dummy variable $\gamma$ is identified as an element in $  \Gamma_{V}$.

\item  For  $\gamma\in \Gamma_{V}$, we define the probability measure $\eta_{\gamma}$ on $\Sigma$ as the normalization of Lebesgue measure over the preimage of $\mathbf{G}_{V}^{-1}(\gamma)$.   Also, we define the probability measure $\kappa_{\gamma}$ as
$$  \kappa_{\gamma}(ds)=   \frac{     \eta_{\gamma}(ds) \big( p^{2}+\big|\frac{dV}{dx}(x)\big|^{2}    \big)^{-\frac{1}{2}}        }{   \int_{\Sigma} \eta_{\gamma}(ds')   \big( |p'|^{2}+\big|\frac{dV}{dx}(x')\big|^{2}    \big)^{-\frac{1}{2}}      },   $$
where $s=(x,p)$ and $s'=(x',p')$.

\item For $f:\Sigma\rightarrow \R$, define the map $B(\Sigma)\rightarrow  B(\Gamma_{V})$ such that
$$\widehat{f}(\gamma)= \int_{\Sigma}\eta_{\gamma}(ds)\,f(s).$$ 

\item
We define the jump kernel $\widehat{\mathcal{J}}_{\lambda}:\Gamma_{V}^{2}\rightarrow \R^{+}$  as
\begin{align}\label{FreidWentRates}
        \widehat{\mathcal{J}}_{\lambda}(\gamma ,\gamma' )= \int_{\Sigma^{2}}\kappa_{\gamma}(dx\,dp) \eta_{\gamma' }(dx'\,dp')\,\delta_{0}(x-x')\mathcal{J}_{\lambda}(p,p') . 
 \end{align}
\end{enumerate}

\end{definition}

  We will sometimes use $ \widehat{f}$ to denote an arbitrary element of $ B(\Gamma_{V})$ without reference to a specific preimage $f$.   The kernel $ \widehat{\mathcal{J}}_{\lambda}(\gamma,\gamma'  )$ defines a Markov process $G_{t}\in \Gamma_{V}$ which has the same essential features described in List~\ref{RegimeList}. 
 For this comparison, the value $\mathbf{q}(\gamma)=\epsilon\rho 1_{\rho\geq l }  $ can be identified as the momentum of the element $\gamma=(\rho,\epsilon)\in \Gamma_{V}$.  
 For $\widehat{h},\widehat{f}\in B(\Gamma_{V})$ with $\widehat{h}$  non-negative and $\widehat{h}\neq 0$, we define the  kernel  $\overline{U}^{(\lambda)}_{\widehat{h}}\big(\gamma,\widehat{f}\big)$   as 
$$\overline{U}^{(\lambda)}_{\widehat{h}}\big(\gamma,\widehat{f}\big)=\mathbb{E}_{\gamma}^{(\lambda)}\Big[\int_{0}^{\infty}dt\, \widehat{f}(G_{t})\,e^{-\int_{0}^{t}dr\,\widehat{h}(G_{r} )}    \Big] .
$$
The analogous statements of Section~\ref{SecBasic} all hold for $\overline{U}^{(\lambda)}_{\widehat{h}}$.   With $h:\Sigma\rightarrow \R^{+}$ defined as in~(\ref{OurAitch}), we define $\widehat{h}:\Gamma_{V}\rightarrow \R^{+}$ which has the form $\widehat{h}(\rho,\epsilon)= \chi( \rho\leq \sqrt{2l}    )$.  In future, we will drop the subscript from $\overline{U}^{(\lambda)}_{\widehat{h}}$ and take the form of  $\widehat{h}$ as above.

\begin{remark}\label{Remark}\text{  }

\begin{enumerate}

\item We can recover Lebesgue measure from the $\eta_{\gamma}$'s as the integral $$ds=\int_{\Gamma_{V}}d\gamma\,\eta_{\gamma}(ds).$$   

\item The kernel
$\widehat{\mathcal{J}}_{\lambda}(\gamma ,\gamma' )$ can be written as 
$$\widehat{\mathcal{J}}_{\lambda}(\gamma ,\gamma' )= \int_{\Sigma}\kappa_{\gamma}(dx\,dp) \sum_{p'=\pm  \sqrt{ |\rho'|^{2}-2V(x)   }   } \chi\big( (x,p')\in \mathbf{G}_{V}^{-1}(\gamma')  \big) \mathcal{J}_{\lambda}
(p, p' )\frac{  \big(  |p'|^{2}+\big|\frac{dV}{dx}(x)\big|^{2}    \big)^{\frac{1}{2}}     }{   |p'|  } .$$

\item Let $\mathit{I}_{\gamma}\subseteq \mathbb{T}$ be the range of the torus component of the set $\mathbf{G}_{V}^{-1}(\gamma)$.  For $x'\in  \mathit{I}_{\gamma}$,
$$   \int_{\Sigma}\kappa_{\gamma}(dx\,dp) \delta_{0}(x-x')=\frac{ \big(\rho^{2}-2V(x')\big)^{-\frac{1}{2} }   }{ \int_{\mathit{I}_{\gamma}}dx\, \big(\rho^{2}-2V(x)\big)^{-\frac{1}{2} }   }.   $$
When $\rho> \sup_{x}V(x)$, then $\mathit{I}_{\gamma}=\mathbb{T}$.

\end{enumerate}

\end{remark}

 For facility, we list some notation below.   
\begin{eqnarray*}
&\gamma=(\rho,\epsilon)      & \text{State in $\Gamma_{V}\subset \R^{+}\times \Z$.     }\\
&\mathbf{G}_{V}(s) =\gamma(s)    & \text{State in $ \Gamma_{V}$ associated with the element $s\in \Sigma$.  }\\
&\mathbf{q}(\gamma)   & \text{The quasi-momentum: $\mathbf{q}(\rho,\epsilon)=\rho\,\epsilon\,1_{\rho\geq l}$.  }\\
%& G_{t}=(R_{t},E_{t}) &   \text{The Freidlin-Wentzell process.}\\
 & \mathbf{g}_{n}=(\mathbf{r}_{n},\mathbf{e}_{n}) &   \text{Skeleton chain for the Freidlin-Wentzell process.} \\ 
&  \widehat{\mathcal{J}}_{\lambda}(\gamma,\gamma'  )    &   \text{Jump kernel for the Freidlin-Wentzell process.} \\
&\widehat{\mathcal{E}}_{\lambda}(\gamma) &   \text{Escape rates for the Freidlin-Wentzell process.} \\
&  \widehat{T}_{\lambda}(\gamma,\gamma')  &   \text{Transition kernel for the skeleton chain.} \\
& U^{(\lambda)}f    &   \text{The $h$-modulated resolvent   of $f\in B(\Sigma)$ for the original process.}\\ 
& \overline{U}^{(\lambda)}\widehat{f} &  \text{The $\widehat{h}$-modulated resolvent of $\widehat{f}\in B(\Gamma_{V})$ for the Freidlin-Wentzell process}.
\end{eqnarray*}
The skeleton chain for the Freidlin-Wentzell process is the sequence of states at collision times and has transition kernel $ \widehat{T}_{\lambda}(\gamma,\gamma') =\frac{ \widehat{\mathcal{J}}_{\lambda}(\gamma,\gamma'  )  }{\widehat{\mathcal{E}}_{\lambda}(\gamma)    }$.  

Proposition~\ref{NewStuff} lists some characteristics of the jump rates for the original process and Freidlin-Wentzell process that we will use.  The proof uses elementary techniques, and we do not include it.  Parts (1) and (2) of Proposition~\ref{NewStuff} give bounds for the rate of collisions, and Parts (3) and (4) derive from the contractive nature of the jump rates when starting at high momentum.

\begin{proposition}\label{NewStuff} 
There are $c,C>0$ such that the following hold:
\begin{enumerate}
\item For all  $p\in \R$,
$$   c \,\textup{max}(1,\lambda|p|) \leq  \mathcal{E}_{\lambda}(p)\leq  C (1+\lambda |p|)  .  $$ 

\item For all $(\rho,\epsilon)\in \Gamma_{V}$,  $$c\,\textup{max}(1,\lambda\rho)\leq \widehat{\mathcal{E}}_{\lambda}(\rho,\epsilon)\leq C (1+\lambda \rho).$$

\item  Let $W:\Sigma\rightarrow [0,1]$ be defined as $W(s)= \frac{H^{\frac{1}{2}}(s) }{1+H^{\frac{1}{2}}(s) }$.   For all $\lambda<1$ and $s=(x,p)$ with $|p|>\lambda^{-1}$,
$$ \int_{\Sigma}ds'  \mathcal{J}_{\lambda}(s,s')\Big(W(s)-W(s')     \Big)\geq c\lambda.     $$

\item  Let  $W_{\lambda}:\Gamma_{V}\rightarrow \R^{+}$  be defined as
$ W_{\lambda}(\rho,\epsilon)= \lambda^{-1} \log\big(1+\lambda\rho    \big)     $.
For all $\lambda<1$ and $\gamma=(\rho,\epsilon)$ with $\rho>\lambda^{-1}$,
$$
  \int_{\Gamma_{V}}d\gamma'\,\widehat{T}_{\lambda}(\gamma,\gamma')\Big( W_{\lambda}(\gamma)-W_{\lambda}(\gamma')    \Big)    \geq c . $$

\end{enumerate}

\end{proposition}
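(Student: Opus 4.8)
The plan is to establish all four estimates in Proposition~\ref{NewStuff} by direct computation with the explicit Gaussian kernel~(\ref{JumpRates}), using the dictionary provided by Remark~\ref{Remark} and Definition~\ref{Definitions} to pass between the original and Freidlin-Wentzell quantities. For Part (1), I would compute $\mathcal{E}_{\lambda}(p)=\int_{\R}dp'\,\mathcal{J}_{\lambda}(p,p')$ by the change of variables $u=\frac{1-\lambda}{2}p-\frac{1+\lambda}{2}p'$, which turns the integral into $\frac{2}{1+\lambda}\int_{\R}du\,|p-p'(u)|e^{-u^2/2}$ with $p'(u)=\frac{1-\lambda}{1+\lambda}p-\frac{2}{1+\lambda}u$, so $|p-p'(u)|=|\frac{2\lambda}{1+\lambda}p+\frac{2}{1+\lambda}u|$. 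The integral $\int_{\R}|a+u|e^{-u^2/2}du$ is elementary and behaves like a constant for $|a|\lesssim 1$ and like $\sqrt{2\pi}|a|$ for $|a|\gg 1$; since here $a=\lambda p$ up to a bounded factor, this gives exactly $c\max(1,\lambda|p|)\le\mathcal{E}_{\lambda}(p)\le C(1+\lambda|p|)$ uniformly in $\lambda<1$.

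For Part (2), I would use the alternative form of $\widehat{\mathcal{J}}_{\lambda}$ from Remark~\ref{Remark}(2): integrating over $\gamma'$ recovers $\widehat{\mathcal{E}}_{\lambda}(\rho,\epsilon)=\int_{\Sigma}\kappa_{\gamma}(dx\,dp)\int_{\R}dp'\,\mathcal{J}_{\lambda}(p,p')$ after the Jacobian factor $(|p'|^2+|V'(x)|^2)^{1/2}/|p'|$ cancels against the change of variables from $p'$ to $\rho'$ (this is the whole point of the weight in $\kappa_\gamma$). Thus $\widehat{\mathcal{E}}_{\lambda}(\rho,\epsilon)=\int_{\Sigma}\kappa_{\gamma}(dx\,dp)\,\mathcal{E}_{\lambda}(p)$, and since on the fiber $\mathbf{G}_V^{-1}(\rho,\epsilon)$ one has $\tfrac12 p^2=H-V(x)$, hence $|p|$ within an additive $O(1)$ of $\rho$ when $\rho\gg1$ (and $|p|,\rho$ both $O(1)$ otherwise), Part (1) transfers directly to give Part (2).

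For Parts (3) and (4), the strategy is the standard Lyapunov/drift computation, using $W(s)=H^{1/2}/(1+H^{1/2})$ and $W_{\lambda}=\lambda^{-1}\log(1+\lambda\rho)$ chosen so that a single collision produces a definite decrease at high momentum, matching regime (1) of List~\ref{RegimeList}. For Part (3), starting from $|p|>\lambda^{-1}$, a collision sends $p$ to a value concentrated near $\frac{1-\lambda}{1+\lambda}p$ with Gaussian fluctuation of size $O(1)$, so $H^{1/2}(s')=\tfrac{1}{\sqrt2}|p'|+O(1)$ drops by roughly $\tfrac{1}{\sqrt2}\cdot\frac{2\lambda}{1+\lambda}|p|\gtrsim\lambda|p|$ on average; after applying the map $t\mapsto t/(1+t)$, whose derivative at scale $H^{1/2}\gtrsim\lambda^{-1}$ is $\asymp\lambda^2$, the expected decrease of $W$ per collision is $\gtrsim\lambda^2\cdot|p|\ge\lambda$, and multiplying by the collision rate $\mathcal{E}_{\lambda}(p)\asymp\lambda|p|$ would overshoot — so in fact the displayed bound $\int\mathcal{J}_{\lambda}(s,s')(W(s)-W(s'))\ge c\lambda$ only asks for the collision-rate-weighted decrease, which is $\asymp\lambda|p|\cdot\lambda^2|p|$; I must be careful to note this is $\ge c\lambda$ whenever $|p|\ge\lambda^{-1}$ but need more care when $|p|$ is much larger, where I would instead bound $W(s)-W(s')$ below by a positive constant times $\chi(|p'|\ge\lambda^{-1}$ with $|p'|$ somewhat smaller than $|p|)$ and use that the collision rate is $\gtrsim\lambda|p|\gtrsim1$. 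For Part (4), the logarithmic Lyapunov function $W_{\lambda}$ is precisely engineered so that contracting $\rho$ to $\frac{1-\lambda}{1+\lambda}\rho$ changes $\lambda^{-1}\log(1+\lambda\rho)$ by $\lambda^{-1}\log\frac{1+\lambda\rho}{1+\frac{1-\lambda}{1+\lambda}\lambda\rho}\asymp\lambda^{-1}\log(1+\Omega(\lambda^2\rho/(1+\lambda\rho)))$, which is $\asymp1$ once $\lambda\rho\gtrsim1$; controlling the Gaussian tail so that the rare large upward fluctuations of $\rho'$ do not destroy this (they contribute $O(1)$ in probability times a $\log$ which is only logarithmically large) finishes it, after transferring from $\mathcal{J}_\lambda$ to $\widehat{\mathcal{J}}_\lambda$ via Remark~\ref{Remark}(2) and dividing by $\widehat{\mathcal{E}}_\lambda\asymp\max(1,\lambda\rho)$ as in Part (2).

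The main obstacle I expect is the uniformity in both $\lambda<1$ and the unbounded momentum variable simultaneously in Parts (3)--(4): one must track the Gaussian fluctuation in the collision outcome against the deterministic contraction across the full range $|p|\ge\lambda^{-1}$, ensuring the lower bounds do not degrade as $|p|\to\infty$ (where the super-exponential contraction of regime (1) helps) nor as $|p|\downarrow\lambda^{-1}$ (the boundary of the regime, where the drift is only barely visible). The bookkeeping for transferring a single-collision drift estimate from the Gaussian kernel $\mathcal{J}_\lambda$ on $\R$ to the homogenized kernel $\widehat{\mathcal{J}}_\lambda$ on $\Gamma_V$ — keeping the Jacobian and the averages over $\kappa_\gamma,\eta_{\gamma'}$ under control — is routine given Remark~\ref{Remark} but is where a careless constant could slip in; this is why the paper relegates it to "elementary techniques" and omits the details.
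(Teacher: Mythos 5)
Your treatment of Parts (1), (2), and (4) is sound and essentially the computation one would do: the substitution $u=\tfrac{1-\lambda}{2}p-\tfrac{1+\lambda}{2}p'$ reduces $\mathcal{E}_\lambda(p)$ to $\frac{4}{1+\lambda}\int_\R|\lambda p+u|e^{-u^2/2}du$, Remark~\ref{Remark}(1) gives $\widehat{\mathcal{E}}_\lambda(\gamma)=\int\kappa_\gamma(dx\,dp)\,\mathcal{E}_\lambda(p)$ directly (the comment about a Jacobian ``cancellation'' against $\kappa_\gamma$ is a little off, but the conclusion is right), and the logarithmic drift computation in Part (4) works because the contraction $\rho\mapsto\tfrac{1-\lambda}{1+\lambda}\rho$ shifts $\lambda^{-1}\log(1+\lambda\rho)$ by $\asymp\tfrac{\lambda\rho}{1+\lambda\rho}\gtrsim1$ for $\lambda\rho\ge1$, with the $u<-\lambda\rho$ tail contributing only $O(e^{-(\lambda\rho)^2/2})$.

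Part (3) has a genuine gap, and you have not caught it because of an arithmetic slip that you paper over rather than track down. You write that ``the expected decrease of $W$ per collision is $\gtrsim\lambda^2\cdot|p|$,'' but the decrease of $H^{1/2}$ is $\asymp\lambda|p|$, the derivative of $t\mapsto t/(1+t)$ at $t=H^{1/2}\asymp|p|$ is $\asymp|p|^{-2}$ (not $\lambda^2$ unless $|p|\asymp\lambda^{-1}$), so the decrease of $W$ per collision is $\asymp\lambda|p|\cdot|p|^{-2}=\lambda/|p|$, and weighting by the collision rate $\asymp\lambda|p|$ gives $\asymp\lambda^2$, \emph{uniformly} in $|p|\ge\lambda^{-1}$, not $\asymp\lambda^3|p|^2$. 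To check this cleanly: with $V=0$, $W(p)-W(p')=\tfrac{\sqrt{2}(|p|-|p'|)}{(\sqrt{2}+|p|)(\sqrt{2}+|p'|)}$ and the substitution above gives
$$\int_\R dp'\,\mathcal{J}_\lambda(p,p')\bigl(W(p)-W(p')\bigr)=\frac{8\sqrt{2}}{(1+\lambda)^2}\int_\R\frac{(\lambda p+u)\,|\lambda p+u|}{(\sqrt{2}+|p|)(\sqrt{2}+|p'|)}e^{-u^2/2}\,du\asymp\frac{1+(\lambda p)^2}{p^2}\asymp\lambda^2$$
for $|p|\ge\lambda^{-1}$, since $\int(a+u)|a+u|e^{-u^2/2}du\approx\sqrt{2\pi}(a^2+1)$ for $a\ge1$. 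Thus the stated inequality $\ge c\lambda$ fails for the given $W$ as $\lambda\downarrow0$; the ``overshoot'' you worried about at large $|p|$ is illusory, and the actual problem is at every $|p|\gtrsim\lambda^{-1}$. The bound that Lemma~\ref{LemHorseShoe} actually needs ($\mathbb{E}_s[\omega]\le C\lambda^{-1}$) is recovered if one takes the $\lambda$-dependent Lyapunov function $W_\lambda(s)=\tfrac{\lambda H^{1/2}(s)}{1+\lambda H^{1/2}(s)}$ instead, for which the same computation gives drift $\gtrsim\lambda\cdot\tfrac{(\lambda p)^2}{(1+\lambda p)^2}\ge c\lambda$; so this is most plausibly a typo in the definition of $W$ in the proposition statement, and your proof should be rewritten with that (or an equivalent) choice before the rest of Section~4 can be made to close.
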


In the lemma below, we set $l=1+2\sup_{x}V(x)$ as in the definition of $h$ (\ref{OurAitch}). 

\begin{lemma}\label{PreFreidlin}
Let $\mathbf{g}_{n}=(\mathbf{r}_{n},\mathbf{e}_{n})   \in \Gamma_{V}$ be the skeleton chain for the Freidlin-Wentzell process starting from $\gamma=(\rho,\epsilon)$.  Also let $\tilde{N}$ be the hitting time that $\mathbf{r}_{n}$ jumps below $\rho-1$.

\begin{enumerate}

\item  There is a $C>0$ such that for all $\gamma$ with $\rho>\sqrt{2l}$ and all non-negative $\widehat{f}\in B(\Gamma_{V})$,
$$
\mathbb{E}^{(\lambda)}_{\gamma}\Big[ \sum_{n=1}^{\tilde{N}-1 } \widehat{f}(\mathbf{g}_{n})    \Big]  \leq  C \Big( \sup_{ \rho'>\lambda^{-1} } \widehat{f}(\gamma ')   +\int_{\rho\leq \rho'\leq \lambda^{-1} } d\gamma' \widehat{f}(\gamma ') \Big).
$$

\item  Define the density $F^{(\lambda)}_{ \gamma }(\gamma' ):=\mathbb{E}^{(\lambda)}_{\gamma}\big[ \delta\big( \mathbf{g}_{\tilde{N}}-\gamma'    \big)  \big]$.  There is a  $C>0$ such that for all $\lambda\leq 1$ and  $\gamma$ with $\sqrt{2l}\leq \rho \leq \lambda^{-1}$,
$$
F^{(\lambda)}_{ \gamma }(\gamma' ) \leq  C e^{-\frac{1}{16}|\rho-\rho'  | }\chi(\rho'\leq \rho-1), $$
where $\gamma'=(\rho',\epsilon)$.

\item  For $\gamma$ with $\rho>\lambda^{-1}$, let $\mathbf{N}$ be the hitting time that $\mathbf{r}_{n}$ jumps below $\lambda^{-1}$.   There is a $C>0$ such that for all $\gamma$, $\lambda<1$, and non-negative $\widehat{f}\in B(\Gamma_{V})$, 
$$\mathbb{E}^{(\lambda)}_{\gamma}\Big[  \sum_{n=0}^{\mathbf{N}-1 } \frac{ \widehat{f}(\mathbf{g}_{n}) }{ \widehat{\mathcal{E}}_{\lambda}(\mathbf{g}_{n})     }\Big] \leq  C\lambda^{-1}\log\big(1+\lambda\rho    \big)\,\sup_{\rho>\lambda^{-1}}   \frac{ \widehat{f}(\gamma) }{ \widehat{\mathcal{E}}_{\lambda}(\gamma)     } .    $$

\item  Pick $L>0$.  There is a $C_{L}>0$ such that for all $\lambda<1$ and $\widehat{f}\in B(\Gamma_{V})$,  
$$ \int_{\rho'\leq  \sqrt{2L} }d\gamma'\,\overline{U}^{(\lambda)} \big(\gamma',\widehat{f}\big)   \leq C_{L} \int_{\Gamma_V}d\gamma'\,e^{-\frac{\lambda}{2}(\rho')^{2}}  \widehat{f}(\gamma')  .$$

\end{enumerate}

\end{lemma}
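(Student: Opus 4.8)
The plan is to reduce the bound on the spatially-averaged resolvent $\int_{\rho'\le\sqrt{2L}}d\gamma'\,\overline{U}^{(\lambda)}(\gamma',\widehat f)$ to a Nummelin-splitting identity exactly in the spirit of Part~(2) of Lemma~\ref{LemPick}, but now for the Freidlin-Wentzell process. First I would record the ``Hörmander-type'' minorization analogous to~(\ref{PreHormander}): the forward transition operator $\widehat{\mathcal T}_{\lambda,\frac1{\mathbf h}}=\mathbf h\,\overline U^{(\lambda)}_{\mathbf h}$ for the Freidlin-Wentzell resolvent chain satisfies $\widehat{\mathcal T}_{\lambda,\frac1{\mathbf h}}(\gamma,d\gamma')\ge c_{L,\mathbf h}\,d\gamma'$ uniformly for $\gamma,\gamma'$ with $\rho,\rho'\le\sqrt{2L}$ and $\lambda<1$ — this is the $\Gamma_V$-analog of the statement invoked from~\cite[Prop.~4.3]{Previous}, and the excerpt already asserts that ``the analogous statements of Section~\ref{SecBasic} all hold for $\overline U^{(\lambda)}_{\widehat h}$,'' so I would just cite that. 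With the chosen $\widehat h(\rho,\epsilon)=\chi(\rho\le\sqrt{2l})$ and $\mathbf h=1$, the minorization on the energy-$\le l$ region gives $\widehat{\mathcal T}_{\lambda,\frac12}(\gamma,d\gamma')\ge \widehat h''(\gamma)\,\widehat\nu(d\gamma')$ where $\widehat\nu$ is normalized Lebesgue measure on $\{\rho\le\sqrt{2l}\}$ and $\widehat h''$ is a constant multiple of $\chi(\rho\le\sqrt{2l})$.

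Next I would run the same algebra as in the proof of Lemma~\ref{LemPick}(2): using the series form $\overline U^{(\lambda)}=\frac1{\mathbf h}\sum_n \widehat{\mathcal T}_{\lambda,\frac1{\mathbf h}}(M_{\frac{\mathbf h-\widehat h}{\mathbf h}}\widehat{\mathcal T}_{\lambda,\frac1{\mathbf h}})^n$ and the split $M_{\sqrt{1-\widehat h/2}}\widehat{\mathcal T}_{\lambda,\frac12}M_{\sqrt{1-\widehat h/2}}\le \widehat{\mathcal T}_{\lambda,\frac12}-\tfrac1{10}\widehat h''\otimes\widehat\nu$, I get
\[
\widehat\nu\,\overline U^{(\lambda)}\widehat f\le 2\,\widehat\nu\sum_{n\ge0}\Big(\widehat{\mathcal T}_{\lambda,\frac12}-\tfrac1{10}\widehat h''\otimes\widehat\nu\Big)^n\widehat f
= 20\,\frac{\int_{\Gamma_V}d\gamma'\,\widehat\Psi_{\infty,\lambda}(\gamma')\widehat f(\gamma')}{\int_{\Gamma_V}d\gamma'\,\widehat\Psi_{\infty,\lambda}(\gamma')\widehat h''(\gamma')},
\]
by the Nummelin identity~\cite[Thm.~3]{Nummelin}, where $\widehat\Psi_{\infty,\lambda}$ is the invariant density of the Freidlin-Wentzell process. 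The key input here is that this invariant density is (proportional to) $e^{-\frac\lambda2\rho^2}$ on $\Gamma_V$ — equivalently the push-forward under $\mathbf G_V$ of the Gibbs density $e^{-\lambda H(s)}$, which follows from Remark~\ref{Remark}(1) together with reversibility of $\widehat{\mathcal J}_\lambda$ with respect to this measure (a computation using the definition~(\ref{FreidWentRates}) and the form of $\mathcal J_\lambda$). Then the denominator is a positive constant depending only on $l$, so $\widehat\nu\,\overline U^{(\lambda)}\widehat f\le C\int_{\Gamma_V}d\gamma'\,e^{-\frac\lambda2(\rho')^2}\widehat f(\gamma')$.

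To pass from $\widehat\nu\,\overline U^{(\lambda)}\widehat f$ to $\int_{\rho'\le\sqrt{2L}}d\gamma'\,\overline U^{(\lambda)}(\gamma',\widehat f)$ I would use the ``one-more-step'' lower bound exactly as in~(\ref{Forfeit}): from $\overline U^{(\lambda)}=\widehat{\mathcal T}_{\lambda,\frac12}+\widehat{\mathcal T}_{\lambda,\frac12}M_{1-\widehat h/2}\,\overline U^{(\lambda)}\ge \tfrac12\widehat{\mathcal T}_{\lambda,\frac12}\,\overline U^{(\lambda)}$ and the minorization $\widehat{\mathcal T}_{\lambda,\frac12}(\gamma,d\gamma')\ge c_{L,\frac12}\,d\gamma'$ on $\{\rho,\rho'\le\sqrt{2L}\}$, I obtain $\widehat\nu\,\overline U^{(\lambda)}\widehat f\ge \frac{c_{L,1/2}}{2}\int_{\rho'\le\sqrt{2L}}d\gamma'\,\overline U^{(\lambda)}(\gamma',\widehat f)$. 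Chaining the two displays gives the claimed inequality with $C_L$ absorbing all the constants. The main obstacle — and the one step worth writing out carefully — is establishing that $\widehat\Psi_{\infty,\lambda}\propto e^{-\frac\lambda2\rho^2}$: one must check that the original Gibbs measure $e^{-\lambda H}\,ds$ is invariant for the original dynamics (standard, since $\mathcal J_\lambda$ is detailed-balanced for the Maxwellian at temperature $1$ for a mass-$\lambda^{-1}$ particle, and the Hamiltonian flow preserves $e^{-\lambda H}$), and then that its image under $\mathbf G_V$ — which by Remark~\ref{Remark}(1) has density $\propto e^{-\lambda H}$ expressed in $\gamma$, i.e. $e^{-\frac\lambda2\rho^2}$ — is invariant for the Freidlin-Wentzell generator; the latter reduces to the reversibility of $\widehat{\mathcal J}_\lambda$ with respect to $e^{-\frac\lambda2\rho^2}d\gamma$, which one reads off from the explicit symmetric structure of $\mathcal J_\lambda(p,p')e^{-\frac\lambda2 p^2}$ in~(\ref{JumpRates}) after integrating against the $\kappa_\gamma,\eta_{\gamma'}$ in~(\ref{FreidWentRates}). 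Everything else is a transcription of the Section~\ref{SecBasic} arguments into the $\Gamma_V$ setting.
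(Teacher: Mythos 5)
Your proposal matches the paper's declared approach: the paper's entire proof of Part~(4) is the single sentence ``This follows analogously to Part~(2) of Lemma~\ref{LemPick},'' and you have correctly unpacked what that analogy requires — the $\Gamma_V$-analog of the minorization~(\ref{PreHormander}), the Nummelin splitting algebra from the proof of Lemma~\ref{LemPick}(2), the lower bound as in~(\ref{Forfeit}), and an identification (or bound) on the invariant density of the Freidlin-Wentzell process. You are also right to single out the identification $\widehat\Psi_{\infty,\lambda}\propto e^{-\frac\lambda2\rho^2}$ as the step that actually needs justification; everything else is a direct transcription.

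However, the detailed-balance sketch you give for that step is not quite as routine as you suggest, and you should be more careful here. The difficulty is that $\widehat{\mathcal J}_\lambda(\gamma,\gamma')$ in~(\ref{FreidWentRates}) is built asymmetrically: the source point is distributed according to $\kappa_\gamma$ (the microcanonical, $\big(p^2+|\frac{dV}{dx}|^2\big)^{-1/2}$-weighted measure on the level curve), while the target point is distributed according to $\eta_{\gamma'}$ (the normalized conditional from the disintegration $ds=\int d\gamma\,\eta_\gamma(ds)$). The reversibility $\mathcal J_\lambda(p,p')e^{-\frac\lambda2 p^2}=\mathcal J_\lambda(p',p)e^{-\frac\lambda2(p')^2}$ — which you can verify by completing the square in~(\ref{JumpRates}) — does give $e^{-\frac\lambda2\rho^2}\mathcal J_\lambda(p,p')=e^{-\frac\lambda2(\rho')^2}\mathcal J_\lambda(p',p)$ under the constraint $x=x'$, but after you integrate against $\kappa_\gamma\otimes\eta_{\gamma'}$ you do \emph{not} recover $e^{-\frac\lambda2(\rho')^2}\widehat{\mathcal J}_\lambda(\gamma',\gamma)$ directly, because swapping $\gamma\leftrightarrow\gamma'$ also swaps the roles of $\kappa$ and $\eta$ and these carry different $(x,p)$-dependent weights and normalizations. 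So naive detailed balance for $\widehat{\mathcal J}_\lambda$ w.r.t.\ $e^{-\frac\lambda2\rho^2}d\gamma$ does not immediately follow from the reversibility of $\mathcal J_\lambda$. What you actually need for the Nummelin step is only that the invariant density $\widehat\Psi_{\infty,\lambda}$ is bounded above and below by constant multiples of $e^{-\frac\lambda2\rho^2}$ uniformly in $\lambda<1$; you should either verify the stationarity (total-balance) identity $\int d\gamma\,e^{-\frac\lambda2\rho^2}\widehat{\mathcal J}_\lambda(\gamma,\gamma')=e^{-\frac\lambda2(\rho')^2}\widehat{\mathcal E}_\lambda(\gamma')$ directly using the disintegration relations of Remark~\ref{Remark}, or, failing an exact identity, produce two-sided bounds on $\widehat\Psi_{\infty,\lambda}/e^{-\frac\lambda2\rho^2}$ by comparing $\widehat{\mathcal J}_\lambda$ to its flat-torus ($V\equiv0$) version, where $\kappa_\gamma=\eta_\gamma$ and the reversibility is immediate.
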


\begin{proof}\text{ }\\
\noindent Part (1): Define the measure $\mu_{\gamma}^{(\lambda)}$ on $\Gamma_{V}$ such that for $\widehat{f}\in B(\Gamma_{V})$
\begin{eqnarray} \nonumber
\mathbb{E}^{(\lambda)}_{\gamma}\Big[ \sum_{n=1}^{\tilde{N}-1 }  \widehat{f}(\mathbf{g}_{n})     \Big] &= & 
   \sum_{n=1}^{\infty}\int_{ \rho-1\leq \rho_{m}  }d\gamma_{1}\cdots d\gamma_{n}   \,  \widehat{T}_{\lambda}( \gamma,\gamma_{1})  \prod_{m=1}^{n-1} \widehat{T}_{\lambda}(\gamma_{m},\gamma_{m+1}) \widehat{f}(\gamma_{n}) \\ &:=  &\int_{\Gamma_{V}} d\mu_{\gamma}^{(\lambda)}(\gamma') \widehat{f}(\gamma '),
\end{eqnarray}
where $\gamma_{m}=(\rho_{m},\epsilon_{m})$.  The measure $\mu_{\gamma}^{(\lambda)}$ has its support on the set of $(\rho',\epsilon')\in \Gamma_{V}$ with $\rho'\geq \rho-1$.
 \begin{align}\label{Lab}
 \int_{\Gamma_{V}} d\mu_{\gamma}^{(\lambda)}(\gamma')&\widehat{f}(\gamma ')\leq \sum_{n=1}^{\infty}\int_{ \rho-1\leq \mathbf{q}_{m}  }d\gamma_{n}\cdots d\gamma_{1}   \, e^{\frac{\lambda}{4}( -\rho_{n}^2  +\rho^{2}   )} \, \widehat{T}_{0}( \gamma,\gamma_{1})  \prod_{m=1}^{n-1} \widehat{T}_{0}(\gamma_{m},\gamma_{m+1} )  \widehat{f}(\gamma_{n})\nonumber  \\ & = \int_{\Gamma_{V}} d\mu_{\gamma}^{(0)}(\gamma')  e^{\frac{\lambda}{4}( -(\rho')^2  +\rho^{2}   )}\widehat{f}(\gamma ')\nonumber  \\  &\leq  \big(\sup_{ \rho'>\lambda^{-1} }  \widehat{f}(\gamma ') \big)\int_{\rho'>\lambda^{-1} }d\mu_{\gamma}^{(0)}(\gamma')\,e^{\frac{\lambda}{4}( -(\rho')^2  +\rho^{2}   )}   + e^{\frac{1}{2}}\int_{\rho'\leq \lambda^{-1} } d\mu_{\gamma}^{(0)}(\gamma') \widehat{f}(\gamma ')
 \end{align}
The first inequality in~(\ref{Lab}) is from the detailed balance-type inequality
\begin{align}\label{Yausers}
\widehat{T}_{\lambda}(\gamma,\gamma' )\leq e^{\frac{\lambda}{4}( -(\rho')^2  +\rho^{2}   )} \widehat{T}_{0}(\gamma,\gamma'),   
\end{align}
which we apply for each instance of $\widehat{T}_{\lambda}$.  Equation~(\ref{Yausers}) follows  by the formula defining  the jump rates $\widehat{\mathcal{J}}_{\lambda}(\gamma,\gamma') $ and the following three facts:  $\widehat{T}_{\lambda}( \gamma,\gamma')=\frac{ \widehat{\mathcal{J}}_{\lambda}(\gamma,\gamma')   }{ \widehat{\mathcal{E}}_{\lambda}(\gamma)    }$ by definition of $\widehat{T}_{\lambda}$,  $\widehat{\mathcal{E}}_{\lambda}(\gamma)\geq (1+\lambda)\widehat{\mathcal{E}}_{0}(\gamma)=(1+\lambda)8$, and
\begin{align*}
\mathcal{J}_{\lambda}(p,p')=(1+\lambda)\big|p-p'\big|e^{\frac{\lambda}{4}( -(p')^2  +p^{2}   )} e^{-\lambda^{2}\frac{1}{8}(p+p')^{2}-\frac{1}{8}(p-p')^{2}}\leq (1+\lambda)e^{\frac{\lambda}{4}( -(p')^2  +p^{2}   )} \mathcal{J}_{0}(p,p').
\end{align*}
 The second inequality in~(\ref{Lab}) is Holder's for the domain $\rho'> \lambda^{-1}$, and for the domain  $\rho''\leq \lambda^{-1}$ we use that
$$ e^{\frac{\lambda}{4}( -(\rho')^2  +\rho^{2}   )}\leq e^{\frac{1}{2}},  $$
since $\mu_{\gamma}^{(0)}(\gamma') $ has support over $\gamma'$ with $  \rho'\geq \rho-1 $ and $\rho\leq \lambda^{-1}$.    

However, we claim that there is a $c>0$ such that for all $\gamma$ with $\rho'>\sqrt{2l}$, then
\begin{align}\label{JebDurango}
  \mu_{\gamma}^{(0)}(d\gamma')\leq  c\chi(\rho'\geq \rho-1)   d\gamma' .
\end{align}
Let us assume this now and return to it at the end of the proof.  Plugging~(\ref{JebDurango}) into~(\ref{Lab}) gives the first inequality below
\begin{align}
 \mathbb{E}^{(\lambda)}_{\gamma}\Big[ \sum_{n=1}^{\tilde{N}-1 } \widehat{f}(\mathbf{g}_{n})    \Big] & \leq  c\big(\sup_{ \rho'>\lambda^{-1} } \widehat{f}(\gamma ') \big)\int_{\rho'>\lambda^{-1} }d\gamma'\,e^{\frac{\lambda}{4}( -(\rho')^2  +\rho^{2}   )}   + ce^{\frac{1}{2}}\int_{\rho\leq \rho'\leq \lambda^{-1} } d\gamma' \widehat{f}(\gamma ') \nonumber  \\ & \leq  8c\sup_{ \rho'>\lambda^{-1} } \widehat{f}(\gamma ')   + ce^{\frac{1}{2}}\int_{\rho\leq \rho'\leq \lambda^{-1} } d\gamma' \widehat{f}(\gamma ').
   \end{align}
The second inequality is from
\begin{align*}
\int_{\rho'>\lambda^{-1} }d\gamma'\,e^{\frac{\lambda}{4}( -(\rho')^2  +\rho^{2}   )}   \leq 4e^{\frac{1}{4\lambda}} \int_{ \rho'>\lambda^{-1}  }d\rho'e^{-\frac{\lambda}{4} (\rho')^2 }\leq 4e^{\frac{1}{4\lambda}}\lambda\int_{ \rho'>\lambda^{-1}  }d\rho'\,\rho'\,e^{-\frac{\lambda}{4} (\rho')^2 }\leq 8.
\end{align*}
 In the first inequality, we have bounded $d\gamma'\leq  4 d\rho'$, since the measure $d\gamma'$ is close to  $ d\rho'$ for regions with $\rho'\gg 1$.  The factor of four is used because there are two branches corresponding to positive and negative momentum, and we have multiplied by an extra factor of two to cover the error with the dominant term.  

Now we show~(\ref{JebDurango}).  The measure $\mu_{\gamma}^{(0)}$ can be written as 
\begin{align}\label{Garga}
\mu_{\gamma}^{(0)}(d\gamma')= \chi\big( \rho'\geq \rho -1  \big)    \widehat{T}_{0}(\gamma,\gamma' )+\int_{ \rho''\geq \rho-1   }d\gamma''\, \widehat{T}_{0}( \gamma,\gamma'')\mathbb{E}^{(\lambda)}_{\gamma''}\Big[ \sum_{n=1}^{\tilde{N}-1 } \delta(\mathbf{g}_{n}-\gamma')    \Big].
\end{align}
 Define the density $w_{\gamma}$ on $\Gamma_{V}$ as
$$w_{\gamma}(\gamma')=\int_{\rho''\leq \rho -1 }d\gamma''\widehat{T}_{0}(\gamma'',\gamma' ) \chi\big( \rho'\geq \rho-1   \big) .  $$ 
 The flat measure $d\gamma$   is invariant with respect to the transition rates 
  $\widehat{T}_{0}(\gamma ,\gamma')$, and thus for $\gamma'=(\rho',\epsilon')\in \Gamma_{V}$,
 \begin{multline} \label{PerroBlanco}
 \chi\big(\rho'\geq \rho-1\big)= w_{\gamma}(\gamma')+\chi\big( \rho'\geq \rho -1  \big)   \int_{ \rho''\geq \rho-1   }d\gamma''\,w_{\gamma}(\gamma'')\, \widehat{T}_{0}(\gamma'',\gamma' ) \\  +\int_{ \rho''\geq \rho-1   }d\gamma''\,w_{\gamma}(\gamma'')\,\int_{ \rho'''\geq \rho-1   }d\gamma'''\widehat{T}_{0}(\gamma'',\gamma''' ) \mathbb{E}^{(\lambda)}_{\gamma'''}\Big[ \sum_{n=1}^{\tilde{N}-1 } \delta(\mathbf{g}_{n}-\gamma')    \Big].  
 \end{multline}
This formula treats the influx of mass jumping from  the set $\{\gamma'\,\big|\rho'\leq \rho-1 \}$ as a source, and sums the expected occupation density  before the mass leaves the set $\{\gamma'\,\big|\rho'> \rho-1 \}$.   However, we can find a $c$ such that
\begin{align}\label{Nazarith}
\widehat{T}_{0}(\gamma,\gamma')\leq c\int_{ \rho''\geq \rho-1   }d\gamma''\,w_{\gamma}(\gamma'')\, \widehat{T}_{0}( \gamma'',\gamma')   
\end{align}
for all $\gamma$ with $\rho> \sqrt{2l}$ all all $\gamma'$ with $\rho'>\rho-1$.  If~(\ref{Nazarith}) holds, then plugging in to~(\ref{PerroBlanco}) and throwing away the first  term on the right side gives 
$$ \chi\big(\rho'\geq \rho-1\big)\geq \frac{1}{c}\chi\big( \rho'\geq \rho -1  \big)   \widehat{T}_{0}(\gamma,\gamma')+ \frac{1}{c}\int_{ \rho''\geq \rho-1   }d\gamma''\widehat{T}_{0}( \gamma,\gamma'')\, \mathbb{E}^{(\lambda)}_{\gamma''}\Big[ \sum_{n=1}^{\tilde{N}-1 } \delta(\mathbf{g}_{n}-\gamma')    \Big].  $$
We can employ this inequality in~(\ref{Garga}) to reach~(\ref{JebDurango}).

 To see~(\ref{Nazarith}), first observe that the transition kernel $\widehat{T}_{0}\big(\gamma,\gamma' \big)$ has the simpler form
\begin{align*} 
  \widehat{T}_{0}(\gamma ,\gamma' ) &= \frac{1}{8}\int_{\Sigma^{2}}\kappa_{\gamma}(dx\,dp)\eta_{\gamma' }(dx'\,dp')\,\delta_{0}(x-x')\mathcal{J}_{0}(p,p')\\  &= \frac{1}{8}\frac{ \int_{\mathbb{T} }dx\,\frac{  \big( |\rho'|^{2}-2V(x)+\big|\frac{dV}{dx}(x)\big|^{2}    \big)^{\frac{1}{2}}     }{  \big(\rho^{2}-2V(x)\big)^{\frac{1}{2}} \big(|\rho'|^{2}-2V(x)\big)^{\frac{1}{2}} }\mathcal{J}_{0}\Big( \epsilon (\rho^{2}-2V(x))^{\frac{1}{2}},\, \epsilon'(|\rho'|^{2}-2V(x))^{\frac{1}{2}} \Big)  }{ \int_{\mathbb{T} }dx\, \big(\rho^{2}-2V(x)\big)^{-\frac{1}{2}  }   }   ,   
  \end{align*}
 since the escape rates $\widehat{\mathcal{E}}_{0}(\gamma)=8$ are constant.  The second equality only holds when $\rho'\geq  \sup_{x}V(x)$, and otherwise there are two terms.  For $\gamma=(\rho,\epsilon)$ with $\rho\geq \sqrt{2l}$, the label component is $\epsilon=\pm 1$, and we can identify $\gamma$ with the quasi-momentum value $\mathbf{q}(\gamma)=\epsilon \rho$.  The rates describe what is nearly an unbiased random walk for the quasi-momentum.

The function $ w_{\gamma}(\gamma'')$ is uniformly bounded away from zero over any finite region of $
\gamma''$ with $\rho-1\leq \rho''\leq \rho+L $ for $L>0$.  It is sufficient to take, say, $L=1$.  For large enough $c'>0$, we thus have the first inequality below
    \begin{align*} \int_{ \rho''\geq   \rho-1 }d\gamma''\,w_{\gamma}(\gamma'')\, \widehat{T}_{0}(\gamma'',\gamma' ) & \geq  \frac{1}{c'} \int_{\rho-1\leq  \rho''\leq \rho+1    }d\gamma''\, \widehat{T}_{0}(\gamma'',\gamma' )\\ & \geq \frac{1}{c} \widehat{T}_{0}(\gamma,\gamma') .
\end{align*}
Finally, we can choose $c>0$ large enough to make the second inequality hold for all $\gamma,\gamma'$ with $\sqrt{2l}<\rho$ and $\rho'\leq \rho-1$.

\vspace{.5cm}

\noindent Part (2): We have the closed formula
\begin{align}\label{Sin}
\mathbb{E}^{(\lambda)}_{\gamma}\big[ \delta\big( \mathbf{g}_{\tilde{N}}-\gamma'    \big)  \Big] = \mathbb{E}^{(\lambda)}_{\gamma}\Big[\sum_{n=0}^{\tilde{N}-1}\widehat{T}_{\lambda}( \mathbf{g}_{n},\gamma')   \Big].  
\end{align}
This follows formally by the optional stopping theorem with stopping time $\tilde{N}$ and ``martingale"
  $$\sum_{n=1}^{m} \delta\big( \mathbf{g}_{n}-\gamma'    \big) -\widehat{T}_{\lambda}( \mathbf{g}_{n-1},\gamma'). $$
To be more rigorous, we should replace $\delta\big( \cdot-\gamma'    \big)$ by a family of indicators approximating it.

With~(\ref{Sin}), we can apply Part (1) with $\widehat{f}(\gamma)= \widehat{T}_{\lambda}(\gamma,\gamma')$ to get the inequality below for some $C'>0$.    
\begin{align}\label{CarCrash}
\mathbb{E}^{(\lambda)}_{\gamma}\big[\sum_{n=0}^{\tilde{N}-1}\widehat{T}_{\lambda}( \mathbf{g}_{n},\gamma')  \Big] & =\widehat{T}_{\lambda}( \gamma,\gamma')  + \mathbb{E}^{(\lambda)}_{\gamma}\Big[\sum_{n=1}^{\tilde{N}-1}\widehat{T}_{\lambda}( \mathbf{g}_{n},\gamma')   \Big]\nonumber  \\ & \leq  \widehat{T}_{\lambda}(\gamma,\gamma')  +C' \Big( \sup_{ \rho''>\lambda^{-1} }\widehat{T}_{\lambda}( \gamma'',\gamma')  +\int_{\rho\leq \rho''\leq \lambda^{-1} } d\gamma''\widehat{T}_{\lambda}( \gamma'',\gamma')  \Big).
\end{align}
However, there is a $c>0$ such that for all $\lambda< 1$ and $\gamma,\gamma'$ with $\rho,\rho'\leq \lambda^{-1}$, 
$$ \widehat{T}_{\lambda}(\gamma,\gamma')\leq ce^{-\frac{1}{16} |\rho-\rho'|   }\quad \text{and}\quad \sup_{\rho'\geq \lambda^{-1}   } \widehat{T}_{\lambda}(\gamma,\gamma')\leq ce^{-\frac{1}{16} |\rho-\lambda^{-1}|   } .$$
Plugging these in to~(\ref{CarCrash}) gives the uniform bound.

\vspace{.5cm}

\noindent Part (3):  We begin with the inequality,
$$   \mathbb{E}_{\gamma}^{(\lambda)}\Big[\sum_{n=1}^{\mathbf{N}} \frac{\widehat{f}(\mathbf{g}_{n})   }{\widehat{\mathcal{E}}_{\lambda}( \mathbf{g}_{n}  )} \Big]\leq  \Big(\sup_{\rho'>\lambda^{-1}}   \frac{\widehat{f}(\gamma')   }{\widehat{\mathcal{E}}_{\lambda}(\gamma' )} \Big)\mathbb{E}_{\gamma}^{(\lambda)}\big[\mathbf{N}\big] . $$
Let $W_{\lambda}:\Gamma_{V}\rightarrow \R^{+}$ be as in (4) of Proposition~\ref{NewStuff}.  It follows by (4) of Proposition~\ref{NewStuff}  that $cn+W_{\lambda}(\mathbf{g}_{n})$ is a supermartingale over the time interval $n\in[0,\mathbf{N}]$.  We have the inequalities,
$$ \mathbb{E}_{\gamma}^{(\lambda)}\big[\mathbf{N}\big]\leq \frac{1}{c}\mathbb{E}_{\gamma}^{(\lambda)}\big[W_{\lambda}(\gamma)-W_{\lambda}(\mathbf{g}_{\mathbf{N}})]\leq \frac{1}{c} W_{\lambda}(\gamma)=\frac{1}{c\lambda} \log\big(1+\lambda\rho    \big) , $$
where the first inequality is by the optional stopping theorem, and the second inequality is since $W_{\lambda}\geq 0$.

\vspace{.5cm}

\noindent Part (4):  This follows analogously to Part (2) of Lemma~\ref{LemPick}.

\end{proof}

 The inequality in  Part (2) of the lemma below is analogous to Theorem~\ref{ThmMain}.

\begin{lemma}\label{Freidlin}
Let  $\overline{U}^{(\lambda)}$ be the state-modulated resolvent of the function $\widehat{h}$.

\begin{enumerate}
\item  $\overline{U}^{(\lambda)}\widehat{f}$ satisfies the integral equation
$$ \widehat{f}(\gamma)  = \widehat{h}(\gamma)\overline{U}^{(\lambda)} \big(\gamma,\widehat{f}\big)+
\int_{\Gamma_{V}}d\gamma'\, \widehat{\mathcal{J}}_{\lambda}(\gamma,\gamma' )\Big(\overline{U}^{(\lambda)} \big(\gamma,\widehat{f}\big)-\overline{U}^{(\lambda)} \big(\gamma',\widehat{f}\big) \Big). $$

\item
 There is a $c>0$ such that for all measurable $\widehat{f}:\Gamma_{V}\rightarrow \R^{+}$, $\lambda<1$, and $\gamma \in \Gamma_{V}$
$$\overline{U}^{(\lambda)} \big(\gamma,\widehat{f}\big)\leq c\Big(\sup_{\gamma'\in \Gamma_{V}} A^{(\lambda)}(\gamma,\gamma')\,\widehat{f}(\gamma')+\int_{\Gamma_{V}}d\gamma'\, B^{(\lambda)}(\gamma,\gamma')\,\widehat{f}(\gamma')\Big),  $$
 where $ A^{(\lambda)}(\gamma,\gamma')$ and $  B^{(\lambda)}(\gamma,\gamma')$ are defined as
\begin{align*}
A^{(\lambda)}(\rho,\epsilon,\,\rho',\epsilon')&= \Big(1+\textup{min}\big(\rho,\lambda^{-1}\log(1+\lambda \rho)      \big)\,\chi\big( \rho'\geq \lambda^{-1}\big)\Big)\frac{1}{\widehat{\mathcal{E}}_{\lambda}(\rho',\epsilon') },\\    
B^{(\lambda)}(\rho,\epsilon,\,\rho',\epsilon')&=\big( 1+\textup{min}(\rho,\rho') \big)\,\chi( \rho\leq \lambda^{-1})\frac{1}{\widehat{\mathcal{E}}_{\lambda}(\rho',\epsilon') }.
 \end{align*}

\end{enumerate}

\end{lemma}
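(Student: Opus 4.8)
The plan is to establish Part (1) first by a standard renewal/generator argument, then bootstrap it via the decomposition of the trajectory at the two critical momentum thresholds $\lambda^{-1}$ and $\sqrt{2l}$, using the Freidlin-Wentzell estimates of Lemma~\ref{PreFreidlin} as the engine. For Part (1): the function $\overline{U}^{(\lambda)}\widehat f$ is the $\widehat h$-modulated resolvent, so it satisfies $(M_{\widehat h}-\widehat{\mathcal L}_\lambda)\overline{U}^{(\lambda)}\widehat f = \widehat f$ in the appropriate weak sense. Writing out the generator $\widehat{\mathcal L}_\lambda$ of the pure-jump Freidlin-Wentzell process (no deterministic part between collisions), $(\widehat{\mathcal L}_\lambda \psi)(\gamma)=\int_{\Gamma_V}d\gamma'\,\widehat{\mathcal J}_\lambda(\gamma,\gamma')(\psi(\gamma')-\psi(\gamma))$, gives exactly the stated integral equation. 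One should note the routine point that $\overline{U}^{(\lambda)}\widehat f$ is bounded (the analog of Lemma~\ref{LemPick}(1) holds, as remarked in the text), which justifies applying the generator and keeps all integrals finite.

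For Part (2), the strategy is to split the resolvent according to where the process sits. Using Proposition~\ref{LifeOperator}(1), $\overline{U}^{(\lambda)}(\gamma,\widehat f)=\mathbb E^{(\lambda)}_\gamma[\int_0^R dr\,\widehat f(G_r)]$ where $R$ is the killing time with rate $\widehat h$. Rewrite this in terms of the skeleton chain $\mathbf g_n$ and escape rates as $\mathbb E^{(\lambda)}_\gamma[\sum_n \widehat f(\mathbf g_n)/\widehat{\mathcal E}_\lambda(\mathbf g_n)\cdot(\text{survival weight})]$ — this is the continuous-time-to-discrete conversion, where each visit to $\mathbf g_n$ contributes an expected holding time $1/\widehat{\mathcal E}_\lambda(\mathbf g_n)$. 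Now decompose the trajectory: (i) if $\rho>\lambda^{-1}$, use Lemma~\ref{PreFreidlin}(3) to control the contribution up to the hitting time $\mathbf N$ of the level $\lambda^{-1}$; this produces the $\lambda^{-1}\log(1+\lambda\rho)$ prefactor times $\sup_{\rho'>\lambda^{-1}}\widehat f/\widehat{\mathcal E}_\lambda$, matching $A^{(\lambda)}$. (ii) Once the process is below $\lambda^{-1}$, use Lemma~\ref{PreFreidlin}(1) (with the hitting time $\tilde N$ of dropping by one unit of $\rho$) iterated, together with Lemma~\ref{PreFreidlin}(2) to control the overshoot distribution $F^{(\lambda)}_\gamma$, to descend in $O(1)$-sized steps of $\rho$ down to the region $\rho\le\sqrt{2l}$ where $\widehat h$ is supported; each descent step contributes a geometrically-controlled error thanks to the exponential decay in Lemma~\ref{PreFreidlin}(2). (iii) In the bottom region $\rho\le\sqrt{2l}$ the modulation by $\widehat h$ kills the process in bounded expected time, and one invokes Lemma~\ref{PreFreidlin}(4) to bound the contribution there by $C_L\int d\gamma'\,e^{-\frac\lambda2(\rho')^2}\widehat f(\gamma')$, which is in turn dominated by the $B^{(\lambda)}$-type integral (after observing $e^{-\frac\lambda2\rho'^2}\le$ const when $\rho'\le\lambda^{-1}$, absorbing $1/\widehat{\mathcal E}_\lambda(\gamma')$, and using that $\mathrm{min}(\rho,\rho')$ costs nothing when $\rho\le\lambda^{-1}$).

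The careful accumulation of contributions across all these phases is what produces the two kernels: the supremum-norm term with $A^{(\lambda)}$ comes purely from the contractive high-momentum regime (where the number of relevant visits is $\mathrm{min}(\rho,\lambda^{-1}\log(1+\lambda\rho))$ and each is weighted by $1/\widehat{\mathcal E}_\lambda$), and the integral term with $B^{(\lambda)}$ comes from the random-walk regime $\rho\le\lambda^{-1}$ where local-time-type considerations (as in Examples~\ref{ExpBrownRes}–\ref{LevyProcess}) give the $1+\mathrm{min}(\rho,\rho')$ Green's-function behavior. One must also handle the interaction between phases: when the process jumps from above $\lambda^{-1}$ down into $\rho'\le\lambda^{-1}$, that single jump deposits mass which then re-enters the random-walk analysis, and the total deposited mass must be shown to be $O(1)$ — this is exactly where Lemma~\ref{PreFreidlin}(1)'s integral term $\int_{\rho\le\rho'\le\lambda^{-1}}d\gamma'\widehat f(\gamma')$ and the overshoot bound get combined.

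The main obstacle I expect is the bookkeeping in step (ii)–(iii): cleanly organizing the descent through the strip $\sqrt{2l}\le\rho\le\lambda^{-1}$ so that the geometric series of descent-errors converges with a constant uniform in $\lambda$, and so that the measures deposited at each level are correctly folded into the final $A^{(\lambda)}$ and $B^{(\lambda)}$ kernels without losing the $\chi(\rho\le\lambda^{-1})$ restriction or the $1/\widehat{\mathcal E}_\lambda(\gamma')$ normalization. A secondary subtlety is checking that the $\log(1+\lambda\rho)$ factor from Lemma~\ref{PreFreidlin}(3), when combined with the crude bound $\log(1+\lambda\rho)\le\lambda\rho$, reproduces exactly the $\mathrm{min}(\rho,\lambda^{-1}\log(1+\lambda\rho))$ form in $A^{(\lambda)}$ rather than something weaker; this requires also tracking the trivial bound $\rho$ coming from summing at most $\rho$-many unit-sized descent steps in the alternative regime.
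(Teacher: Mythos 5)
Your proposal follows essentially the same route as the paper: Part (1) by the standard generator identity for the pure-jump process, and Part (2) via the skeleton-chain representation $\overline{U}^{(\lambda)}(\gamma,\widehat f)=\sum_n\mathbb E^{(\lambda)}_\gamma\big[(\prod_r\widehat C_\lambda(\mathbf g_r))\widehat f(\mathbf g_n)/\widehat{\mathcal E}_\lambda(\mathbf g_n)\big]$, decomposed at the thresholds $\lambda^{-1}$ (handled by Lemma~\ref{PreFreidlin}(3)) and $\sqrt{2l}$ (handled by Lemma~\ref{PreFreidlin}(4)), with the descent through the strip $\sqrt{2l}<\rho\le\lambda^{-1}$ organized by nested unit-drop hitting times $\tilde N_n$ and controlled by Lemma~\ref{PreFreidlin}(1)--(2), exactly as in the paper's terms (i)--(iii). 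The only omission is that you do not separate out the final extension to starting points with $\rho\le\sqrt{2l}$ (the paper's estimate~(\ref{Hungary}), obtained by one more application of the integral equation together with the Gaussian tail of $\widehat T_\lambda$), but this is a routine patch given the bound you already have on the complementary region.
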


\begin{proof} Part (1) follows easily from the definition of $\overline{U}^{(\lambda)}\widehat{f} $, so we focus Part (2).   By rearranging the integral equation from Part (1), we have the equation
\begin{eqnarray}
\overline{U}^{(\lambda)} \big(\gamma,\widehat{f}\big)&=& \frac{ \widehat{f}(\gamma) }{  \widehat{h}(\gamma)+\widehat{\mathcal{E}}_{\lambda}(\gamma)     } +
\int_{\Gamma_{V}}d\gamma'\, \frac{\widehat{\mathcal{J}}_{\lambda}(\gamma,\gamma' )}{  \widehat{h}(\gamma)+\widehat{\mathcal{E}}_{\lambda}(\gamma)         } \overline{U}^{(\lambda)} \big(\gamma',\widehat{f}\big)\nonumber  \\ \label{Hosers}
&=& \widehat{C}_{\lambda}(\gamma  )\frac{ \widehat{f}(\gamma) }{  \widehat{\mathcal{E}}_{\lambda}(\gamma)     } +
\widehat{C}_{\lambda}(\gamma  )\int_{\Gamma_{V}}d\gamma'\, \widehat{T}_{\lambda}(\gamma,\gamma' ) \,\overline{U}^{(\lambda)} \big(\gamma',\widehat{f}\big),     
\end{eqnarray}
where  $\widehat{T}_{\lambda}$ and $\widehat{C}_{\lambda}$ are defined as
$$\widehat{T}_{\lambda}\big( \gamma,\gamma'\big)=  \frac{\widehat{\mathcal{J}}_{\lambda}( \gamma,\gamma')}{ \widehat{\mathcal{E}}_{\lambda}(\gamma)}\hspace{1cm} \text{and} \hspace{1cm} \widehat{C}_{\lambda}(\gamma  )= \frac{\widehat{\mathcal{E}}_{\lambda}(\gamma)   }{ \widehat{h}(\gamma)+\widehat{\mathcal{E}}_{\lambda}(\gamma)    } .  $$
Consider the chain $\mathbf{g}_{n}=(\mathbf{r}_{n},\mathbf{e}_{n})   \in \Gamma_{V}$ starting at $\gamma$ and making jumps with transition kernel $\widehat{T}_{\lambda}$.  The kernel for $\overline{U}^{(\lambda)} $ can be written as  
\begin{align}\label{Chomsky}
 \overline{U}^{(\lambda)} \big(\gamma,\widehat{f}\big)=
\sum_{n=0}^{\infty}\mathbb{E}^{(\lambda)}_{\gamma}\Big[ \Big(\prod_{r=0}^{n}  \widehat{C}_{\lambda}(\mathbf{g}_{r}) \Big)  \frac{ \widehat{f}(\mathbf{g}_{n}) }{ \widehat{\mathcal{E}}_{\lambda}(\mathbf{g}_{n})     }\Big].     
\end{align}

First, we will show that the bound for $ \overline{U}^{(\lambda)} \big((\rho,\epsilon),\widehat{f}\big)$ when $\rho>\lambda^{-1}$ follows from the bound for $\overline{U}^{(\lambda)} \big((\rho,\epsilon),\widehat{f}\big)$ when  $\rho\leq \lambda^{-1}$.   For $\gamma=(\rho,\epsilon)$ with $\rho>\lambda^{-1}$, let $\mathbf{N}\in \mathbb{N}$ be the hitting time that $\mathbf{r}_{n}$ jumps below $\lambda^{-1}$.    The form~(\ref{Chomsky}) allows us to write
\begin{align} 
 \overline{U}^{(\lambda)} \big(\gamma,\widehat{f}\big) &=  \mathbb{E}^{(\lambda)}_{\gamma}\Big[  \sum_{n=0}^{\mathbf{N}-1 }  \Big(\prod_{r=0}^{n}  \widehat{C}_{\lambda}(\mathbf{g}_{r}) \Big)\frac{ \widehat{f}(\mathbf{g}_{n}) }{ \widehat{\mathcal{E}}_{\lambda}(\mathbf{g}_{n})     }\Big]+\mathbb{E}^{(\lambda)}_{\gamma}\Big[  \Big(\prod_{r=0}^{\mathbf{N}}  \widehat{C}_{\lambda}(\mathbf{g}_{r}) \Big)\overline{U}^{(\lambda)} \big(\mathbf{g}_{\mathbf{N} },\widehat{f}\big)\Big]\nonumber \\ & \leq  \mathbb{E}^{(\lambda)}_{\gamma}\Big[  \sum_{n=0}^{\mathbf{N}-1 } \frac{ \widehat{f}(\mathbf{g}_{n}) }{ \widehat{\mathcal{E}}_{\lambda}(\mathbf{g}_{n})     }\Big]+\mathbb{E}^{(\lambda)}_{\gamma}\Big[ \overline{U}^{(\lambda)} \big(\mathbf{g}_{\mathbf{N} },\widehat{f}\big) \Big]\label{Wolfgang} \\ & \leq  C\lambda^{-1}\log\big(1+\lambda \rho\big)\,\Big(\sup_{\rho'>\lambda^{-1}}\frac{\widehat{f}(\gamma') }{ \widehat{\mathcal{E}}_{\lambda}(\gamma')     }  \Big) +\sup_{\rho'\leq \lambda^{-1}}  \overline{U}^{(\lambda)} \big(\gamma',\widehat{f}\big) \nonumber .
 \end{align}
The first inequality uses that $ \widehat{C}_{\lambda}(\gamma)\leq 1$, and  the second inequality uses Part (3) of Lemma~\ref{PreFreidlin} for the first term, and the definition of the hitting time $\mathbf{N}$ for the second.  Thus, it is sufficient for us to prove the statement of this lemma for the domain of $\gamma=(\rho,\epsilon)$  with  $\rho\leq \lambda^{-1}$.

Next, we focus on the domain $ \sqrt{2l}< \rho\leq \lambda^{-1}$.   For $(\mathbf{r}_{0},\mathbf{e}_{0})=(\rho,\epsilon)$ with $\rho>\sqrt{2l}$,   let $\tilde{N}_{n}$ be the sequence of hitting times   such that $\tilde{N}_{0}=0$ and 
$$ \hspace{2cm} \tilde{N}_{n}= \inf\{ m>  \tilde{N}_{n-1} \,\big|\,  \mathbf{r}_{m}\leq \mathbf{r}_{\tilde{N}_{n-1}}-1    \},\hspace{1cm} n\geq 1.    $$
In the above, we can take the infimum of the empty set to be $\infty$, and clearly there can be at most $\lceil\rho \rceil$ of the hitting times $\tilde{N}_{n}$ which are not infinite.  Also let   $\mathbf{T}\in \mathbb{N}$ be the the first time $\tilde{N}_{n}$ such that $\mathbf{r}_{n}\leq \sqrt{2l}$, and $I$ be the number of $\tilde{N}_{n}$ in $[1, \mathbf{T})$.   Analogously to~(\ref{Wolfgang}), we have the inequality
\begin{align}\label{ChomskyII}
 \overline{U}^{(\lambda)} \big(\gamma,\widehat{f}\big) \leq  \frac{\widehat{f}(\gamma)}{\widehat{\mathcal{E}}_{\lambda}(\gamma) } +   
\mathbb{E}^{(\lambda)}_{\gamma}\Big[ \sum_{n=1}^{\mathbf{T}-1 } \frac{\widehat{f}(\mathbf{g}_{n})}{\widehat{\mathcal{E}}_{\lambda}(\mathbf{g}_{n}) } \Big]  + \mathbb{E}^{(\lambda)}_{\gamma}\Big[ \overline{U}^{(\lambda)} \big(\mathbf{g}_{\mathbf{T} },\widehat{f}\big) \Big].
\end{align}
  By breaking the time step interval $[1,\mathbf{T}]\subset \mathbb{N}$ into $I$ subintervals $[ \tilde{N}_{m-1}+1,   \tilde{N}_{m}]$ and using nested conditional expectations and the strong Markov property,
\begin{align}\label{Hitchens}
\mathbb{E}^{(\lambda)}_{\gamma}\Big[ \sum_{n=1}^{\mathbf{T}-1 }\frac{\widehat{f}(\mathbf{g}_{n})}{\widehat{\mathcal{E}}_{\lambda}(\mathbf{g}_{n}) }\Big] & = \mathbb{E}^{(\lambda)}_{\gamma}\Big[ \sum_{m=1}^{I}  \mathbb{E}^{(\lambda)}\Big[ \sum_{n=\tilde{N}_{m-1}}^{\tilde{N}_{m}-1 } \frac{\widehat{f}(\mathbf{g}_{n})}{\widehat{\mathcal{E}}_{\lambda}(\mathbf{g}_{n}) }\,\Big|\, \mathcal{F}_{\mathbf{g}_{\tilde{N}_{m-1}}   }    \Big] \Big]\nonumber  \\ & = \mathbb{E}^{(\lambda)}_{\gamma}\Big[ \sum_{m=1}^{I-1} \frac{\widehat{f}(\mathbf{g}_{\tilde{N}_{m}})}{\widehat{\mathcal{E}}_{\lambda}(\mathbf{g}_{\tilde{N}_{m}}) } \Big]+\mathbb{E}^{(\lambda)}\Big[ \sum_{m=1}^{I}\mathbb{E}^{(\lambda)}_{\mathbf{g}_{\tilde{N}_{m-1}}}\Big[ \sum_{n=1}^{\tilde{N}_{1} -1}  \frac{\widehat{f}(\mathbf{g}_{n})}{\widehat{\mathcal{E}}_{\lambda}(\mathbf{g}_{n}) }   \Big]\Big] .
\end{align}
To bound $\overline{U}^{(\lambda)} \big(\gamma,\widehat{f}\big)$, we must bound the terms
  $$ (\text{i}). \hspace{.2cm} \mathbb{E}^{(\lambda)}\Big[ \sum_{m=1}^{I}\mathbb{E}^{(\lambda)}_{\mathbf{g}_{\tilde{N}_{m-1}}}\Big[ \sum_{n=1}^{\tilde{N}_{1} -1} \frac{\widehat{f}(\mathbf{g}_{n})}{\widehat{\mathcal{E}}_{\lambda}(\mathbf{g}_{n}) }   \Big]\Big],\quad (\text{ii}). \hspace{.2cm} \mathbb{E}^{(\lambda)}_{\gamma}\Big[ \sum_{m=1}^{I-1}  \frac{\widehat{f}(\mathbf{g}_{\tilde{N}_{m}})}{\widehat{\mathcal{E}}_{\lambda}(\mathbf{g}_{\tilde{N}_{m}}) } \Big], \quad  (\text{iii}).\hspace{.2cm} \mathbb{E}^{(\lambda)}_{\gamma}\Big[ \overline{U}^{(\lambda)} \big(\mathbf{g}_{\mathbf{T} },\widehat{f}\big)   \Big].   $$

By Part (1) of Lemma~\ref{PreFreidlin}, there is $C>0$ such that (i) is smaller than
\begin{align}\label{HitchensII}
 \mathbb{E}^{(\lambda)}_{\gamma}\Big[ \sum_{m=1}^{I}\mathbb{E}^{(\lambda)}_{\mathbf{g}_{\tilde{N}_{m-1}}}\Big[  \sum_{n=1}^{\tilde{N}_{1} -1} & \frac{\widehat{f}(\mathbf{g}_{n})}{\widehat{\mathcal{E}}_{\lambda}(\mathbf{g}_{n}) }   \Big]\Big] \nonumber \\ & \leq C\mathbb{E}^{(\lambda)}_{\gamma}\big[ I\big]\sup_{ \rho'>\lambda^{-1} }\frac{\widehat{f}(\gamma ')}{\widehat{\mathcal{E}}_{\lambda}(\gamma ') }  +C\mathbb{E}^{(\lambda)}_{\gamma}\Big[ \sum_{m=1}^{I}\int_{\mathbf{r}_{\tilde{N}_{m-1}}\leq \rho'\leq \lambda^{-1} } d\gamma'\frac{\widehat{f}(\gamma ')}{ \widehat{\mathcal{E}}_{\lambda}(\gamma ')     } \Big] \nonumber  \\ & \leq C\rho \sup_{ \rho'>\lambda^{-1} } \frac{\widehat{f}(\gamma ')}{\widehat{\mathcal{E}}_{\lambda}(\gamma ') }+ C \int_{ \rho'\leq \lambda^{-1} } d\gamma'\big(1+\textup{min}(\rho', \rho)\big) \frac{\widehat{f}(\gamma ')}{ \widehat{\mathcal{E}}_{\lambda}(\gamma ')     } .
 \end{align}
 For both terms in the second inequality, we have used that the sequence $\mathbf{r}_{\tilde{N}_{m}}$ decreases by increments $\geq 1$ for $m=1,\dots, I$.  Thus $I\leq \rho$,   and the number of $m$ such that $\mathbf{r}_{\tilde{N}_{m}}-1$ is smaller than some value $\rho'\leq \lambda^{-1}$ is less than $1+ \textup{min}(\rho, \rho')$.

For (ii), we can write
\begin{align}\label{HitchensIII}
\mathbb{E}^{(\lambda)}_{\gamma}\Big[ \sum_{m=1}^{I-1}  \frac{\widehat{f}(\mathbf{g}_{\tilde{N}_{m}})}{\widehat{\mathcal{E}}_{\lambda}(\mathbf{g}_{\tilde{N}_{m}}) } \Big]=\int_{\Gamma_{V}}\upsilon^{(\lambda)}_{\gamma}(d\gamma') \frac{\widehat{f}(\gamma ')}{ \widehat{\mathcal{E}}_{\lambda}(\gamma ')     }     ,
 \end{align} 
where $\upsilon^{(\lambda)}_{\gamma}(d\gamma')= \mathbb{E}^{(\lambda)}_{\gamma}\big[ \sum_{m=1}^{I-1} \delta(\mathbf{g}_{\tilde{N}_{m}}-\gamma')   \big]$.   By nested conditional expectations and the strong Markov property, we have the equalities below
 \begin{align}\label{SilverSpoons}
  \upsilon^{(\lambda)}_{\gamma}(d\gamma')&=\mathbb{E}^{(\lambda)}_{\gamma}\Big[ \sum_{m=0}^{I-2} \mathbb{E}^{(\lambda)}\big[\delta\big(\mathbf{g}_{\tilde{N}_{m+1} }-\gamma'\big)\,\big|\,\mathbf{g}_{\tilde{N}_{m}}   \big] \Big]=\mathbb{E}^{(\lambda)}_{\gamma}\Big[ \sum_{m=0}^{I-2} F^{(\lambda)}_{ \mathbf{g}_{\tilde{N}_{m}} }(\gamma')  \Big]\nonumber \\ & \leq C' \mathbb{E}^{(\lambda)}_{\gamma}\Big[ \sum_{m=0}^{I-2} e^{-\frac{1}{16} |\mathbf{r}_{\tilde{N}_{m} }-\rho' | }\chi\big(\rho'\leq \mathbf{r}_{\tilde{N}_{m} } -1 \big)    \Big]\nonumber \\ & < C'\chi(\rho'\leq \rho -1)\sum_{n=1}^{\infty} e^{-\frac{1}{16} n }< 16C'\chi\big(\rho'\leq \lambda^{-1} \big) , 
  \end{align}
where $F^{(\lambda)}_{ \gamma }(\gamma' )=\mathbb{E}^{(\lambda)}_{\gamma}\big[\delta\big(\mathbf{g}_{\tilde{N}}-\gamma'\big)   \big]$ is defined as in Part (2) of Proposition~\ref{PreFreidlin}, and the  first inequality is for some $C'>0$  by Part (2) of Proposition~\ref{PreFreidlin}.   The second inequality uses that $\mathbf{r}_{\tilde{N}_{m} }$ decreases by at least one at each time step.  With~(\ref{HitchensIII}) and~(\ref{SilverSpoons}),   
$$ \mathbb{E}^{(\lambda)}_{\gamma}\Big[ \sum_{m=1}^{I-1}  \frac{\widehat{f}(\mathbf{g}_{\tilde{N}_{m}})}{\widehat{\mathcal{E}}_{\lambda}(\mathbf{g}_{\tilde{N}_{m}}) } \Big]\leq C\int_{\rho'\leq \lambda^{-1}}d\gamma'\frac{\widehat{f}(\gamma ')}{ \widehat{\mathcal{E}}_{\lambda}(\gamma ')     }  $$
 for $C=16 C'$.

For (iii),  we have the following relations
\begin{align}\label{Hornswaggle}
\mathbb{E}^{(\lambda)}_{\gamma}\Big[ \overline{U}^{(\lambda)}\big(\mathbf{g}_{\mathbf{T} }, \widehat{f}\big)\Big]& = \int_{\Sigma}d\gamma'\overline{U}^{(\lambda)} \big(\gamma',\widehat{f}\big) \,\mathbb{E}^{(\lambda)}_{\gamma}\big[ \delta(\mathbf{g}_{\mathbf{T} }-\gamma')  \Big] \leq c'\int_{\rho'\leq  \sqrt{2l} }d\gamma'\,\overline{U}^{(\lambda)}\big(\gamma', \widehat{f}\big) \nonumber \\ &   \leq c'' \int_{\Gamma_V}d\gamma'\,e^{-\frac{\lambda}{2}(\rho')^{2}}  \widehat{f}(\gamma')\leq  c''' \int_{\Gamma_V}d\gamma'\,e^{-\frac{\lambda}{4}(\rho')^{2}}  \frac{ \widehat{f}(\gamma ')}{  \widehat{\mathcal{E}}_{\lambda}(\gamma ') }\nonumber  \\  & \leq c''' \int_{\rho'\leq \lambda^{-1}}\frac{\widehat{f}(\gamma ')}{ \widehat{\mathcal{E}}_{\lambda}(\gamma ')     }+c''''\sup_{ \rho'>\lambda^{-1} }\frac{ \widehat{f}(\gamma ')}{  \widehat{\mathcal{E}}_{\lambda}(\gamma ') }     .
\end{align}
For the first inequality, the density $d\gamma'=\mathbb{E}^{(\lambda)}_{\gamma}\big[ \delta(\mathbf{g}_{\mathbf{T} }-\gamma')  \big]$ is smaller than some $c'>0$ by Part (2) of Proposition~\ref{PreFreidlin}.  The second equality is  Part (4) of Proposition~\ref{PreFreidlin}, and the third by the bounds for $\widehat{\mathcal{E}}_{\lambda}$ from Part (2) of Proposition~\ref{NewStuff}.    
 For the last inequality, we have split the integration into the domains of $\gamma=(\rho,\epsilon)$ with  $\rho\leq \lambda^{-1}$ and  $\rho>\lambda^{-1}$  similarly to the proof of Part (2) of Lemma~\ref{LemPick}.

 With (i)-(iii), we have shown that there is a $C>0$ such that for all $\lambda<1$ and all $\gamma=(\rho,\epsilon)$ with $\rho>\sqrt{2l}$, 
\begin{align}\label{Nolbach}
\overline{U}^{(\lambda)} \big(\gamma,\widehat{f}\big) \leq  \big\|\frac{\widehat{f}}{\widehat{\mathcal{E}}_{\lambda} }\big\|_{\infty}+ C\rho\sup_{ \rho'>\lambda^{-1} }\frac{\widehat{f}(\gamma ')}{\widehat{\mathcal{E}}_{\lambda}(\gamma ') } + C \int_{ \rho'\leq \lambda^{-1} } d\gamma'\big(1+\textup{min}(\rho,\rho')\big) \frac{\widehat{f}(\gamma ')}{ \widehat{\mathcal{E}}_{\lambda}(\gamma ')     }. 
\end{align}

We can use~(\ref{Nolbach}) to extend our bound to the domain of $\gamma=(\rho,\epsilon)$ with $\rho\leq \sqrt{2l}$. Starting with the integral equation~(\ref{Hosers}), 
\begin{align}\label{Hungary}
&\overline{U}^{(\lambda)} \big(\gamma,\widehat{f}\big)= \widehat{C}_{\lambda}(\gamma  )\frac{ \widehat{f}(\gamma) }{  \widehat{\mathcal{E}}_{\lambda}(\gamma)     } +
\widehat{C}_{\lambda}(\gamma  )\int_{\Gamma_{V}}d\gamma'\, \widehat{T}_{\lambda}(\gamma,\gamma' ) \,\overline{U}^{(\lambda)} \big(\gamma',\widehat{f}\big)\nonumber  \\  &\leq   \big\|\frac{\widehat{f}}{\widehat{\mathcal{E}}_{\lambda}}\big\|_{\infty} +
\Big( \sup_{\substack{ \lambda<1 \\ \rho,\rho'\leq \sqrt{2l} } }\widehat{T}_{\lambda}(\gamma,\gamma' )  \Big)  \int_{\rho'\leq \sqrt{2l} }d\gamma'\,  \overline{U}^{(\lambda)} \big(\gamma',\widehat{f}\big)   + \int_{\rho'>\sqrt{2l}}d\gamma'\Big(\sup_{\substack{ \lambda<1 \\ \rho\leq \sqrt{2l} }  } \widehat{T}_{\lambda}(\gamma,\gamma' ) \Big)\overline{U}^{(\lambda)} \big(\gamma',\widehat{f}\big)\nonumber  \\ & \leq \big\|\frac{\widehat{f}}{\widehat{\mathcal{E}}_{\lambda}}\big\|_{\infty} +C'\Big(\sup_{\rho'> \lambda^{-1}}\frac{ \widehat{f}(\gamma ')}{  \widehat{\mathcal{E}}_{\lambda}(\gamma ') } +\int_{ \rho'\leq \lambda^{-1} } d\gamma'\frac{\widehat{f}(\gamma ')}{ \widehat{\mathcal{E}}_{\lambda}(\gamma ')     } \Big)
\end{align}
for large enough constant $C'>0$.  For the second term in the second inequality, the supremum of $\widehat{T}_{\lambda}(\gamma,\gamma' )$ over $\lambda<1$ and $\gamma,\gamma'\in \Gamma_{V}$ is bounded, and we bound the integral $ \int_{\rho'\leq \sqrt{2l} }d\gamma'\,  \overline{U}^{(\lambda)} \big(\gamma',\widehat{f}\big)$ by the argument in~(\ref{Hornswaggle}).  For the third term in the second inequality of~(\ref{Hungary}), we have bounded $\overline{U}^{(\lambda)} \big(\gamma',\widehat{f}\big)$ with inequality~(\ref{Nolbach}) and have used that   $  \sup_{\substack{ \lambda<1 \\ \rho\leq \sqrt{2l} }  } \widehat{T}_{\lambda}(\gamma,\gamma' )$ has a Gaussian tail in $\rho'$ for $\gamma'=(\rho',\epsilon')$.

%\begin{eqnarray*}
%C'&:=& C\sup_{\lambda<1}\sup_{\gamma''}\Big(\int_{\Gamma_{V}}d\gamma'\big(1+\textup{min}(\rho'', \rho')\big)   \Big(\sup_{\rho\leq \sqrt{2l}} \widehat{T}_{\lambda}(\gamma,\gamma' ) \Big)  \Big)\\
%& \leq & C\sup_{\lambda<1}\Big( 2\int_{\rho'\geq \sqrt{2l}}d\gamma'\big(1+ \rho'\big)  \widehat{T}_{\lambda}(\sqrt{2l},1;\,\gamma' )+(1+\sqrt{2l})\sup_{\rho,\rho'\leq \sqrt{2l}}\widehat{T}_{\lambda}(\gamma,\gamma' ) \Big). 
%\end{eqnarray*}
%For $\gamma'$ with  $\rho'\geq \sqrt{2l}$ and $\gamma'$ with $\rho \leq \sqrt{2l}$, then the values $ \widehat{T}_{\lambda}(\gamma,\gamma' )$ are maximized over $\gamma$ when  $\rho=l$ and $\epsilon=\epsilon'$ (recall that for $\gamma'=( \rho', \epsilon' )$ with $\rho'\geq \sqrt{2l}\geq \sup_{x}\sqrt{V(x)}   $, then $\epsilon'=\pm 1$).   
%The above value is finite, since the rate jump rates $\widehat{T}_{\lambda}(\gamma,\gamma' ) $ are uniformly bounded  for $\gamma$ is a compact set and have uniformly dominated Gaussian tails in $\rho'$.  

\end{proof}

\section{Linking the original and the Freidlin-Wentzell dynamics}\label{SecTheLink}

Define the linear map $ \widehat{U}^{(\lambda)}:B(\Sigma)\rightarrow B(\Gamma_{V})$ to act as $\widehat{U}^{(\lambda)}f=\widehat{U^{(\lambda)}f}$, where the map $\,\widehat{\text{  }}:B(\Sigma)\rightarrow B(\Gamma_{V})$ is from Part (3) of Definition~\ref{Definitions}.  The lemma below   states that $\widehat{U}^{(\lambda)}f$ satisfies the same integral equation as  $\overline{U}^{(\lambda)}\widehat{f}$ in Part (1) of Lemma~\ref{Freidlin} with an  error term for which Part (3) gives a bound. 

\begin{lemma}\label{ErrorEquation}  Let $f\in B(\Sigma)$ be non-negative.  

\begin{enumerate} 
 \item
  $\widehat{U}^{(\lambda)}f:\Gamma_{V}\rightarrow \R^{+}$ satisfies the  equation
$$
\widehat{f}(\gamma)  = \widehat{h}(\gamma)\widehat{U}^{(\lambda)}\big(\gamma,f\big)+
\int_{\Gamma_{V}}d\gamma'\, \widehat{\mathcal{J}}_{\lambda}(\gamma,\gamma' )\Big(\widehat{U}^{(\lambda)}\big(\gamma,f\big)-\widehat{U}^{(\lambda)}\big(\gamma',f\big)   \Big) +\mathbf{E}_{\lambda}(\gamma),$$
where $\mathbf{E}_{\lambda}(\gamma)$ has the form
\begin{align}\label{Farse}
\mathbf{E}_{\lambda}(\gamma)= \int_{\Gamma_{V}}d\gamma'\, \widehat{\mathcal{J}}_{\lambda}( \gamma,\gamma')\widehat{U}^{(\lambda)}\big(\gamma',f\big)  - \big(  \widehat{ \mathcal{J}_{\lambda}U^{(\lambda)}f } \big) (\gamma)     ,
\end{align}
and the operator $ \mathcal{J}_{\lambda}$ acts on $B(\Sigma)$ with kernel density $\mathcal{J}_{\lambda}(p,p')$. The above is equivalent to the statement
\begin{align}\label{NightengaleII}
\widehat{U}^{(\lambda)}f= \overline{U}^{(\lambda)}\big(\widehat{f}- \mathbf{E}_{\lambda}   \big).
\end{align}

\item  There is a $C>0$ such that for all $\lambda<1$, $f\in B(\Sigma)$, and $H(x,p)>l$
$$ \big| U^{(\lambda)}\big((x,p),\,f) -\widehat{U}^{(\lambda)}\big(\gamma(x,p),f\big)\big|\leq C\,\textup{max}\big(\frac{1}{1+|p|},\lambda\big)\, \widehat{U}^{(\lambda)}\big(\gamma(x,p),f\big) . $$

\item For $\gamma=(\rho,\epsilon)\in \Gamma_{V}$, let $A_{ \gamma }\subset \Gamma_{V}$ be the set 
\begin{align*}
A_{\gamma}=\left\{  \begin{array}{cc} \big\{(\rho',\epsilon') \,\big|\,  \frac{1}{2}\rho \leq \rho'\leq 2\rho \text{ and }\, \epsilon'=\epsilon   \big\} & \rho> \sqrt{2l} \\ \emptyset  & \rho \leq  \sqrt{2l}  ,
    \end{array} \right.  
\end{align*}
and define the function $\mathcal{M}_{\lambda}:\Gamma_{V}^{2}\rightarrow \R^{+}   $
$$\mathcal{M}_{\lambda}(\gamma,\gamma')  =\textup{max}\big(\frac{1}{1+\rho'}, \lambda\big)\Big(\frac{1}{\rho^{2}}\big(1+(\rho'-\rho)^{2}+\lambda^{2}\rho^{2}\big) \chi\big(\gamma'\in  A_{ \gamma}   \big)+\chi\big(\gamma'\notin A_{\gamma}\big)  \Big).  $$
The error $\mathbf{E}_{\lambda}$ from Part (1) is a sum of parts $\mathbf{E}_{\lambda}'$ and $\mathbf{E}_{\lambda}-\mathbf{E}_{\lambda}'$  for which 
there is a $C>0$ such that for all $\lambda<1$, $f\in B(\Sigma)$,  $\gamma=(\rho,\epsilon)\in \Gamma_{V}$,
\begin{align*}
\big|\mathbf{E}_{\lambda}(\gamma)- \mathbf{E}_{\lambda}'(\gamma)\big| & \leq C\,e^{-\rho}\Big(\sup_{H'> \frac{1}{2}\lambda^{-2} } f(s')+\int_{ H'\leq \frac{1}{2}\lambda^{-2} }ds'\,f(s') \Big),\\ \big| \mathbf{E}_{\lambda}'(\gamma)\big| & \leq C\int_{\Gamma_{V}}d\gamma'\, \widehat{\mathcal{J}}_{\lambda}(\gamma,\gamma' )\mathcal{M}_{\lambda}(\gamma,\gamma')\,\widehat{U}^{(\lambda)}\big(\gamma',f\big). \\ 
\end{align*}

\end{enumerate}

\end{lemma}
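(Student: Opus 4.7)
My plan follows the three-part structure of the statement.

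For Part~(1), I apply the averaging operator $\widehat{\cdot}$ to both sides of the resolvent identity $(M_h-\mathcal{L}_\lambda)U^{(\lambda)}f=f$. The Hamiltonian part $p\partial_x-V'\partial_p$ integrates to zero on each connected (closed) level curve when paired against the conditional measure $\eta_\gamma$, and $h$ is constant on level curves, so these pieces collapse to $\widehat{h}\,\widehat{U}^{(\lambda)}f$. Expanding the remaining jump terms using Definition~\ref{Definitions}(4) and the disintegration identity in Remark~\ref{Remark}(1), and then adding and subtracting the Freidlin--Wentzell gain term $\int d\gamma'\,\widehat{\mathcal{J}}_\lambda(\gamma,\gamma')\widehat{U}^{(\lambda)}(\gamma',f)$, yields the claimed integral equation with the residual identified as the expression~(\ref{Farse}) for $\mathbf{E}_\lambda$. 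The equivalence $\widehat{U}^{(\lambda)}f=\overline{U}^{(\lambda)}(\widehat{f}-\mathbf{E}_\lambda)$ is then immediate from Lemma~\ref{Freidlin}(1), since $\overline{U}^{(\lambda)}$ uniquely inverts the Freidlin--Wentzell resolvent on bounded measurable functions.

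Part~(2) is the main technical step. For $H(x,p)>l$ the Hamiltonian flow on the level curve is periodic with period of order $1/|p|$ and the collision rate satisfies $\mathcal{E}_\lambda(p)\sim 1+\lambda|p|$; so the particle makes $\sim\min(|p|,\lambda^{-1})$ revolutions between collisions, and the reciprocal $\max(1/(1+|p|),\lambda)$ is the natural relative-error scale. I would use a first-collision decomposition: since $h$ vanishes along the Hamiltonian flow when $H(s)>l$, the representation in Proposition~\ref{LifeOperator}(1) gives
$$U^{(\lambda)}f(s)=\mathbb{E}_s\Bigl[\int_0^{T_1}f(S_r)\,dr\Bigr]+\mathbb{E}_s\bigl[U^{(\lambda)}f(S_{T_1}^{+})\bigr],$$
where $T_1$ is the first collision time and $S_r$ traces the Hamiltonian orbit through $s$ for $r<T_1$. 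Comparing this with the corresponding expression obtained by averaging against $\eta_{\gamma(s)}$, a Fourier/ergodic-averaging estimate quantifies the rate at which the $x$-dependence of $f(S_r)$ and $\mathcal{E}_\lambda(p(S_r))$ washes out along one revolution: the error is inversely proportional to the number of revolutions realized before a typical collision, producing the desired relative order $\max(1/(1+|p|),\lambda)$. The second term on the right admits the same bound by iteration via the strong Markov property, since the post-collision state inherits the same estimate.

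For Part~(3), I write
$$\mathbf{E}_\lambda(\gamma)=\int\kappa_\gamma(dx\,dp)\int dp'\,\mathcal{J}_\lambda(p,p')\widehat{U}^{(\lambda)}(\mathbf{G}_V(x,p'),f)-\int\eta_\gamma(dx\,dp)\int dp'\,\mathcal{J}_\lambda(p,p')U^{(\lambda)}((x,p'),f),$$
and define $\mathbf{E}_\lambda'$ to be the restriction of this expression to post-jump states with $H(x,p')>l$. On $\mathbf{E}_\lambda'$, Part~(2) applied pointwise to the integrand produces the integral against $\widehat{\mathcal{J}}_\lambda\mathcal{M}_\lambda$; the sharper $A_\gamma$-factor $\rho^{-2}(1+(\rho'-\rho)^2+\lambda^2\rho^2)$ arises from a finer comparison available when both level curves involved are of comparable high energy, by a Taylor-type expansion of the two nearby curves that improves on the coarse $\max(1/(1+|p'|),\lambda)$ estimate. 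The remainder $\mathbf{E}_\lambda-\mathbf{E}_\lambda'$ contains only jumps landing in $\{H\le l\}$: for large $\rho$ the Gaussian tail of $\mathcal{J}_\lambda$ supplies the $e^{-\rho}$ prefactor, and Lemma~\ref{LemPick}(2) controls $U^{(\lambda)}$ started in the low-energy region by the right-hand side of the claimed inequality. I expect Part~(2) to be the principal obstacle, since it demands a quantitative ergodic-averaging estimate along level curves that holds uniformly in $\lambda$ across the drift and random-walk regimes of List~\ref{RegimeList}.
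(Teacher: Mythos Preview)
Your overall strategy is sound and Part~(3) matches the paper closely, but there are two substantive points of divergence worth flagging.

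\textbf{Part (1).} Applying $\widehat{\,\cdot\,}$ directly to $(M_h-\mathcal{L}_\lambda)U^{(\lambda)}f=f$ is formally the right manipulation, and the Hamiltonian vector field does annihilate under $\eta_\gamma$. The paper, however, does not rely on $U^{(\lambda)}f$ lying in the domain of the differential operator. Instead it writes an integral renewal equation at the first collision, after first modifying the jump rates so that the escape rate $\mathcal{E}'_\lambda$ becomes constant along each level curve (adding vacuous self-jumps). This makes the exponential waiting-time factor $e^{-t\mathcal{E}'_\lambda}$ invariant under the Hamiltonian flow, so that the change of variables $(x,p)\mapsto(\mathbf{x}_{-t},\mathbf{p}_{-t})$ collapses the time integral cleanly. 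Your route gives the same identity once you justify differentiating, but the constant-rate trick is what replaces that regularity assumption.

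\textbf{Part (2).} This is where your plan has a real gap. Your first-collision decomposition produces two terms, and you propose to handle the post-collision term $\mathbb{E}_s[U^{(\lambda)}f(S_{T_1}^+)]$ ``by iteration via the strong Markov property, since the post-collision state inherits the same estimate.'' But the post-collision state lies on a \emph{different} level curve, and the estimate you are trying to prove is precisely $|U^{(\lambda)}-\widehat{U}^{(\lambda)}|\le C\max(\tfrac{1}{1+|p|},\lambda)\widehat{U}^{(\lambda)}$; you cannot assume it at $S_{T_1}^+$ and iterate without a separate contraction mechanism, and none is supplied. The paper avoids iteration entirely. Its key observation is that, with the modified rates from Part~(1), one has the identity
\[
\widehat{U}^{(\lambda)}\big(\gamma(x,p),f\big)=\int_\Sigma \kappa_{\gamma(x,p)}(dx'\,dp')\int_{\R}dp''\,\frac{\mathcal{J}_\lambda^{(x'),\prime}(p',p'')}{\mathcal{E}'_\lambda(x',p')}\,U^{(\lambda)}\big((x',p''),f\big),
\]
i.e.\ the $\eta_\gamma$-average $\widehat{U}^{(\lambda)}$ can itself be written as a one-step post-collision expectation of $U^{(\lambda)}$, but against the \emph{flow-invariant} measure $\kappa_\gamma$ on the orbit. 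Meanwhile $U^{(\lambda)}((x,p),f)$ has the identical representation with $\kappa_\gamma$ replaced by $\kappa_{(x,p)}^{(\lambda)}$, the law of the phase $(X_{t_1},P_{t_1^-})$ at first collision starting from $(x,p)$. Thus the entire difference is governed by the single Radon--Nikodym bound
\[
\sup_{s'}\Big|\frac{d\kappa_{(x,p)}^{(\lambda)}}{d\kappa_{\gamma(x,p)}}(s')-1\Big|\le c'\max\big(|p|^{-1},\lambda\big),
\]
which is exactly the ergodic-averaging estimate you describe, but applied once to the phase-at-collision measure rather than iterated through the trajectory. This one-step structure is the missing idea.

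\textbf{Part (3).} Your outline is essentially the paper's: restrict $\mathbf{E}_\lambda$ to post-jump energies above $l$, use Part~(2) for the factor $\max(\tfrac{1}{1+\rho'},\lambda)$, and obtain the refined $A_\gamma$-factor $\rho^{-2}(1+(\rho'-\rho)^2+\lambda^2\rho^2)$ from a derivative bound on $V\mapsto\mathcal{J}_\lambda(\sqrt{\rho^2-V},\sqrt{\rho'^2-V})$. The low-energy remainder is handled with the Gaussian tail of $\mathcal{J}_\lambda$ and Lemma~\ref{LemPick}(2), as you say. One small point: your displayed formula for $\mathbf{E}_\lambda$ mixes $\kappa_\gamma$ in one term with $\eta_\gamma$ in the other; reconciling these introduces an extra $O(\rho^{-2})$ discrepancy that is harmlessly absorbed into $\mathcal{M}_\lambda$, but it should be acknowledged.
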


\begin{proof}\text{  }\\
\noindent Part (1):\hspace{.2cm} Consider the altered escape and jump rates given by the following
\begin{eqnarray*}
 \mathcal{E}'_{\lambda}(x,p) &= &  \max_{x'\in \mathbb{T} }\mathcal{E}_{\lambda}\big(\sqrt{2H(x,p)-2V(x')}\big), \\  \mathcal{J}_{\lambda}^{(x),\prime}(p,p') &= &   \mathcal{J}_{\lambda}(p,p')  +\delta_{0}\big(p'-p\big) \Big(  \mathcal{E}_{\lambda}'(x,p)- \mathcal{E}_{\lambda}\big(\sqrt{2 H(x,p)-2 V(x)}\big)   \Big).
\end{eqnarray*}
With the above constructions $\int_{\R^{+}}dp'  \mathcal{J}_{\lambda}^{(x),\prime}(p,p')=  \mathcal{E}_{\lambda}'(x,p)$ for each $x\in \mathbb{T}$.  Replacing the jump rates  $ \mathcal{J}_{\lambda}  $ by $ \mathcal{J}_{\lambda}^{(x),\prime}$ makes no difference for the underlying process, since it merely adds a spatially-dependent rate of vacuous jumps $p\rightarrow p$  so that the escape rate is invariant of the Hamiltonian evolution. 

Let $\tau$ be a mean-$1$ exponential time and $t_{1}$ be the first ``collision" time according to our new jump rates.  By considering the stopping time $\textup{min}(\tau, t_{1})$,  we are lead to the integral equation
\begin{multline*}
 U^{(\lambda)}\big((x,p),\,f\big) = \frac{f(x,p)  }{1+ \mathcal{E}_{\lambda}'(x,p) }  +\big(1-h(x,p) \big)  \int_{0}^{\infty}dt\,e^{ - t-t \mathcal{E}_{\lambda}'(x,p)      }U^{(\lambda)}\big((\mathbf{x}_{t},\mathbf{p}_{t}),\,f\big)
 \\
\hspace{2cm}+ \int_{0}^{\infty}dt\,e^{ - t-t \mathcal{E}_{\lambda}'(x,p)      }\int_{\R}dp'  \mathcal{J}_{\lambda}^{(\mathbf{x}_{t}),\prime}(\mathbf{p}_{t},p')  U^{(\lambda)}\big((\mathbf{x}_{t},p'),\,f\big) ,
\end{multline*}
where $(\mathbf{x}_{t},\mathbf{p}_{t})$ is the phase space point at time $t$ when evolving according the Hamiltonian $H$ starting from the point $(x,p)$.  The above equation can be reshuffled to give
\begin{multline}\label{Hornets}
f(x,p)    = h(x,p) \big(1+\mathcal{E}_{\lambda}'(x,p)\big)  \int_{0}^{\infty}dt\,e^{ - t-t \mathcal{E}_{\lambda}'(x,p)      }U^{(\lambda)}\big((\mathbf{x}_{t},\mathbf{p}_{t}),\,f\big)  
 \\
+  \Big( U^{(\lambda)}\big((x,p),\,f\big)-\big(1+ \mathcal{E}_{\lambda}'(x,p)\big) \int_{0}^{\infty}dt\,e^{ - t-t\mathcal{E}_{\lambda}'(x,p)      }U^{(\lambda)}\big((\mathbf{x}_{t},\mathbf{p}_{t}),\,f\big) \Big)  \\
+ \big(1+ \mathcal{E}_{\lambda}'(x,p)\big)\int_{0}^{\infty}dt\,e^{ - t-t \mathcal{E}_{\lambda}'(x,p)      }\int_{\R}dp' \mathcal{J}_{\lambda}^{(\mathbf{x}_{t} ),\prime}(\mathbf{p}_{t},p') \Big( \big(U^{(\lambda)}\big((\mathbf{x}_{t},\mathbf{p}_{t}),\,f\big)-\big(U^{(\lambda)}f\big)\big(\mathbf{x}_{t},p'\big) \Big).
\end{multline}
The jump rates $ \mathcal{J}_{\lambda}^{(\mathbf{x}_{t} ),\prime}$ in the last term can be replaced by the original rates $\mathcal{J}_{\lambda}$, since the difference is merely the vacuous jumps.  Moreover, by integrating both sides over $(x,p)\in \Sigma$ against $\delta\big(2^{\frac{1}{2}}H^{\frac{1}{2}}(x,p)-\rho\big)\chi\big(n(x,p)=\epsilon\big)$, we obtain 
\begin{multline*}
\widehat{f}(\gamma)  = \widehat{h}(\gamma)\widehat{U}^{(\lambda)}\big(\gamma,f\big)+
\int_{\Gamma_{V}}d\gamma'\, \widehat{\mathcal{J}}_{\lambda}(\gamma,\gamma' )\Big(\big(\widehat{U}^{(\lambda)}\big(\gamma,f\big)-\big(\widehat{U}^{(\lambda)}\big(\gamma',f\big)    \Big)\\ +\Big(  \int_{\Gamma_{V}}d\gamma'\, \widehat{\mathcal{J}}_{\lambda}(\gamma,\gamma' )\big(\widehat{U}^{(\lambda)}\big(\gamma',f\big)  - \big(  \widehat{ \mathcal{J}_{\lambda}U^{(\lambda)}f } \big) (\gamma)     \Big),
\end{multline*}
for $\gamma=(\rho,\epsilon)$.  We will illustrate the computation for the first term on the right side of~(\ref{Hornets}):
\begin{align*}
\int_{\Sigma} &dxdp\,   \delta\big(2^{\frac{1}{2}}H^{\frac{1}{2}}(x,p)-\rho\big)\chi\big(n(x,p)=\epsilon\big) \big(1+\mathcal{E}_{\lambda}'(x,p)\big)  \int_{0}^{\infty}dt\,e^{ - t-t \mathcal{E}_{\lambda}'(x,p)      }U^{(\lambda)}\big((\mathbf{x}_{t},\mathbf{p}_{t}),\,f\big)\\ & =  \int_{0}^{\infty}dt\, \int_{\Sigma}    dxdp\,\delta\big(2^{\frac{1}{2}}H^{\frac{1}{2}}(x,p)-\rho\big)\chi\big(n(x,p)=\epsilon\big) \big(1+\mathcal{E}_{\lambda}'(x,p)\big) e^{ - t-t \mathcal{E}_{\lambda}'(x,p)      }U^{(\lambda)}\big((x,p),\,f\big)\\  &= \int_{\Sigma}   dxdp\, \delta\big(2^{\frac{1}{2}}H^{\frac{1}{2}}(x,p)-\rho\big)\chi\big(n(x,p)=\epsilon\big) U^{(\lambda)}\big((x,p),\,f\big)=\widehat{U}^{(\lambda)}\big((\rho,\epsilon),\,f\big).
\end{align*}
The first equality uses Fubini's theorem to pull out the integral $\int_{0}^{\infty}dt$, and then employs a change of variables over $\Sigma$ with the dynamical transformation map $(x,p)\rightarrow (\mathbf{x}_{-t},\mathbf{p}_{-t})$ (i.e. backwards time evolution according to the Hamiltonian $H$ for a time  interval of length $t$).   Thus $(\mathbf{x}_{t},\mathbf{p}_{t})$ maps to $(x,p)$, and other expressions do not change since $H^{\frac{1}{2}},\rho, \mathcal{E}_{\lambda}'$ are functions of the energy.  The second equality uses Fubini to commute the integral $\int_{0}^{\infty}dt$ in again, and then computes the integral  $\int_{0}^{\infty}dte^{ - t-t \mathcal{E}_{\lambda}'(x,p)      }=\big(1+\mathcal{E}_{\lambda}'(x,p)\big)^{-1} $.

The equality $\widehat{U}^{(\lambda)}f= \overline{U}^{(\lambda)}\big(\widehat{f}- E_{\lambda}   \big)$ follows  from Part (1) of Lemma~\ref{Freidlin}.  

\vspace{.5cm}

\noindent Part (2):  As in Part (1), let $t_{1}$ be the first ``collision" time with the vacuous jumps included.   If the particle begins at $(x,p)$ with $H(x,p)>l$, then the final time $R$ from Part (1) of Proposition~\ref{LifeOperator} can not occur over the interval $[0,t_{1}]$, since the modulating function $h$  has support on the set $H(x,p)\leq l$.
The value $U^{(\lambda)}\big((x,p),\,f\big)$ can be written as
$$
U^{(\lambda)}\big((x,p),\,f\big)=  \int_{ \Sigma  }\kappa_{(x,p)}^{(\lambda)}(dx'dp')     \int_{\R   }dp'' \frac{\mathcal{J}_{\lambda}^{(x'),\prime}(p',p'')}{\mathcal{E}_{\lambda}'(x',p') }U^{(\lambda)}\big((x',p''),\,f\big) , $$
 where the measure $\kappa_{(x,p)}^{(\lambda)}$ is supported on the set of $(x',p')$ with $\gamma(x',p')=\gamma(x,p)\in \Gamma_{V}$ and is defined by
$$\kappa_{(x,p)}^{(\lambda)}(dx'dp')=\mathbb{E}^{(\lambda)}_{(x,p)}\big[ \delta_{(x',p')}\big(X_{t_{1}},P_{t_{1}^{-}}    \big)    \big].$$
On the other hand,  
\begin{align}\label{Obv}
\widehat{U}^{(\lambda)}\big(\gamma(x,p),f\big)= \int_{ \Sigma  }\kappa_{\gamma(x,p) }(dx'dp')     \int_{\R   }dp'' \frac{\mathcal{J}_{\lambda}^{(x'),\prime}(p',p'')}{\mathcal{E}'_{\lambda}(x',p') } U^{(\lambda)}\big((x',p''),f\big),
\end{align}
where $\kappa_{\gamma }$ is the normalized measure from Definition~\ref{Definitions}.  By the bounds on the escape rates in Part (1) of Proposition~\ref{NewStuff}, the random variable $t_{1}$ is exponential with mean $\geq c\, \textup{min}\big(1, (\lambda  |p|)^{-1}\big)$ for some $c>0$ and all $\lambda<1$ and $p$.  Since the particle is traveling with velocity $p$, it will typically revolve around the level curve on the order of $\textup{min}(|p|, \lambda^{-1})$ times before $t_{1}$ occurs.  The Radon-Nikodym derivative  $\frac{d\kappa_{(x,p)}^{(\lambda)}}{ d\kappa_{\gamma(x,p) }}$ satisfies
\begin{align}\label{Radon}
\sup_{s'\in \Sigma}\Big|\frac{d\kappa_{(x,p)}^{(\lambda)}}{ d\kappa_{\gamma(x,p) } }(s')-1\Big|\leq c'\textup{max}\big(|p|^{-1}, \lambda \big)  
\end{align}
for some $c'$.  Thus,
\begin{align*}
\Big|U^{(\lambda)}\big((x,p)&,\,f\big)-\widehat{U}^{(\lambda)}\big(\gamma(x,p),f\big)\big|\\ &\leq  \int_{ \Sigma  }\kappa_{\gamma(x,p) }(dx'dp')  \Big|\frac{d\kappa_{(x,p)}^{(\lambda)}}{ d\kappa_{\gamma(x,p) }}(x',p') -1\Big|     \int_{\R   }dp'' \frac{\mathcal{J}_{\lambda}^{(x'),\prime}(p',p'')}{\mathcal{E}_{\lambda}'(x',p')} U^{(\lambda)}\big((x',p''),f\big)\\ &\leq c'\,\textup{max}\big(|p|^{-1}, \lambda\big) \,\widehat{U}^{(\lambda)}\big(\gamma(x,p),f\big),
\end{align*}
where we have applied the inequality~(\ref{Radon}) and used the formula~(\ref{Obv}) for $\widehat{U}^{(\lambda)}\big(\gamma(x,p),\,f\big)$.

\vspace{.5cm}

\noindent Part (3):  The expression $\mathbf{E}_{\lambda}(\gamma)$ can be written
\begin{align*}
\mathbf{E}_{\lambda}(\gamma) &=\int_{\Sigma} \kappa_{\gamma}(dxdp) \int_{\R}dp'\, \mathcal{J}_{\lambda}(p,p')\Big(  U^{(\lambda)}\big((x,p') ,\,f\big) - \widehat{U}^{(\lambda)}\big(\gamma(x,p'),\,f  \big)      \Big)\\ & = \int_{\Gamma_{V} }d\gamma' \widehat{\mathcal{J}}_{\lambda}(\gamma,\gamma' ) \int_{\Sigma} \eta_{\gamma'}(dx'dp') \int_{\Sigma} \kappa_{\gamma}(dxdp)\, \delta_{0}(x-x') \\  & \hspace{2cm}  \cdot\frac{  \mathcal{J}_{\lambda}(p,p')      }{ \widehat{\mathcal{J}}_{\lambda}(\gamma,\gamma')   }\Big(  U^{(\lambda)}\big((x,p') ,\,f\big)  -  \widehat{U}^{(\lambda)}\big(\gamma(x,p'),\,f  \big)    \Big).       
\end{align*}
   The second equality is by commuting integrals and using (2) of Remarks~\ref{Remark}:  
$$ \int_{\R}dp'=\int_{\Sigma}ds'\,\delta_{0}(x-x')=  \int_{\Gamma_{V} }d\gamma'  \int_{\Sigma} \eta_{\gamma'}(dx'dp') \delta_{0}(x-x').  $$   
  We define $\mathbf{E}_{\lambda}'(\gamma)$ to be the analogous expression with  the integration $ \int_{\Gamma_{V} }d\gamma'$ replaced by the restricted integration $\int_{\rho'> l }d\gamma'$.  The value $\mathbf{E}_{\lambda}'(\gamma) $ is bounded by 
\begin{multline}\label{Whores}
\big|\mathbf{E}_{\lambda}'(\gamma)\big| \leq  \int_{\rho'>\sqrt{2l} }d\gamma' \widehat{\mathcal{J}}_{\lambda}( \gamma,\gamma')         \sup_{s,s'\in\mathbf{G}_{V}^{-1}(\gamma') }   \Big|  U^{(\lambda)}\big(s,f\big)-U^{(\lambda)}\big(s',f\big) \Big| \\ \int_{\Sigma} \eta_{\gamma'}(dx'dp') \Big| \int_{\Sigma} \kappa_{\gamma}(dxdp)\, \delta_{0}(x-x')  \frac{  \mathcal{J}_{\lambda}(p,p')      }{ \widehat{\mathcal{J}}_{\lambda}( \gamma,\gamma')   }         -1\Big| . 
\end{multline}

For $\mathbf{E}_{\lambda}(\gamma)-\mathbf{E}_{\lambda}'(\gamma)$, notice that
\begin{align}\label{Balk}
\big|\mathbf{E}_{\lambda}(\gamma)-\mathbf{E}_{\lambda}'(\gamma)\big| & \leq \int_{\Sigma} \kappa_{\gamma}(dxdp) \int_{H(x,p')\leq l}dp'\, \mathcal{J}_{\lambda}(p,p') \Big(U^{(\lambda)}\big((x,p') ,\,f\big) + \widehat{U}^{(\lambda)}\big(\gamma(x,p'),\,f  \big)      \Big)\nonumber \\ & \leq l^{\frac{1}{2}}\Big(\sup_{\substack{(x,p)\in\mathbf{G}_{V}^{-1}(\gamma)\\ H(x,p')\leq l } }\mathcal{J}_{\lambda}(p,p')\Big)\int_{\mathbb{T} } dx\int_{H(x,p')\leq l}dp'\nonumber \\ & \hspace{4cm}  \cdot \Big(U^{(\lambda)}\big((x,p') ,\,f\big) + \widehat{U}^{(\lambda)}\big(\gamma(x,p'),\,f  \big)      \Big)\nonumber \\  & \leq Ce^{-\rho}\int_{H'\leq l}ds'\,U^{(\lambda)}\big(s' ,f\big) \nonumber \\ &  \leq  C e^{-\rho}\Big( \sup_{H'> \frac{1}{2}\lambda^{-2} } f(s')+\int_{ H'\leq \frac{1}{2}\lambda^{-2} }ds'\,f(s')     \Big),
\end{align}
where $H=H(s)$ and $H'=H(s')$.  The second inequality uses (3) of Remarks~\ref{Remark} and that for $\rho^{2}>l=1+2\sup_{x}V(x)$
$$\frac{ \big(\rho^{2}-2V(x')\big)^{-\frac{1}{2} }   }{ \int_{\mathbb{T} }dx\, \big(\rho^{2}-2V(x)\big)^{-\frac{1}{2} }   } \leq \frac{ \sup_{x}\big(\rho^{2}-2V(x)\big)^{-\frac{1}{2} }   }{ \inf_{x}\, \big(\rho^{2}-2V(x)\big)^{-\frac{1}{2} }   } \leq l^{\frac{1}{2}}.  $$
 The supremum over the values of $\mathcal{J}_{\lambda}(p,p')$ in the second line of~(\ref{Balk}) decays super-exponentially for large $\rho$, and we bounded it by a multiple of $e^{-\rho}$.   For the third inequality in~(\ref{Balk}), we have also used that  $\int_{\mathbb{T} } dx\int_{H(x,p')\leq l}dp'=  \int_{H\leq l}ds$, and  the fourth inequality is by Part (2) of Lemma~\ref{LemPick}.

The following two statements hold, where (I) is by Part (2), and  we prove (II)  below.  

\begin{enumerate}[(I).]

\item  There is a $C>0$ such that for all $\lambda<1$, $\gamma'=(\rho',\epsilon')\in\Gamma_{V}$ with $\rho'>\sqrt{2l}$, and non-negative $f\in B(\Gamma_{V})$,
 $$\sup_{s,s'\in\mathbf{G}_{V}^{-1}(\gamma') }   \Big|  U^{(\lambda)}\big(s,f\big)-U^{(\lambda)}\big(s',f\big) \Big| \leq C\,\textup{max}\big(\frac{1}{1+\rho'}, \lambda\big)\, \widehat{U}^{(\lambda)}\big(\gamma',f  \big).
$$

\item There is a $C>0$ such that for all $\lambda\leq 1$ and $\gamma, \gamma'\in \Gamma_{V}$,
\begin{multline*}
 \int_{\Sigma} \eta_{\gamma'}(dx'dp') \Big| \int_{\Sigma} \kappa_{\gamma}(dxdp)\, \delta_{0}(x-x')  \frac{  \mathcal{J}_{\lambda}(p,p')      }{ \widehat{\mathcal{J}}_{\lambda}(\gamma,\gamma' )   }         -1\Big|   \\ \leq \chi\big( \gamma'\notin A_{\gamma}  \big) +\frac{C}{\rho^{2} }\big(1+|\rho'-\rho|^{2}+\lambda^{2}\rho^{2}\big)\chi\big( \gamma'\in A_{\gamma}  \big) .
\end{multline*}

\end{enumerate}
For the right side of (I), we have used that $\rho'(x,p)=(2H(x,p))^{\frac{1}{2}}$ is close to $|p|$ for $|p|\gg 1$.   

Given (I) and (II) then  by~(\ref{Whores}) 
\begin{multline*}
\big|\mathbf{E}_{\lambda}'(\gamma)\big| \leq 2C \int_{\Gamma_{V} }d\gamma' \,\widehat{\mathcal{J}}_{\lambda}(\gamma,\gamma' ) \widehat{U}^{(\lambda)}\big(\gamma',f  \big)\\ \textup{max}\big(\frac{1}{1+\rho'}, \lambda\big)\, \Big(\chi\big( \gamma'\notin A_{\gamma}  \big) +\frac{C}{\rho^{2} }\big(1+|\rho'-\rho|^{2}+\lambda \rho^{2}\big)\chi\big( \gamma'\in A_{\gamma}  \big)   \Big).
\end{multline*}
   This would complete the proof.

Now we will prove (II).  Notice that the expression in (II) is $\leq 2$ for all $\gamma, \gamma'$, since  by  definition of $\widehat{\mathcal{J}}_{\lambda}$ 
$$ \int_{\Sigma} \eta_{\gamma'}(dx'dp')  \int_{\Sigma} \kappa_{\gamma}(dxdp)\, \delta_{0}(x-x')  \frac{  \mathcal{J}_{\lambda}(p,p')      }{ \widehat{\mathcal{J}}_{\lambda}(\gamma,\gamma' )   }    =1. $$       For $\gamma=(\rho,\epsilon)$ with $\rho\gg 1$,  the following bounds hold.

\begin{enumerate}[(i).]
\item There is a $c>0$ such that for all $\gamma$ with $\rho> \sqrt{2l}$,
$$\sup_{x',x''\in \mathbb{T} }\Big|\int_{\Sigma} \kappa_{\gamma}(dxdp)\, \Big(\delta_{0}(x-x')- \delta_{0}(x-x'')\Big)\Big|\leq  \frac{c}{(1+\rho)^{2}} . $$

\item There is a $c>0$ such that for all $\gamma_{1}\in \Gamma_{V}$ and $\gamma_{2}\in A_{\gamma_{1}}$,
$$ \sup_{ \substack{  \gamma_{1}=\gamma(x,p_{1})=\gamma(x',p_{1}') \\ \gamma_{2}=\gamma(x,p_{2})=\gamma(x',p_{2}')   }    } \Big|    \mathcal{J}_{\lambda}(p_{1},p_{2}) - \mathcal{J}_{\lambda}(p_{1}',p_{2}')\Big|\leq   c\frac{1}{\rho^{2}_{1} }\big(1+\big| \rho_{2}-\rho_{1}    \big|^{2}+\lambda^{2}\rho_{1}^{2}\big)\,\inf_{ \substack{  \gamma_{1}=\gamma(x,p_{1}) \\ \gamma_{2}=\gamma(x,p_{2})  } } \mathcal{J}_{\lambda}(p_{1},p_{2}).   $$
\end{enumerate}
Statement (i) concerns only the level curves of the Hamiltonian.  The value  $\int_{\Sigma} \eta_{\gamma}(dxdp)\, \delta_{0}(x-x')$ is the density (normalized to one) for the amount of time that the particle will spend at the point $x'$ when revolving once around the level curve $\gamma$.  By (3) of Remarks~\ref{Remark}, we have the following closed formula for $\rho>\sqrt{2l}$
$$\int_{\Sigma} \kappa_{\gamma}(dxdp)\, \delta_{0}(x-x')=\frac{ \big(\rho^{2}-2V(x')\big)^{-\frac{1}{2} }   }{ \int_{\mathbb{T}}dx\, \big(\rho^{2}-2V(x)\big)^{-\frac{1}{2} }   }.  $$
  Thus (i) is smaller than 
\begin{multline*}
\sup_{x',x''\in \mathbb{T} }\Big|\int_{\Sigma} \kappa_{\gamma}(dxdp)\, \Big(\delta_{0}(x-x')- \delta_{0}(x-x'')\Big)\Big|\leq \frac{2 \sup_{x}V(x)    }{ \rho^{2}-2\sup_{x}V(x)   }\leq \frac{4\sup_{x}V(x)}{\rho^{2}},
\end{multline*}
where the inequalities have used the restriction $\rho^{2}> 2l > 4\sup_{x}V(x)$.

For statement (ii), let $\gamma_{1}\in \Gamma_{V}$ and $\gamma_{2}\in A_{\gamma_{1}}$. For $0\leq V \leq 2\sup_{x}V(x) $,   
\begin{multline}\label{DerBnd}
\Big| \frac{d}{dV}\mathcal{J}_{\lambda}\big(  \sqrt{\rho_{1}^{2}-V} ,\sqrt{\rho_{2}^{2}-V}  \big)\Big|\leq \Big|\frac{1}{\sqrt{\rho_{2}^{2}-V}}-\frac{1}{\sqrt{\rho_{1}^{2}-V}}  \Big| e^{-\frac{1}{2}\big(\frac{1+\lambda}{2} \sqrt{\rho_{2}^{2}-V}-\frac{1-\lambda}{2}\sqrt{\rho_{1}^{2}-V}    \big)^{2}        }   \\ +\big|\rho_{2}-\rho_{1}   \big|\,\Big|\frac{1}{\sqrt{\rho_{2}^{2}-V}}-\frac{1-\lambda}{1+\lambda}\frac{1}{\sqrt{\rho_{1}^{2}-V}}  \Big|\,\Big| \sqrt{\rho_{2}^{2}-V}-\frac{1-\lambda}{1+\lambda}\sqrt{\rho_{1}^{2}-V}    \Big| e^{-\frac{1}{2}\big(\frac{1+\lambda}{2} \sqrt{\rho_{2}^{2}-V}-\frac{1-\lambda}{2}\sqrt{\rho_{1}^{2}-V}    \big)^{2}        }  \\ \leq c'  \frac{1}{\rho_{1}^{2}  }\big(1+|\rho_{2}-\rho_{1}|^{2}+\lambda^{2}\rho_{1}^{2}\big)      \mathcal{J}_{\lambda}\big(  \sqrt{\rho_{1}^{2}-V} ,\sqrt{\rho_{2}^{2}-V} \big).
\end{multline}

With the bound of the derivative relation~(\ref{DerBnd}), it follows that for any $0\leq V,V'\leq 2\sup_{x}V(x)$
\begin{multline*}
\Big|\mathcal{J}_{\lambda}\big( \sqrt{\rho_{1}^{2}-V'},\sqrt{\rho_{2}^{2}-V'}  \big)- \mathcal{J}_{\lambda}\big(  \sqrt{\rho_{1}^{2}-V},\sqrt{\rho_{2}^{2}-V} \big) \Big| \\ \leq e^{  2c'   }\frac{1}{\rho_{1}^{2}   }\big(1+|\rho_{2}-\rho_{1}|^{2}+\lambda^{2}\rho_{1}^{2}\big)  \mathcal{J}_{\lambda}\big(  \sqrt{\rho_{1}^{2}-V} ,\sqrt{\rho_{2}^{2}-V}  \big)   ,
\end{multline*}
where we have used that $\frac{1}{\rho_{1}^{2}   }\big(1+|\rho_{2}-\rho_{1}|^{2}+\lambda^{2}\rho_{1}^{2}\big)     \leq 2 $ by our constraints on $\rho_{1},\rho_{2}$.  This proves (ii) with $c=e^{c'}$.

By the triangle inequality and supremizing over everything
\begin{align*}
\sup_{\gamma'=\gamma(x',p')    }\Big| &\int_{\Sigma} \kappa_{\gamma}(dxdp)\, \delta_{0}(x-x')  \frac{  \mathcal{J}_{\lambda}(p,p')      }{ \widehat{\mathcal{J}}_{\lambda}( \gamma,\gamma')   }         -1\Big|\\ & \leq  \sup_{x',x''\in \mathbb{T} }\Big|\int_{\Sigma} \kappa_{\gamma}(dxdp)\, \Big(\delta_{0}(x-x')- \delta_{0}(x-x'')\Big)\Big|\,\Big(\sup_{ \substack{  \gamma=\gamma(x,p_{1}) \\ \gamma'=\gamma(x,p_{2})  } } \frac{\mathcal{J}_{\lambda}(p_{1},p_{2}) }{  \widehat{\mathcal{J}}_{\lambda}(\gamma,\gamma' )          } \Big)\\ &+\Big( \sup_{ \substack{  \gamma=\gamma(x,p_{1})=\gamma(x',p_{1}') \\ \gamma'=\gamma(x,p_{2})=\gamma(x',p_{2}')   }    } \Big|    \mathcal{J}_{\lambda}(p_{1},p_{2}) - \mathcal{J}_{\lambda}(p_{1}',p_{2}')\Big| \Big)\Big(\sup_{x'\in \mathbb{T} }\int_{\Sigma} \kappa_{\gamma}(dxdp)\, \delta_{0}(x-x') \Big)\\ &\leq \frac{2c}{\rho^{2}   }+\frac{2c}{\rho^{2} }\big(1+|\rho'-\rho|^{2}+\lambda^{2}\rho^{2}\big)      \leq \frac{4c}{ \rho^{2} }\big(1+|\rho'-\rho|^{2}+\lambda^{2}\rho^{2}\big)       .
\end{align*}

\end{proof}

\begin{lemma}\label{LemHorseShoe}
There is a $C>0$ such that for all non-negative $f\in B(\Sigma)$  and $\lambda<1$,
\begin{eqnarray*}
\sup_{H\geq \frac{1}{2}\lambda^{-2} } U^{(\lambda)} \big(s,f\big)& \leq & C\lambda^{-1}\sup_{H> \frac{1}{2}\lambda^{-2} } f(s) +\sup_{H\leq \frac{1}{2}\lambda^{-2}   } U^{(\lambda)} \big(s,f\big)\text{, or}\\
\sup_{H> \frac{1}{2}\lambda^{-2} } U^{(\lambda)} \big(s,f\big)& \leq & C\lambda^{-1}\sup_{H\geq \frac{1}{2}\lambda^{-2} } f(s) +\sup_{l<H\leq \frac{1}{2}\lambda^{-2}   } U^{(\lambda)} \big(s,f\big)+Ce^{-\lambda^{-1}}\|f\|_{\infty},     
\end{eqnarray*}
where $H=H(s)$.

\end{lemma}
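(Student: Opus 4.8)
The plan is to stop the process at the first time $\tau$ it enters the band $\{H\leq\frac12\lambda^{-2}\}$, to show $\mathbb{E}^{(\lambda)}_s[\tau]\leq C\lambda^{-1}$ uniformly in the (possibly very large) starting energy $H(s)$, and then to control the terminal distribution of $S_\tau$. We may assume $\lambda$ is small enough that $\frac12\lambda^{-2}>2l$; for $\lambda$ in a fixed compact subset of $(0,1)$ both inequalities follow after enlarging $C$ from the boundedness of $U^{(\lambda)}$ on $B(\Sigma)$ (Part (1) of Lemma~\ref{LemPick}). Fix $s$ with $H(s)\geq\frac12\lambda^{-2}$. Since the modulating function $h$ from~(\ref{OurAitch}) is supported on $\{H\leq l\}$ while $H(S_r)>\frac12\lambda^{-2}>l$ for all $r<\tau$, Part (1) of Proposition~\ref{LifeOperator} and the strong Markov property at $\tau$ give
$$U^{(\lambda)}(s,f)=\mathbb{E}^{(\lambda)}_s\Big[\int_0^\tau f(S_r)\,dr\Big]+\mathbb{E}^{(\lambda)}_s\big[U^{(\lambda)}(S_\tau,f)\big],$$
and since $H(S_r)>\frac12\lambda^{-2}$ for $r<\tau$, the first term is at most $\big(\sup_{H>\frac12\lambda^{-2}}f\big)\,\mathbb{E}^{(\lambda)}_s[\tau]$.

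For $\mathbb{E}^{(\lambda)}_s[\tau]\leq C\lambda^{-1}$ I would use the Lyapunov function $W(s)=H^{1/2}(s)/(1+H^{1/2}(s))\in[0,1)$ of Part (3) of Proposition~\ref{NewStuff}. As $W$ depends only on the energy, the Hamiltonian part of $\mathcal{L}_\lambda$ annihilates it, so that part reads $(\mathcal{L}_\lambda W)(s)=-\int_\Sigma ds'\,\mathcal{J}_\lambda(s,s')(W(s)-W(s'))\leq-c\lambda$ whenever $|p|>\lambda^{-1}$; hence $W(S_{t\wedge\tau_0})+c\lambda(t\wedge\tau_0)$ is a supermartingale for $\tau_0$ the first exit time of $\{|p|>\lambda^{-1}\}$, and optional stopping gives $\mathbb{E}^{(\lambda)}_{s'}[\tau_0]\leq W(s')/(c\lambda)\leq1/(c\lambda)$ for every starting point $s'\in\{|p|>\lambda^{-1}\}$. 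At $\tau_0$ one has $|p|\leq\lambda^{-1}$, hence $H\leq\frac12\lambda^{-2}+\sup_xV$; either $H(S_{\tau_0})\leq\frac12\lambda^{-2}$ and $\tau=\tau_0$, or $S_{\tau_0}$ lies in the thin shell $\{|p|\leq\lambda^{-1}\}\cap\{H>\frac12\lambda^{-2}\}$, which forces $|p|$ to within $O(\lambda)$ of $\lambda^{-1}$. On that shell the collision rate is $\Theta(1)$ (Part (1) of Proposition~\ref{NewStuff}) and a single collision lands in $\{H\leq\frac12\lambda^{-2}\}$ with probability at least some $c_1>0$ uniform in $\lambda$, since from $|p|\approx\lambda^{-1}$ the collision density $\mathcal{J}_\lambda(p,\cdot)$ assigns $\Theta(1)$ mass to $\{|p'|\leq\tfrac12\lambda^{-1}\}$. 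Alternating between excursions in $\{|p|>\lambda^{-1}\}$ (each of expected duration $\leq1/(c\lambda)$) and collisions out of the shell (each of expected duration $O(1)$, each ending the alternation with probability $\geq c_1$), the number of cycles is dominated by a geometric variable of parameter $c_1$, whence $\mathbb{E}^{(\lambda)}_s[\tau]\leq C\lambda^{-1}$. Combined with the crude bound $\mathbb{E}^{(\lambda)}_s[U^{(\lambda)}(S_\tau,f)]\leq\sup_{H\leq\frac12\lambda^{-2}}U^{(\lambda)}(s',f)$, this gives the first inequality.

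For the second inequality I would instead split the terminal term as
$$\mathbb{E}^{(\lambda)}_s\big[U^{(\lambda)}(S_\tau,f)\big]\leq\sup_{l<H\leq\frac12\lambda^{-2}}U^{(\lambda)}(s',f)+\mathbb{P}^{(\lambda)}_s\big[H(S_\tau)\leq l\big]\,\big\|U^{(\lambda)}f\big\|_\infty,$$
where $\|U^{(\lambda)}f\|_\infty\leq C_\lambda\|f\|_\infty$ with $C_\lambda$ growing at most polynomially in $\lambda^{-1}$ by Part (1) of Lemma~\ref{LemPick}. The overshoot event $\{H(S_\tau)\leq l\}$ forces the jump at time $\tau$ to carry the process from $\{H>\frac12\lambda^{-2}\}$ directly down to momentum $O(1)$; since $\mathcal{J}_\lambda(p,\cdot)$ concentrates near $\tfrac{1-\lambda}{1+\lambda}p$ with order-one Gaussian fluctuations, such a jump out of momentum $p$ has probability $\leq e^{-cp^2}$, and summing these over the collisions occurring before $\tau$ — whose momenta decrease geometrically through the factor $\tfrac{1-\lambda}{1+\lambda}$ — gives $\mathbb{P}^{(\lambda)}_s[H(S_\tau)\leq l]\leq C\lambda^{-1}e^{-c\lambda^{-2}}$ uniformly in $H(s)$, so the correction is $\leq Ce^{-\lambda^{-1}}\|f\|_\infty$ for $\lambda$ small. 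Recording one bound or the other amounts precisely to whether one tracks this overshoot probability, which is why the statement is phrased with ``or''. I expect the main obstacle to be the uniform-in-$H(s)$ control of $\mathbb{E}^{(\lambda)}_s[\tau]$, and within it the thin shell $\{|p|\leq\lambda^{-1}\}\cap\{H>\frac12\lambda^{-2}\}$ where Part (3) of Proposition~\ref{NewStuff} supplies no drift.
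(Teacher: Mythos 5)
Your overall strategy matches the paper's: stop at the hitting time $\omega$ of the band $\{H\leq\frac12\lambda^{-2}\}$, bound $\mathbb{E}^{(\lambda)}_s[\omega]$ by $C\lambda^{-1}$ using the Lyapunov function $W$ of Proposition~\ref{NewStuff}(3), and treat the overshoot into $\{H\leq l\}$ separately. You are right to flag the thin shell $\{|p|\leq\lambda^{-1}\}\cap\{H>\frac12\lambda^{-2}\}$, since Proposition~\ref{NewStuff}(3) supplies drift only for $|p|>\lambda^{-1}$; the paper simply asserts the supermartingale property on $[0,\omega]$ without addressing this, and your geometric-cycle patch is a legitimate fix.

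But the justification inside that patch is wrong. From $|p|\approx\lambda^{-1}$, the jump density $\mathcal{J}_\lambda(p,\cdot)$ is concentrated near $\frac{1-\lambda}{1+\lambda}p\approx\lambda^{-1}-2$ with $O(1)$ Gaussian width, so its mass on $\{|p'|\leq\frac12\lambda^{-1}\}$ is $e^{-\Theta(\lambda^{-2})}$, not $\Theta(1)$. The reason a single collision exits the thin shell with probability bounded below is rather that the shell has $|p|$-width $O(\lambda)$ whereas a collision shifts $|p|$ downward by $\Theta(1)$ with $\Theta(1)$ probability; that is what makes the cycle argument work.

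For the overshoot term the two routes genuinely diverge. Your union bound over pre-$\omega$ collisions can be made to work, but you should note that the number of collisions whose pre-jump momentum lies in any dyadic block $[2^k\lambda^{-1},2^{k+1}\lambda^{-1}]$ is $\Theta(\lambda^{-1})$, not $O(1)$, so the geometric factor $\frac{1-\lambda}{1+\lambda}$ does not by itself control the count. More substantively, you invoke $\|U^{(\lambda)}f\|_\infty\leq C_\lambda\|f\|_\infty$ with $C_\lambda$ polynomial in $\lambda^{-1}$, which is not what Lemma~\ref{LemPick}(1) gives; it asserts boundedness without a rate, and extracting a polynomial rate would require unpacking $\Psi_{\infty,\lambda}(h)\sim\lambda^{1/2}$ together with a $\lambda$-uniform ergodicity constant. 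The paper sidesteps this: it conditions on the last collision $\omega'$ before $\omega$, shows the conditional density $\beta_s^{(\lambda)}(s')$ of $S_\omega$ on $\{H'\leq l\}$ is $\leq ce^{-\lambda^{-2}/16}$ uniformly in $s$ with $H(s)>\frac12\lambda^{-2}$, and then bounds $\int_{H'\leq l}ds'\,U^{(\lambda)}(s',f)\leq c\lambda^{-1/2}\|f\|_\infty$ directly via Lemma~\ref{LemPick}(2), avoiding any operator-norm estimate for $U^{(\lambda)}$.
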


\begin{proof}
For $s\in \Sigma$ with $H(s)\geq \frac{1}{2}\lambda^{-2}$, we can write 
 $U^{(\lambda)}\big(s,f\big)$ as
\begin{align}
U^{(\lambda)}\big(s,f\big)&=\mathbb{E}_{s}^{(\lambda)}\Big[\int_{0}^{\infty}dr e^{-\int_{0}^{r}dv\,h(S_{v})  } \,f(S_{r})  \Big]\nonumber  \\ \label{Jazz}  & 
\leq \|f\|_{\infty}\mathbb{E}^{(\lambda)}_{s}[ \omega   ] + \mathbb{E}_{s}^{(\lambda)}\Big[U^{(\lambda)}\big(S_{\omega},f\big)\Big]  \\ \label{Jizz}
& \leq C\lambda^{-1}\|f\|_{\infty}+ \sup_{l<H'\leq \frac{1}{2}\lambda^{-2}} U^{(\lambda)}\big(s',f\big)+\mathbb{E}_{s}^{(\lambda)}\Big[U^{(\lambda)}\big(S_{\omega},f\big)\chi\big(H(S_{\omega})\leq l  \big)\Big],
\end{align} 
 where $\omega$ is the hitting time that $H(S_{r})$ jumps below $\frac{1}{2}\lambda^{-2}$.  For the second inequality, we have used that $\mathbb{E}^{(\lambda)}_{s}[ \omega   ]$ is bounded by a constant multiple of $\lambda^{-1}$ for all $\lambda<1$ and $s$ with $H(s)\geq \frac{1}{2}\lambda^{-2}$, which we show below.  

 Let the function $W:\Sigma\rightarrow [0,1]$ be defined as in Part (3) of Proposition~\ref{NewStuff}.  By Part (3) of Proposition~\ref{NewStuff}, 
there is a $c>0$ such that the process $W(S_{t})+c\lambda t$ is a supermartingale over the time interval $[0,\omega ]  $.   Thus, the optional stopping theorem gives the first inequality below.
\begin{align*}
 \mathbb{E}_{s}^{(\lambda)}[\omega] \leq \frac{1}{c\lambda} \mathbb{E}_{s}^{(\lambda)}[W(s)-W(S_{\omega})] \leq  \frac{1}{c\lambda}.   
\end{align*}
  For the second inequality, we have  used that $0\leq W\leq 1$.  Thus, $ \mathbb{E}_{s}^{(\lambda)}[\omega]$ is bounded by $C\frac{1}{\lambda}$ for $C=\frac{1}{c}$, and plugging this into~(\ref{Jazz}) gives the first inequality in the statement of the lemma.

Now we will show that the last term on the right side of~(\ref{Jizz}) is $\|f\|_{\infty}\mathit{O}(e^{-\lambda^{-1}})$.   Let $\omega'$ be the collision time which precedes $\omega$, and $\beta_{s}^{(\lambda)}(s')$ be the conditional probability density for $s'=S_{\omega}$ when given the value $s=S_{\omega'}$.  By the strong Markov property, 
\begin{align}\label{FratParty}
\mathbb{E}_{s}^{(\lambda)}\Big[U^{(\lambda)}\big(S_{\omega},f\big)\chi\big(H(S_{\omega})\leq l  \big)\Big]  & =  \mathbb{E}_{s}^{(\lambda)}\Big[ \int_{H'\leq l}ds'\,  \beta_{S_{\omega'}}^{(\lambda)}(s') U^{(\lambda)}\big(s',f\big)   \Big] \nonumber  \\ & \leq  \Big(\sup_{ \substack{ H>\frac{1}{2}\lambda^{-2}   \\ H'\leq l }  }  \beta_{s}^{(\lambda)}(s')\Big) \int_{H'\leq l}ds' \,U^{(\lambda)}\big(s',f\big) \nonumber \\ &  \leq c\lambda^{-\frac{1}{2}}\|f\|_{\infty}\sup_{\substack{H>\frac{1}{2}\lambda^{-2}  \\ H'\leq l} }  \beta_{s}^{(\lambda)}(s') ,
\end{align}
where $H=H(s)$, $H'=H(s')$. The second inequality is by Part (2) of Lemma~\ref{LemPick}
and that $\int_{\Sigma}ds\,e^{-\lambda\, H(s) }\propto \lambda^{-\frac{1}{2}}$ for $\lambda\ll 1$.  To make use of~(\ref{FratParty}), we must bound the values of $\beta_{s}^{(\lambda)}(s')$.    Let $\big(\mathbf{x}_{t}(s), \mathbf{p}_{t}(s)\big)$ be the phase space point at time $t>0$ when evolving according to the Hamiltonian evolution from the point    $s\in \Sigma $.    The density $\beta_{s}^{(\lambda)}(s')  $ can be written
\begin{align}\label{Jot}
\beta_{s}^{(\lambda)}(s') = \kappa_{s}^{(\lambda)}(x') \, \frac{\mathcal{J}_{\lambda}(p   , p' )\, \chi\big( H(s')\leq \frac{1}{2}\lambda^{-2} \big)  }{\int_{H(x',p'')\leq  \frac{1}{2}\lambda^{-2}}dp''\,   \mathcal{J}_{\lambda}(p   , p'' ) }
 , 
\end{align}
where $\kappa_{s}^{(\lambda)}$ is the probability measure on $\mathbb{T}$ given by
\begin{align}\label{Tuhut}
\kappa_{s}^{(\lambda)}(x')=  \frac{ \int_{0}^{\infty}dt\,e^{-\int_{0}^{t}dr\,\mathcal{E}_{\lambda}(\mathbf{p}_{r}(s)  )    } \int_{H(\mathbf{x}_{t}(s),p'')\leq \frac{1}{2}\lambda^{-2} }dp'' \mathcal{J}_{\lambda}\big( \mathbf{p}_{t}(s), p'' \big)\, \delta(\mathbf{x}_{t}(s)-x')  }{     \int_{0}^{\infty}dt\,e^{-\int_{0}^{t}dr\,\mathcal{E}_{\lambda}(\mathbf{p}_{r}(s))}  \int_{H(\mathbf{x}_{t}(s),p'')\leq \frac{1}{2}\lambda^{-2} }dp'' \mathcal{J}_{\lambda}\big( \mathbf{p}_{t}(s), p'' \big)    }.  
\end{align}
However, there is a $c>0$ such that for all $\lambda<1$,  
$$ (\text{i}).\hspace{.2cm} \sup_{\substack{ H>\frac{1}{2}\lambda^{-2}\\x'\in \mathbb{T}    }   }   \kappa_{s}^{(\lambda)}(x')< c, \quad \quad  (\text{ii}).\hspace{.2cm}  \sup_{\substack{ H>\frac{1}{2}\lambda^{-2},   \\ H'\leq l  }}\frac{\mathcal{J}_{\lambda}(p   , p' )  }{\int_{H(x',p'')\leq  \frac{1}{2}\lambda^{-2}}dp''\,   \mathcal{J}_{\lambda}(p   , p'' ) }\leq  c e^{-\frac{1}{16}\lambda^{-2}}.         $$
Plugging~(\ref{Jot}) into~(\ref{FratParty}) and using  (i) and (ii) gives the result.  Statement (ii) follows from the Gaussian decay in the rates $\mathcal{J}_{\lambda}(p   , p' )$ and that $V(x)$ is bounded.   For statement (i), we actually have that $\sup_{x'}\big|\kappa_{s}^{(\lambda)}(x')-1|=\mathit{O}(\lambda)$ for $H(s)>\lambda^{-1}$ and  $\lambda\ll 1$.  We will sketch why.  Notice that by the conservation of energy $H(s)=H\big(\mathbf{x}_{t}(s), \mathbf{p}_{t}(s)\big)$.  For $|p|\geq \lambda^{-1}>\sqrt{2\sup_{x}V(x)}$, then 
$\mathbf{p}_{t}(s)$ and $p$ have the same sign and  
$$\big|  \mathbf{p}_{t}(s)-p  \big|= \big| \sqrt{ p^{2}+2V(x)-2V(\mathbf{x}_{t}(s) ) }  -|p|  \big| \leq   \frac{2\sup_{x}V(x)}{ |p|}.  $$
This implies that there is very little deviation of the values  $  \mathbf{p}_{t}(s)$ from $p$ when $|p|>\lambda^{-1}$.  Consequently, $\mathbf{x}_{t}(s)$ revolves around the torus with nearly uniform speed $|p|$, and the terms $\mathcal{E}_{\lambda}(\mathbf{p}_{r}(s)  )$  and $\mathcal{J}_{\lambda}\big( \mathbf{p}_{t}(s), p'' \big)$ in the expression~(\ref{Tuhut}) are approximately equal to $\mathcal{E}_{\lambda}(p )$  and $\mathcal{J}_{\lambda}\big( p, p'' \big)$, respectively.  Moreover, the starting location $x\in \mathbb{T}$ on the torus only makes a difference in $\kappa_{s}^{(\lambda)}(x')$ on the order of $\mathit{O}(\lambda)$, since the decay factor in~(\ref{Tuhut})  satisfies  $e^{-\int_{0}^{t}dr\,\mathcal{E}_{\lambda}(\mathbf{p}_{r}(s)  )    }=1+\mathit{O}(\lambda)$ for times $t$ up to the first revolution around the torus.  This follows because the speed is $\approx |p|$ and     
$ \frac{\mathcal{E}_{\lambda}(p) }{|p|} =\mathit{O}(\lambda  ) $  for $|p|>\lambda^{-1}$ by  Part (1) of Proposition~\ref{NewStuff}.

% \nonumber  \\
%& \leq  \big(1+\textup{min}(|p|^{-1},\lambda^{-1})\big) \, \frac{ \big(H(s)-V(x')\big)^{-\frac{1}{2} }   }{ \int_{\mathbb{T} }dx''\, \big(H(s)-V(x'')\big)^{-\frac{1}{2} }   }\, \frac{\mathcal{J}_{\lambda}(p   , p' ) \chi\big( H(s')\leq \frac{1}{2}\lambda^{-2} \big)  }{\int_{H(x',p'')\leq \frac{1}{2}\lambda^{-2}}dp''\,   \mathcal{J}_{\lambda}(p   , p'' ) }

\end{proof}

Putting Parts (1) and (3) of Lemma~\ref{ErrorEquation} together gives an inequality including the values of the function  $\widehat{U}^{(\lambda)}f$ and  weighted integrals of those values.  This suggests using a Gronwall-type recursive scheme to obtain bounds for  $\widehat{U}^{(\lambda)}f$.  However, it is useful to bring the results of Lemma~\ref{ErrorEquation} into a more tailored form that is amenable to recursion, and this is the purpose of the following lemma.

\begin{lemma}\label{CryBabyZero}
There exist $C, C',L>0$ such that for  all $\lambda<1$, $\gamma=(\rho,\epsilon)$ with $\rho\leq \lambda^{-1}$, and non-negative $f\in B(\Sigma)$ 
\begin{multline}\label{Juicer}
\widehat{U}^{(\lambda)}\big(\gamma,f\big)  \leq   C\|f\|_{\infty}+C\rho\sup_{\rho'>\lambda^{-1} }f(\gamma')+C\int_{\rho'\leq \lambda^{-1}    }d\gamma'\,\big(1+ \textup{min}(\rho, \rho')  \big)\,\widehat{f}(\gamma') \\+ C'\int_{\sqrt{ 2L}\leq \rho'\leq \lambda^{-1}    }d\gamma'\,\frac{1+ \textup{min}(\rho, \rho')  }{ (1+ \rho')^{3}   }\,\widehat{U}^{(\lambda)}\big(\gamma',f\big) ,
\end{multline}
 and $C'\int_{ \sqrt{2L}\leq \rho'\leq \lambda^{-1}    }d\gamma'\frac{1}{(1+\rho')^{2}}  \leq \frac{1}{2}$.

\end{lemma}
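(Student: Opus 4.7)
The plan is to start from the identity $\widehat{U}^{(\lambda)} f = \overline{U}^{(\lambda)}\big(\widehat{f} - \mathbf{E}_{\lambda}\big)$ of Lemma~\ref{ErrorEquation}(1), combined with the splitting $\mathbf{E}_{\lambda} = \mathbf{E}_{\lambda}' + (\mathbf{E}_{\lambda} - \mathbf{E}_{\lambda}')$ supplied by Lemma~\ref{ErrorEquation}(3). Positivity of $\overline{U}^{(\lambda)}$ yields
$$\widehat{U}^{(\lambda)}(\gamma, f)\ \leq\ \overline{U}^{(\lambda)}(\gamma, \widehat{f})\ +\ \overline{U}^{(\lambda)}\big(\gamma, |\mathbf{E}_{\lambda} - \mathbf{E}_{\lambda}'|\big)\ +\ \overline{U}^{(\lambda)}\big(\gamma, |\mathbf{E}_{\lambda}'|\big),$$
and I bound each summand separately. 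The first two contribute the non-recursive terms $C\|f\|_{\infty} + C\rho \sup_{\rho' > \lambda^{-1}} f(\gamma') + C\int_{\rho' \leq \lambda^{-1}}(1+\min(\rho,\rho'))\widehat{f}(\gamma')\,d\gamma'$, while the third produces the recursion integral featuring $\widehat{U}^{(\lambda)}(\gamma', f)$.

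For $\overline{U}^{(\lambda)}(\gamma, \widehat{f})$ I invoke Lemma~\ref{Freidlin}(2), preferably in the sharper form exposed by its proof (see~(\ref{Nolbach}) and (\ref{Hungary})), where the $B^{(\lambda)}$-integration is restricted to $\rho' \leq \lambda^{-1}$ and the $\rho' > \lambda^{-1}$ contribution is absorbed by the $A^{(\lambda)}$ supremum. Since $\rho \leq \lambda^{-1}$ implies $\lambda^{-1}\log(1+\lambda\rho)\leq \rho$, and Proposition~\ref{NewStuff}(2) provides $\widehat{\mathcal{E}}_{\lambda}\geq c$, using $\widehat{f}\leq \|f\|_{\infty}$ together with $\widehat{f}(\gamma')\leq \sup_{\rho'>\lambda^{-1}}f(\gamma')$ when $\rho'>\lambda^{-1}$, the $A^{(\lambda)}$-piece yields $C\|f\|_{\infty}+C\rho\sup_{\rho'>\lambda^{-1}}f$ and the $B^{(\lambda)}$-piece yields $C\int_{\rho'\leq\lambda^{-1}}(1+\min(\rho,\rho'))\widehat{f}\,d\gamma'$. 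For $\overline{U}^{(\lambda)}(\gamma, |\mathbf{E}_{\lambda} - \mathbf{E}_{\lambda}'|)$, the pointwise bound from Lemma~\ref{ErrorEquation}(3) is $Ce^{-\rho'}X(f)$ with $X(f) = \sup_{H'>\lambda^{-2}/2} f + \int_{H'\leq\lambda^{-2}/2}f\,ds'$. Applying $\overline{U}^{(\lambda)}$ produces $CX(f)\overline{U}^{(\lambda)}(\gamma,e^{-\rho'})$, where the factor $\overline{U}^{(\lambda)}(\gamma,e^{-\rho'})$ is uniformly bounded by a direct application of Lemma~\ref{Freidlin}(2) to the rapidly decaying $e^{-\rho'}$, and Remark~\ref{Remark}(1) converts the Lebesgue integral in $X(f)$ into $\int_{\rho'\leq\lambda^{-1}}\widehat{f}\,d\gamma'$, so both pieces of $X(f)$ are absorbed into the target's forcing terms.

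The third summand is the main obstacle. From Lemma~\ref{ErrorEquation}(3) and Fubini,
$$\overline{U}^{(\lambda)}(\gamma, |\mathbf{E}_{\lambda}'|)\ \leq\ C\int_{\Gamma_{V}} d\gamma'''\,\widehat{U}^{(\lambda)}(\gamma''',f)\,\mathcal{K}_{\lambda}(\gamma,\gamma'''),\quad \mathcal{K}_{\lambda}(\gamma,\gamma''')=\overline{U}^{(\lambda)}\!\big(\gamma,\,\widehat{\mathcal{J}}_{\lambda}(\cdot,\gamma''')\mathcal{M}_{\lambda}(\cdot,\gamma''')\big).$$
To extract the $(1+\rho''')^{-3}$ decay demanded by the target kernel, I split $\mathcal{M}_{\lambda}(\gamma'',\gamma''')$ by its defining cases. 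On the ``nearby jump'' set $\gamma'''\in A_{\gamma''}$, the constraint $\rho''\sim\rho'''$ makes the factor $\rho''^{-2}(1+(\rho''-\rho''')^2+\lambda^2\rho''^2)$ bounded, so $\mathcal{M}_{\lambda}\leq C\max(\tfrac{1}{1+\rho'''},\lambda)$, while on $\gamma'''\notin A_{\gamma''}$ this is the exact value of $\mathcal{M}_{\lambda}$. Combining with the Gaussian tails of $\widehat{\mathcal{J}}_{\lambda}(\cdot,\gamma''')$ in $|\rho''-\rho'''|$, the bound $\widehat{\mathcal{E}}_{\lambda}(\gamma''')\leq C(1+\lambda\rho''')$ from Proposition~\ref{NewStuff}(2), and Lemma~\ref{Freidlin}(2) applied to the sharply-localized integrand in $\gamma''$, I obtain $\mathcal{K}_{\lambda}(\gamma,\gamma''')\leq C\frac{1+\min(\rho,\rho''')}{(1+\rho''')^3}$ for $\rho'''\leq\lambda^{-1}$ together with a separate estimate for $\rho'''>\lambda^{-1}$.

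Finally, I partition the $\gamma'''$-integral by the size of $\rho'''$. For $\rho'''>\lambda^{-1}$, an analogue of Lemma~\ref{LemHorseShoe} for $\widehat{U}^{(\lambda)}$ (obtained either directly via the same supermartingale argument, or from Lemma~\ref{LemHorseShoe} combined with Lemma~\ref{ErrorEquation}(2) for $H(s)>l$) replaces $\widehat{U}^{(\lambda)}(\gamma''',f)$ by $C\lambda^{-1}\sup_{\rho'>\lambda^{-1}}f$ plus $\sup_{\rho'\leq\lambda^{-1}}\widehat{U}^{(\lambda)}(\gamma',f)$; the former is absorbed into $C\rho\sup_{\rho'>\lambda^{-1}}f$ and the latter into the recursion range. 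For $\rho'''<\sqrt{2L}$, Lemma~\ref{LemPick}(3) (applied with an auxiliary compactly supported $h'$) bounds $\widehat{U}^{(\lambda)}(\gamma''',f)$ by a multiple of $\|f\|_{\infty}$ plus an integral of $\widehat{f}$, absorbed into the first three target terms. The surviving range $\sqrt{2L}\leq\rho'''\leq\lambda^{-1}$ produces the recursion integral with the stated kernel. Once $C'$ is fixed from the accumulated constants, $L$ is chosen large enough that $\int_{\sqrt{2L}}^{\infty}(1+\rho)^{-2}d\rho$ is small enough to secure $C'\int_{\sqrt{2L}\leq\rho'\leq\lambda^{-1}}(1+\rho')^{-2}d\gamma'\leq 1/2$.
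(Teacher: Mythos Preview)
Your overall architecture is the same as the paper's (start from $\widehat{U}^{(\lambda)}f=\overline{U}^{(\lambda)}(\widehat{f}-\mathbf{E}_{\lambda})$, feed Lemma~\ref{Freidlin}(2), and insert the pointwise bound on $\mathbf{E}_{\lambda}'$), but the central estimate is not justified as written.

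\textbf{The decay of $\mathcal{K}_{\lambda}$ is lost.} On the nearby set $\gamma'''\in A_{\gamma''}$ you bound the factor $\rho''^{-2}(1+(\rho''-\rho''')^{2}+\lambda^{2}\rho''^{2})$ merely by a constant, so your working upper bound for $\mathcal{M}_{\lambda}$ is $C\max\!\big(\tfrac{1}{1+\rho'''},\lambda\big)$. With only that, the $B^{(\lambda)}$-part of Lemma~\ref{Freidlin}(2) gives
\[
\mathcal{K}_{\lambda}(\gamma,\gamma''')\ \lesssim\ \frac{1}{1+\rho'''}\int_{\rho''\leq\lambda^{-1}}(1+\min(\rho,\rho''))\,\widehat{T}_{\lambda}(\gamma'',\gamma''')\,d\gamma''\ \sim\ \frac{1+\min(\rho,\rho''')}{1+\rho'''},
\]
a single inverse power, not three. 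The two extra powers in the paper's bound~(III) come precisely from \emph{retaining} the $(\rho'')^{-2}$ and then using Gaussian localisation $\rho''\approx\rho'''$ to convert it to $(1+\rho''')^{-2}$; the ``far'' contribution $\gamma'''\notin A_{\gamma''}$ is handled separately by the Gaussian tail of $\widehat{T}_{\lambda}$. With only $(1+\rho''')^{-1}$ available, no choice of $L$ makes the contraction condition $C'\int_{\sqrt{2L}\leq\rho'\leq\lambda^{-1}}(1+\rho')^{-2}\,d\gamma'\leq\tfrac12$ compatible with your kernel, and the recursive scheme in Section~\ref{SecTheProof} breaks down.

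\textbf{Two smaller points.} First, the $\sup_{\rho'\leq\lambda^{-1}}\widehat{U}^{(\lambda)}(\gamma',f)$ produced by your $\rho'''>\lambda^{-1}$ step cannot be ``absorbed into the recursion range'': the target~(\ref{Juicer}) carries an \emph{integral} of $\widehat{U}^{(\lambda)}$, not a supremum. The paper first proves the intermediate inequality~(\ref{Protestant}) with an explicit term $C'\lambda\sup_{\rho'\leq\lambda^{-1}}\widehat{U}^{(\lambda)}(\gamma',f)$, then supremises both sides over $\rho\leq\lambda^{-1}$ and solves for that supremum before substituting back. You need the same closing step, and for it to succeed the coefficient in front of the sup must be $O(\lambda)$, which again requires the full strength of $\mathcal{M}_{\lambda}$ (cf.~(\ref{Hereafter}) and~(\ref{ComicBookMovie})). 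Second, Lemma~\ref{LemPick}(3) compares two modulated resolvents and does not give the pointwise control of $\widehat{U}^{(\lambda)}(\gamma''',f)$ you invoke for $\rho'''<\sqrt{2L}$; the paper instead uses the \emph{integral} bound of Lemma~\ref{LemPick}(2) as in~(\ref{Hash}).
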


\begin{proof}
  By Part (1) of Lemma~\ref{ErrorEquation}, we have the equality $\widehat{U}^{(\lambda)}f= \overline{U}^{(\lambda)}\big(\widehat{f}- \mathbf{E}_{\lambda}   \big)$.  Applying Part (2) of Lemma~\ref{Freidlin} gives the inequality
\begin{align}\label{CryBaby}
&\widehat{U}^{(\lambda)}\big(\gamma,f\big) =\overline{U}^{(\lambda)}\big(\gamma,\widehat{f}- \mathbf{E}_{\lambda}   \big)\leq \overline{U}^{(\lambda)}\big(\gamma,\widehat{f}\big)+ \overline{U}^{(\lambda)}\big(\gamma,|\mathbf{E}_{\lambda}'|\big)+\overline{U}^{(\lambda)}\big(\gamma,|\mathbf{E}_{\lambda}-\mathbf{E}_{\lambda}'|\big)\nonumber  \\  &  \leq c\Big(  \big\|\frac{\widehat{f}}{\widehat{\mathcal{E}}_{\lambda}}\big\|_{\infty}+\big\|\frac{\mathbf{E}_{\lambda}'}{\widehat{\mathcal{E}}_{\lambda}}  \big\|_{\infty}+\big\|\frac{\mathbf{E}_{\lambda}-\mathbf{E}_{\lambda}'}{\widehat{\mathcal{E}}_{\lambda}}  \big\|_{\infty} \Big) +c\rho\sup_{\rho'> \lambda^{-1}}\Big(\frac{\widehat{f}(\gamma')}{\widehat{\mathcal{E}}_{\lambda}(\gamma')   }+\frac{|\mathbf{E}_{\lambda}'(\gamma')|}{\widehat{\mathcal{E}}_{\lambda}(\gamma')   }+\frac{|\mathbf{E}_{\lambda}(\gamma')-\mathbf{E}_{\lambda}'(\gamma')|}{\widehat{\mathcal{E}}_{\lambda}(\gamma')   } \Big)\nonumber \\ & \hspace{1cm}+ c\int_{\rho'\leq \lambda^{-1}   }d\gamma'\,\big(1+ \textup{min}(\rho, \rho')\big)\Big( \frac{\widehat{f}(\gamma')}{\widehat{\mathcal{E}}_{\lambda}(\gamma') }+\frac{|\mathbf{E}_{\lambda}'(\gamma')|}{\widehat{\mathcal{E}}_{\lambda}(\gamma') }+\frac{|\mathbf{E}_{\lambda}(\gamma')-\mathbf{E}_{\lambda}'(\gamma')|}{\widehat{\mathcal{E}}_{\lambda}(\gamma') }\Big) 
\end{align}
for some $c>0$, where  $\mathbf{E}_{\lambda}'$ is defined as in the proof of Part (3) of Proposition~\ref{ErrorEquation}.  We will show that there is $C'>0$  such that
\begin{multline}\label{Protestant}
\widehat{U}^{(\lambda)}\big(\gamma,f\big) \leq  C'\|f\|_{\infty}+    C'\rho\sup_{H'> \frac{1}{2}\lambda^{-1}}f(s')+  C'\lambda \sup_{\rho'\leq \lambda^{-1}}\widehat{U}^{(\lambda)}\big(\gamma',f\big) \\ + C'\int_{\rho'\leq \lambda^{-1}   }d\gamma'\,\big(1+ \textup{min}(\rho, \rho')\big)\widehat{f}(\gamma') +C'\int_{ \rho'\leq \lambda^{-1}   }d\gamma'\,\frac{\big(1+ \textup{min}(\rho, \rho')\big)}{(1+\rho')^{3}}\widehat{U}^{(\lambda)}\big(\gamma',f\big) .  
\end{multline}
Given~(\ref{Protestant}), we can split the integral $\int_{ \rho'\leq \lambda^{-1}    }$ of the last term  into two parts $\int_{ \sqrt{2L}\leq \rho'\leq \lambda^{-1}    }$ and $\int_{\rho''\leq \sqrt{2L} }$  with  $L>0$ large enough so that 
 $$C'\int_{ \sqrt{2L}\leq \rho'\leq \lambda^{-1}    }d\gamma'\frac{1}{(1+\rho')^{2}}  \leq \frac{C'}{\sqrt{2L} }\leq \frac{1}{2}.$$
We can bound the remainder $\int_{\rho''\leq \sqrt{2L} }$ through the inequalities
\begin{align}\label{Hash}
\int_{\rho''\leq \sqrt{2L} }d\gamma''\,\frac{\big(1+ \textup{min}(\rho'', \rho)\big)}{(1+\rho'')^{3}}\,\widehat{U}^{(\lambda)}\big(\gamma'',f\big)  & \leq \int_{\rho''\leq \sqrt{2L} }d\gamma''\,\widehat{U}^{(\lambda)}\big(\gamma'',f\big)=\int_{H\leq L }ds\,U^{(\lambda)}\big(s,f\big)\nonumber
  \\ & \leq  a\| f\|_{\infty}+    a\int_{H\leq \frac{1}{2}\lambda^{-2} }ds\,f(s) \nonumber \\ & =  a\| f\|_{\infty}+    a\int_{\rho\leq \lambda^{-1} }d\gamma\,\widehat{f}(\gamma),
\end{align}
where $H=H(s)$.  The second inequality above holds for some $a>0$ by Part (2) of Lemma~\ref{LemPick}.  Thus with~(\ref{Protestant}) and~(\ref{Hash}), 
 there are $C',C'',L>0$  with $\frac{C'}{L}\leq \frac{1}{2}$ such that
\begin{multline}\label{PostProtestant}
\widehat{U}^{(\lambda)}\big(\gamma,f\big) \leq  C''\|f\|_{\infty}+    C''\rho\sup_{H'> \frac{1}{2}\lambda^{-1}}f(s')+  C''\lambda \sup_{\rho'\leq \lambda^{-1}}\widehat{U}^{(\lambda)}\big(\gamma',f\big) \\ + C''\int_{\rho'\leq \lambda^{-1}   }d\gamma'\,\big(1+ \textup{min}(\rho, \rho')\big)\widehat{f}(\gamma') +C'\int_{L\leq \rho'\leq \lambda^{-1}   }d\gamma'\,\frac{\big(1+ \textup{min}(\rho, \rho')\big)}{(1+\rho')^{3}}\widehat{U}^{(\lambda)}\big(\gamma',f\big) .  
\end{multline}
By supremizing both sides of~(\ref{PostProtestant}) over $\gamma=(\rho,\epsilon)$ with $\rho\leq \lambda^{-1}$, we obtain
$$  \sup_{\rho'\leq \lambda^{-1}}\widehat{U}^{(\lambda)}\big(\gamma',f\big) \leq \frac{C''}{\frac{1}{2}-C''\lambda}\Big(   \|\widetilde{f}\|_{\infty}+    \lambda^{-1}      \sup_{\rho'> \lambda^{-1}}f(\gamma') +  \int_{\rho'\leq \lambda^{-1}   }d\gamma'\,\big(1+ \textup{min}(\rho', \lambda^{-1}) \big)\widehat{f}(\gamma') \Big).  $$
Plugging this bound back into~(\ref{Protestant}) gives the inequality~(\ref{Juicer}) for small enough $\lambda$ ($\lambda$ bounded away from zero does not pose a problem).  

Now, we work to prove~(\ref{Protestant}) starting from~(\ref{CryBaby}).  Since $\widehat{\mathcal{E}}_{\lambda}$ is bounded away from zero by Part (2) of Proposition~\ref{NewStuff}, the expressions on the right side of~(\ref{CryBaby}) with $\widehat{f}$ and $\mathbf{E}_{\lambda}-\mathbf{E}_{\lambda}'$ do not pose a problem, since, in particular, we can bound  $\mathbf{E}_{\lambda}-\mathbf{E}_{\lambda}'$ with Part (3) of Lemma~\ref{ErrorEquation}. For the term $\mathbf{E}_{\lambda}'$, we have the following expressions to bound:
$$ (\text{i}). \hspace{.2cm} \big\|\frac{\mathbf{E}_{\lambda}'}{\widehat{\mathcal{E}}_{\lambda}} \big\|_{\infty}, \quad \quad (\text{ii}).  \hspace{.2cm}  \rho\sup_{\rho'> \lambda^{-1}}\frac{|\mathbf{E}_{\lambda}'(\gamma')|}{\widehat{\mathcal{E}}_{\lambda}(\gamma')   } , \quad \quad   (\text{iii}). \hspace{.2cm}  \int_{\rho'\leq \lambda^{-1}   }d\gamma'\,\big(1+ \textup{min}(\rho, \rho')\big)\frac{|\mathbf{E}_{\lambda}'(\gamma')|}{\widehat{\mathcal{E}}_{\lambda}(\gamma') }.       $$
We will discuss (ii) and (iii), since (i) is handled similarly.  In each case, we seek a bound using a linear combination of the terms on the right side of~(\ref{Protestant}).  
 
 By Part (3) of Lemma~\ref{ErrorEquation} and $\rho\leq \lambda^{-1}$,
\begin{align}\label{Hereafter}
\rho\sup_{\rho> \lambda^{-1}}\frac{|\mathbf{E}_{\lambda}'(\gamma')|}{\widehat{\mathcal{E}}_{\lambda}(\gamma')   } & \leq  C\lambda^{-1}\sup_{\rho\geq \lambda^{-1}}\int_{\Gamma_{V}}d\gamma'\, \widehat{T}_{\lambda}(\gamma,\gamma')\,\mathcal{M}_{\lambda}(\gamma,\gamma')\,\widehat{U}^{(\lambda)}\big(\gamma',f\big)\nonumber \\ & \leq C\lambda^{-1} \Big(\sup_{\rho\geq \lambda^{-1}, \gamma'\in \Gamma_{V}} \widehat{T}_{\lambda}(\gamma,\gamma')\,\mathcal{M}_{\lambda}(\gamma,\gamma')      \Big) \Big(   \sup_{\gamma'}\widehat{U}^{(\lambda)}\big(\gamma',f\big) \Big)\nonumber \\ &\leq 
C' \lambda^{2}\, \sup_{\gamma'}\widehat{U}^{(\lambda)}\big(\gamma',f\big) \nonumber \\
&\leq C''\lambda^{2}\|f\|_{\infty}+ C''\lambda\sup_{H'> \frac{1}{2}\lambda^{-2} } f(s') +C''\lambda^{2}\sup_{\rho'\leq \lambda^{-1}   }  \widehat{U}^{(\lambda)}\big(\gamma',f\big)
. 
\end{align}

The fourth inequality in~(\ref{Hereafter}) follows since
\begin{align}\label{Rabbel}
\sup_{\rho'\geq \lambda^{-1} }    \widehat{U}^{(\lambda)}\big(\gamma',f\big) &\leq  \sup_{H'> \frac{1}{2}\lambda^{-2} }    U^{(\lambda)}\big(s',f\big) \nonumber \\ &\leq ce^{-\lambda^{-1}}\|f\|_{\infty}+c\lambda^{-1}\sup_{H'> \frac{1}{2}\lambda^{-2} } f(s') +\sup_{l<H'\leq \frac{1}{2}\lambda^{-2}   }  U^{(\lambda)}\big(s',f\big)\nonumber  \\ &\leq c\|f\|_{\infty}+ c\lambda^{-1}\sup_{H'> \frac{1}{2}\lambda^{-2} } f(s') +c'\sup_{\rho'\leq \lambda^{-1}   }  \widehat{U}^{(\lambda)}\big(\gamma',f\big),
\end{align}
where $H'=H(s')$ and $\gamma'=(\rho',\epsilon')$.  For the first inequality above, the values $ \widehat{U}^{(\lambda)}$ are averages of the value for  $U^{(\lambda)} $.  The second inequality is for some $c>0$ by Lemma~\ref{LemHorseShoe}, and the third uses that $U^{(\lambda)}\big(s,f\big)$ is bounded by a multiple $c'$ of $\widehat{U}^{(\lambda)}\big(\gamma(s),f\big)$ for $s\in \Sigma$ with $l< H(s)$ by Part (2) of Lemma~\ref{ErrorEquation}.

Next, we bound the term (iii).     By Part (3) of Lemma~\ref{ErrorEquation},
\begin{align}\label{ComicBookMovie}
&\int_{\rho'\leq \lambda^{-1}   }  d\gamma'\,\big(1+ \textup{min}(\rho, \rho')\big)\,|\mathbf{E}_{\lambda}'(\gamma')| \nonumber  \\  & \leq C\int_{\rho'\leq \lambda^{-1}   }d\gamma'\,\big(1+ \textup{min}(\rho, \rho')\big)\,\int_{\Gamma_{V}}d\gamma''\,\widehat{T}_{\lambda}(\gamma',\gamma'')\,\mathcal{M}_{\lambda}(\gamma',\gamma'')\,\widehat{U}^{(\lambda)}\big(\gamma'',f\big) \nonumber\\ & \leq C\Big(\sup_{\rho''\geq \lambda^{-1} }    \widehat{U}^{(\lambda)}\big(\gamma'',f\big)   \Big)\int_{\rho'\leq \lambda^{-1}   }d\gamma'\,\big(1+ \textup{min}(\rho, \rho')\big)\int_{\rho''\geq \lambda^{-1} }d\gamma''\,\widehat{T}_{\lambda}(\gamma',\gamma'')\,\mathcal{M}_{\lambda}(\gamma',\gamma'')\nonumber \\ & \quad +C\int_{\rho''\leq \lambda^{-1} }d\gamma''\,\widehat{U}^{(\lambda)}\big(\gamma'',f\big) \,\int_{\rho'\leq \lambda^{-1}   }d\gamma' \big(1+ \textup{min}(\rho, \rho)\big)\,\widehat{T}_{\lambda}(\gamma',\gamma'')\,\mathcal{M}_{\lambda}(\gamma',\gamma'') \nonumber  \\ &  \leq  C'\lambda \sup_{\rho>\lambda^{-1}}\widehat{f}(\gamma)+C'\lambda^{2}\sup_{\rho\leq \lambda^{-1}}\widehat{U}^{(\lambda)}\big(\gamma,f\big) 
+C'\int_{\rho''\leq \lambda^{-1} }d\gamma''\,\frac{\big(1+\textup{min}( \rho'', \rho)\big)}{(1+\rho'')^{3}}\,\widehat{U}^{(\lambda)}\big(\gamma'',f\big).
\end{align}
For the  last inequality in~(\ref{ComicBookMovie}) follows from (I)-(III) below, where (I) and (II) are for the first term and (III) is for the second term.  
\begin{enumerate}[(I).]
\item There is  $c>0$ such that for all $\lambda<1$, 
$$ \sup_{\rho\geq \lambda^{-1} }    \widehat{U}^{(\lambda)}\big(\gamma,f\big)\leq  c\lambda^{-1} \sup_{\rho>\lambda^{-1}}f(\gamma)+c\sup_{\rho\leq \lambda^{-1}}\widehat{U}^{(\lambda)}\big(\gamma,f\big)+c\|f\|_{\infty}. $$
\item 
$$ \sup_{\lambda<1} \sup_{\rho,\rho''\leq \lambda^{-1}} \lambda^{-2}\int_{\rho'\leq \lambda^{-1}   }d\gamma'\,\big(1+ \textup{min}(\rho, \rho')\big)\int_{\rho''\geq \lambda^{-1} }d\gamma''\,\widehat{T}_{\lambda}(\gamma',\gamma'')\mathcal{M}_{\lambda}(\gamma',\gamma'')<\infty $$
\item 
$$ \sup_{\lambda<1}\sup_{\rho,\rho''\leq \lambda^{-1}} (1+\rho'')^{3}  \int_{\rho'\leq \lambda^{-1}   }d\gamma' \frac{\big(1+ \textup{min}(\rho, \rho')\big)}{ \big(1+ \textup{min}(\rho, \rho'')\big)    }\widehat{T}_{\lambda}(\gamma',\gamma'')\mathcal{M}_{\lambda}(\gamma',\gamma'') <\infty.  $$
\end{enumerate}
Statement (I) is from~(\ref{Rabbel}).   Statements (II) and (III) use the decay from $\mathcal{M}_{\lambda}(\gamma',\gamma'')$ and that the transition kernels $ \widehat{T}_{\lambda}(\gamma,\,\gamma')$ have uniformly bounded Gaussian tails in the quasi-momentum $|\mathbf{q}(\gamma)  -\mathbf{q}(\gamma')|$ for $\gamma=(\rho,\epsilon)$ with   $\rho\leq \lambda^{-1}$.  We do not go through the details of these inequalities.

\end{proof}

\section{Proof of Theorem~\ref{ThmMain}}\label{SecTheProof}

\vspace{.5cm}

\noindent [Proof of Theorem~\ref{ThmMain}]\vspace{.5cm}

 For $s\in \Sigma$ with $ H(s)>\frac{1}{2}\lambda^{-2}$, we have the inequality
\begin{align}\label{HighPee}
 U^{(\lambda)}\big(s,f\big) \leq  c\lambda^{-1}\sup_{H'>\frac{1}{2}\lambda^{-2}}f(s') +\sup_{H'\leq \frac{1}{2} \lambda^{-2}}U^{(\lambda)}\big(s',f\big) 
 \end{align}
for some $c>0$ by Lemma~\ref{LemHorseShoe}.  Thus, it is sufficient to prove the statement of the theorem for the domain $ H(s)\leq \frac{1}{2}\lambda^{-2}$.  

 For $s\in \Sigma$ with $l< H(s)\leq \frac{1}{2}\lambda^{-2}$, 
there is a $C>0$ such that for all $s$ and $\lambda<1$
$$U^{(\lambda)}\big(s,f\big)\leq C\widehat{U}^{(\lambda)}\big(\gamma(s),f\big)$$
 by Part (2) of Lemma~\ref{ErrorEquation}. Hence,  for the domain  $l< H(s)\leq \frac{1}{2}\lambda^{-2}$, it is sufficient to bound the values of  $\widehat{U}^{(\lambda)}\big(\gamma,f\big)$ for $\gamma=(\rho,\epsilon)$ with $\sqrt{2l}<\rho \leq\lambda^{-1}$.   By Lemma~\ref{CryBabyZero}, there are $c, c',L>0$ such that for all $\lambda<1$ and $\gamma=(\rho,\epsilon)$ with $\rho\leq \lambda^{-1}$, 
\begin{multline}\label{CryBabyUne}
\widehat{U}^{(\lambda)}\big(\gamma,f\big) \leq  c\|f\|_{\infty}+c\rho\sup_{H' >\frac{1}{2}\lambda^{-2} }f(s')+c\int_{\rho'\leq \lambda^{-1}    }d\gamma'\,\big(1+ \textup{min}(\rho', \rho)  \big)\,\widehat{f}(\gamma') \\+ c'\int_{ \sqrt{2L}\leq \rho'\leq \lambda^{-1}    }d\gamma'\,\widehat{U}^{(\lambda)}\big(\gamma',f\big)\,\frac{1+ \textup{min}(\rho, \rho')  }{ (1+ \rho')^{3}   },
\end{multline}
where  
\begin{align}\label{Italy}
 c'\int_{ \sqrt{2L}\leq \rho'\leq \lambda^{-1}    }d\gamma'\frac{1}{(1+\rho')^{2}}  \leq \frac{1}{2}.    \end{align}
It immediately follows that 
\begin{align}\label{Greece}
 \sup_{\rho\leq \lambda^{-1}}\widehat{U}^{(\lambda)} \big(\gamma, f\big) \leq  2c\|f\|_{\infty}+2c\lambda^{-1}\sup_{H'>\frac{1}{2}\lambda^{-2} }f(s')+2c\int_{\rho'\leq \lambda^{-1}}d\gamma'\,(1+ \rho')\widehat{f}(\gamma').
\end{align}
However, we need a more refined upper bound than~(\ref{Greece}) given by~(\ref{Gronwall}) below.
  By recursively applying the inequality~(\ref{CryBabyUne}) as in the proof of Gronwall's inequality, we obtain a series bound  
\begin{align}\label{Gronwall}
&\widehat{U}^{(\lambda)}\big(\gamma,f\big)\leq c\|f\|_{\infty}+c\rho\sup_{H'>\frac{1}{2}\lambda^{-2} }f(s')+ c\int_{\rho'\leq  \lambda^{-1}    }d\gamma'\,\big(1+ \textup{min}(\rho, \rho')  \big)\,\widehat{f}(\gamma')\nonumber \\
& + c\sum_{n=1}^{\infty}(c')^{n} \int_{\substack{\sqrt{2L}\leq \rho_{m}\leq \lambda^{-1}\\ 1\leq m\leq n   }  }d\gamma_{1}\cdots d\gamma_{n}\,\prod_{m=0}^{n-1} \frac{1+\textup{min}(\rho_{m+1}, \rho_{m})    }{ (1+\rho_{m+1}  )^{3} }\Big(\|f\|_{\infty}+\rho_{n}\sup_{H'>\frac{1}{2}\lambda^{-2} }f(s')    \Big) 
\nonumber \\ &+ c\sum_{n=1}^{\infty}(c')^{n}\int_{\rho'\leq \lambda^{-1}    }d\gamma'\,  \int_{\substack{\sqrt{2L}\leq \rho_{m}\leq \lambda^{-1}\\ 1\leq m\leq n   }  }d\gamma_{1}\cdots d\gamma_{n}\, \widehat{f}(\gamma')\big(1+\textup{min} (\rho',\rho_{n})  \big)\prod_{m=0}^{n-1} \frac{1+\textup{min}(\rho_{m+1}, \rho_{m} )   }{ (1+\rho_{m+1}  )^{3} }\nonumber  \\ &\leq  2c\|f\|_{\infty}+2c\rho\sup_{H'>\frac{1}{2}\lambda^{-2} }f(s')+ 2c\int_{\rho'\leq \lambda^{-1}    }d\gamma'\,\big(1+ \textup{min}(\rho', \rho)  \big)\,\widehat{f}(\gamma'),  
\end{align}
where we have denoted $\rho_{0}:=\rho$ in the argument of the  products. The second inequality uses that
\begin{align}\label{Tramp}
c'\int_{ \sqrt{2L}\leq \rho''\leq \lambda^{-1}    }d\gamma''\frac{\big(1+\textup{min}(\rho, \rho'') \big)\big(1+\textup{min}(\rho'', \rho' )\big)   }{(1+\rho'')^{3}}\leq \frac{1}{2}\big( 1+\textup{min}(\rho', \rho)\big),
\end{align}
which follows by~(\ref{Italy}).

  The series bound in~(\ref{Gronwall}) holds, since the error after the $n$th iteration of the inequality~(\ref{CryBabyUne}) is     
\begin{align*}
(c')^{n-1} \int_{\sqrt{2L}\leq \rho_{1},\dots, \rho_{n}\leq \lambda^{-1}}& d\gamma_{1}\cdots d\gamma_{n}\,\widehat{U}^{(\lambda)}\big(\gamma_{n},f\big)\frac{1+ \textup{min}(\rho_{1}, \rho ) }{ (1+\rho_{1}  )^{3} } \prod_{m=1}^{n-1} \frac{1+\textup{min}(\rho_{m+1}, \rho_{m})    }{ (1+\rho_{m+1}  )^{3} } \\  &\leq \frac{1}{2^{n-1}}\int_{\sqrt{2L}\leq \rho'\leq \lambda^{-1}}d\gamma'\, \widehat{U}^{(\lambda)}\big(\gamma',\,f\big)\frac{1+ \textup{min}(\rho' , \rho ) }{ (1+\rho'  )^{3} } \\ & \leq \frac{1}{2^{n-1}\sqrt{2L}} \sup_{\rho\leq \lambda^{-1}} \widehat{U}^{(\lambda)}\big(\gamma,\,f\big)\\ & \leq   \frac{c}{2^{n-2}\sqrt{2L}} \Big( \|f\|_{\infty}+\lambda^{-1}\sup_{H'>\frac{1}{2}\lambda^{-2} }f(s') +\int_{\rho'\leq \lambda^{-1}}d\gamma'\,(1+ \rho')\widehat{f}(\gamma')  \Big),
\end{align*}
which goes to zero for large $n$.  The first inequality is from~(\ref{Tramp}), and the third is~(\ref{Greece}).   Inequality~(\ref{Gronwall}) implies that there is a $c''>0$ such that for all $H(x,p)\geq l$,
\begin{multline}\label{GronwallII}
\widehat{U}^{(\lambda)}\big(\gamma(x,p),\,f\big) \\ \leq   c''\|f\|_{\infty}+c''|p|\sup_{H'>\frac{1}{2}\lambda^{-2}}f(s')   +c'' \int_{H'\leq \frac{1}{2} \lambda^{-2} }dp'\,dx'\,\big(1+\textup{min}(|p'|, |p|)\big)f(x',p'),   
\end{multline}
 where we have used that  $ \big| |p|-\rho(x,p)\big|\leq 2^{\frac{1}{2}}\big(\sup_{x}V(x)\big)^{\frac{1}{2}}$ is bounded since $2^{\frac{1}{2}}H^{\frac{1}{2}}(s)=\rho(s)$. This proves our bound for the domain $l< H(s)\leq \lambda^{-1}$.
  
Next, we bound $U^{(\lambda)}\big(s,f\big)$ in the domain $H(s)\leq l$. 
Let $\mathcal{T}_{\lambda}=U_{h'}^{(\lambda)} $ for $h'=1_{\Sigma}$.  The operator $\mathcal{T}_{\lambda}:B(\Sigma)\rightarrow B(\Sigma)$ is  the transition kernel for a Markov chain $\sigma_{n}$ (i.e. the resolvent chain).  We have the following  identity which is closely related to Part (3) of Proposition~\ref{LifeOperator}:    
\begin{multline*}
U^{(\lambda)}\big(s,f\big)=  \sum_{n=1}^{\infty}\int_{\substack{ m\leq n-1\\ H(s_{m})\leq l}  } \int_{H(s_{n})> l  } \mathcal{T}_{\lambda}(s,ds_{1})\cdots \mathcal{T}_{\lambda}(s_{n-1},ds_{n}) \\ \cdot\Big(\sum_{m=1}^{n-1} \big(1-h(s_{1})   \big)\cdots    \big(1-h(s_{m-1})   \big) \,f(s_{m})  +\big(1-h(s_{1})   \big)\cdots    \big(1-h(s_{n-1})   \big) \,  U^{(\lambda)}\big(s_{n},f\big) \Big).
\end{multline*}
The integration variable $s_{n}$ corresponds to the first time that the chain $\sigma_{n}$  jumps out of the set $H(s)\leq l$.   The above gives the inequalities
\begin{multline*}
U^{(\lambda)}\big(s,f\big)<  \sum_{n=1}^{\infty}\int_{\substack{ m\leq n-1\\ H(s_{m})\leq l}  } \int_{H(s_{n})> l  } \mathcal{T}_{\lambda}(s,ds_{1})\cdots \mathcal{T}_{\lambda}(s_{n-1},ds_{n})   \Big( n\|f\|_{\infty} +U^{(\lambda)}\big(s_{n},f\big)\Big)\\ \leq   \frac{ \sup_{\lambda<1}\sup_{H(s)\leq l}   \int_{H(s')\leq l  }\mathcal{T}_{\lambda}(s,ds')    }{\Big(1-\sup_{\lambda<1}\sup_{H(s)\leq l}   \int_{H(s')\leq l  }\mathcal{T}_{\lambda}(s,ds')  \Big)^{2}   }\Big(\|f\|_{\infty} +\sup_{\lambda<1}\sup_{H(s)\leq l}\int_{H(s')> l  } \mathcal{T}_{\lambda}(s,ds')\,U^{(\lambda)}\big(s',f\big)     \Big).  
\end{multline*}
 The second equality uses Holder's inequality, $U^{(\lambda)}\big(s_{n},f\big)\leq n U^{(\lambda)}\big(s_{n},f\big)$, and a geometric sum formula.    The probability $\int_{H(s')\leq l  }\mathcal{T}_{\lambda}(s,ds') $ can be easily shown to be bounded away from zero for all $\lambda<1$ and $s\in \Sigma$ by  considering the event that single a single collision occurs over the time interval $[0,\tau_{1}]$ and the particle jumps to an energy $> l$. Moreover, the jump measures $ \mathcal{T}_{\lambda}(s,ds')$ have uniformly bounded exponential tails for  $H(s)\leq l$ and $\lambda<1$ , since the collision rates $\mathcal{J}_{\lambda}(p,p')$ have Gaussian tails.  Thus, we can apply our bound for the values of $U^{(\lambda)}\big(s,f\big)$ over the domain $H(s')>l$ to obtain our required bound.

\section*{Acknowledgments}
This work is supported by the European Research Council grant No. 227772.

\end{document}